\documentclass[12pt]{amsart}

\usepackage{cancel}
\usepackage{mathtools,dsfont}
\usepackage[english]{babel}
\usepackage[hmargin=0.8in,height=8.6in]{geometry}
\usepackage{amssymb,amsthm, times}
\usepackage{delarray,verbatim}
\usepackage{ifpdf}
\usepackage{cases}
\ifpdf
\usepackage[pdftex]{graphicx}
 \else
\usepackage[dvips]{graphicx}
 \fi

\usepackage{hyperref}
\usepackage{enumitem}
\usepackage{bm}

\usepackage{ifpdf}
\usepackage{color}
\definecolor{webgreen}{rgb}{0,.5,0}
\definecolor{webbrown}{rgb}{.8,0,0}
\definecolor{emphcolor}{rgb}{0.5,0.95,0.95}

\usepackage{hyperref}
\hypersetup{%
	colorlinks=true,
	linkcolor=webbrown,
	filecolor=webbrown,
	citecolor=webgreen,
	breaklinks=true}
\ifpdf \hypersetup{pdftex,
	pdfstartview=FitH, 
	bookmarksopen=true,
	bookmarksnumbered=true
} \else \hypersetup{dvips} \fi
\allowdisplaybreaks

\mathtoolsset{showonlyrefs}

\numberwithin{equation}{section}

\newtheorem{theorem}{Theorem}[section]
\newtheorem{proposition}[theorem]{Proposition}
\newtheorem{corollary}[theorem]{Corollary}
\newtheorem{remark}[theorem]{Remark}
\newtheorem{lemma}[theorem]{Lemma}

\newtheorem{assump}[theorem]{Assumption}
\newtheorem{definition}[theorem]{Definition}

\newcommand {\R}{\mathbb{R}}

\newcommand {\F}{\mathcal{F}}

\newcommand {\p}{\mathbb{P}}

\newcommand {\E}{\mathbb{E}}

\newcommand{\diff}{{\rm d}}

\newcommand{\e}{\mathbb{E}}

\newcommand{\Gen}{\mathcal{L}}
\newcommand{\cC}{\mathcal{C}}

\newcommand{\Ind}{\mathds{1}}
\newcommand{\ind}{\mathds{1}}

\newcommand{\cL}{\mathcal{L}}

\newcommand{\admissible}{\Pi}

\newcommand{\BRA}[1]{{{\left\{#1\right\}}}} 
\newcommand{\PAR}[1]{{{\left(#1\right)}}} 
\newcommand{\SBRA}[1]{{{\left[#1\right]}}} 
\renewcommand{\leq}{\leqslant}
\renewcommand{\geq}{\geqslant}

\begin{document}

\title[Optimization of capital injections and AC dividend payments]{Optimization of capital injections and absolutely continuous dividend payments in a diffusion model}

\author[H. Guérin]{H\'el\`ene Gu\'erin}
\address{Département de mathématiques, Université du Québec à Montréal (UQAM)}
\email{guerin.helene@uqam.ca}

\author[D. Mata]{Dante Mata}
\address{D\'epartement de math\'ematiques, Universit\'e du Qu\'ebec \`a Montr\'eal (UQAM)}
\email{mata\_lopez.dante@uqam.ca}

\author[J.-F. Renaud]{Jean-Fran\c{c}ois Renaud}
\address{D\'epartement de math\'ematiques, Universit\'e du Qu\'ebec \`a Montr\'eal (UQAM)}
\email{renaud.jf@uqam.ca}

\author[A. Roch]{Alexandre Roch}
\address{D\'epartement de finance, \'Ecole des sciences de la gestion, Universit\'e du Qu\'ebec \`a Montr\'eal (UQAM)}
\email{roch.alexandre\_f@uqam.ca}

\date{}

\thanks{Funding in support of this work was provided by a CRM-ISM Postdoctoral Fellowship, a FRQNT Postdoctoral Fellowship (369449) and Discovery Grants (RGPIN-2025-05758, RGPIN-2020-07239) from the Natural Sciences and Engineering Research Council of Canada (NSERC)}

\thanks{This work was supported by the Research Institute for Mathematical Sciences, an International Joint Usage/Research Center located in Kyoto University.}

\begin{abstract}
We investigate a joint optimization problem of dividend payments and capital injections for a surplus process driven by a general diffusion. Dividend payments are assumed to be absolutely continuous in time, with the dividend rate bounded by a nonnegative concave function of the current surplus; while capital injections are modelled by a general nondecreasing process. 
We first analyze an auxiliary bail-out problem in which capital injections are required to keep the surplus nonnegative at all times. Under a concavity assumption on the drift, we prove that the associated value function is concave and is a classical solution of the corresponding Hamilton--Jacobi--Bellman (HJB) equation. We further characterize an optimal policy as a refraction--reflection strategy: the surplus is 
reflected at zero by capital injections, while dividends are paid at the maximal admissible rate whenever the surplus exceeds an optimal threshold.
Our main contribution establishes that the general optimization problem exhibits a Løkka--Zervos-type dichotomy. More precisely, an optimal policy is either a dividend refraction strategy without injections, in which ruin occurs, or a refraction--reflection dividend--injection strategy. This optimal injection decision is characterized through a simple comparison of the auxiliary value functions at the origin.
		\\
		\noindent \small{\noindent  AMS 2020 Subject Classifications: 93E20, 60J60, 60J70.\\
			\textbf{Keywords:} Stochastic control, capital injections, dividend payments, absolutely continuous strategies, reflection strategies, refraction strategies, diffusion model.}
	\end{abstract}
	
	\maketitle


\section{Introduction}
We study an optimal dividend and capital injection problem for a surplus process driven by a general diffusion, in the spirit of, e.g., \cite{sethi-taksar_2002} and \cite{zhu-yang_2016}. We assume that the cumulative dividend process is absolutely continuous in time and that the dividend rate is subject to a surplus-dependent upper bound. More precisely, following \cite{guerin-et-al_2024}, the admissible dividend rate is bounded by a nonnegative, nondecreasing, and concave function of the current surplus level. Capital injections, on the other hand, are modelled, as in \cite{RRS2026, sethi-taksar_2002, zhu-yang_2016}, by a nondecreasing, nonnegative, càdlàg adapted process.

A well-known drawback of the classical dividend problem is that ruin occurs almost surely for a wide class of models when no capital injections are allowed. Yet insolvency may result from adverse shocks even when the firm remains fundamentally profitable, so that shareholders may rationally choose to recapitalize the firm by injecting additional funds whenever the expected benefits outweigh the associated costs.
This observation motivates the introduction of capital injections and has generated an extensive literature. In diffusion models, optimal injection policies often take the form of a lower reflection: the surplus is reflected (in the Skorohod sense) at a lower boundary whenever it attempts to cross it from above. For instance, Guo and Pham \cite{guo-pham_2005} establish the optimality of a double-barrier reflection strategy for a model driven by geometric Brownian motion, while Kulenko and Schmidli \cite{KS2008} prove an analogous result in the Cramér–Lundberg setting. More recently, Gajek and Kucinski \cite{gajek-kucinski_2017} considered a spectrally negative Lévy process. These works, however, typically treat dividend payments as singular controls, whereas the present paper focuses on absolutely continuous dividend strategies under a level-dependent bound, extending the results of Renaud et al. \cite{RRS2026} to a general diffusion framework.

The above considerations highlight a fundamental trade-off between payouts and injections. Such a trade-off is not specific to financial settings and also arises in other areas of applied probability. For instance, in the natural resource management literature, excessive harvesting may drive a population to extinction, while costly interventions such as seeding can be used to sustain it (see, e.g., \cite{Alvarez1998, Lande1994, Lande1995}). In that context, the objective is to determine optimal harvesting–seeding strategies, which balance immediate gains against the long-term viability of the system. Related stochastic control problems have been studied in diffusion models in this context (see, e.g., \cite{Hening2019, Hening2020}), further highlighting the generality of this class of problems. However, models from the resource management literature typically impose more model-specific assumptions on the diffusion coefficients.

Unlike a large strand of the dividend literature, which imposes capital injections to prevent ruin and reflect the surplus at zero (the so-called bail-out problem; see, e.g., \cite{avram-et-al_2007,FERRARI2019,FS2019,morenofranco2024,NPY2020}), we treat recapitalization as an endogenous decision. We therefore formulate a joint dividend and capital-injection control problem up to ruin, in which the controller may optimally decide to collect dividends, inject capital at any given time and any chosen level of surplus, or allow the surplus to eventually reach ruin. The class of admissible strategies is broad and includes, in particular, both singular and absolutely continuous capital injections. Despite this flexibility, one of our main contributions is to show that the optimization problem reduces to a simple dichotomy. More precisely, the value of the general problem coincides with that of one of the following two auxiliary problems:
\begin{itemize}
\item A dividend problem in which injections are not possible, so that dividends are paid until the surplus hits zero;
\item A bail-out dividend problem in which capital injections are mandatory in order to keep the surplus process nonnegative.
\end{itemize}

This kind of dichotomic result was first obtained by Løkka and Zervos in \cite{lokka-zervos08} and later extended to several more general settings. For instance, Avanzi et al. \cite{AvanziShenWong2011} consider a Brownian motion with drift and negative jumps, emphasizing the role of recapitalization in preventing inefficient ruin, while Lindensj\"o and Lindskog \cite{lindensjo2020} study singular dividend controls for a general diffusion. 
Similarly, in the case of Cramér--Lundberg models, Avram et al. \cite{AvramGoreacRenaud2019} obtain a Løkka--Zervos-type alternative for exponential jumps, while Avram et al. \cite{AvramGoreacLiWu2021} study a variant in which capital injections are allowed only for small claims, whereas larger claims cause ruin.
More recently, Roch \cite{Roch2025} investigates a capital injection and dividend problem for a general Lévy process with constant coefficients.

A key feature of the present work is that dividends are restricted to be absolutely continuous and subject to a state-dependent concave upper bound. This setting is directly connected to recent de Finetti-type problems in the Brownian framework involving bounded dividend rates \cite{locas-renaud_2024} and linearly bounded rates \cite{RS2021}. An optimization dichotomy in our setting was also established for a  Brownian model in \cite{RRS2026}. We extend these results to a general diffusion framework, where the loss of spatial homogeneity gives rise to substantial additional analytical difficulties.

The above dichotomy reduces the broad class of admissible strategies to two auxiliary problems, depending on whether capital injections are excluded or forced. In our setting, the former was studied in \cite{guerin-et-al_2024}, where an optimal policy is characterized by a refraction strategy. Therefore, the first objective of the present work is to analyze the value function of the forced-injection problem. The problem is more involved in our setting than in the constant-coefficient framework, where concavity and the optimality of bailing out the surplus process only at the origin are established by direct probabilistic arguments. In that simplified setting, Renaud et al. \cite{RRS2026} in fact prove the optimality of injections only at the origin on a pathwise basis. Such an argument greatly simplifies the treatment of capital injections and the construction of a candidate value function. However, it is no longer available in the more general diffusion setting considered here. We therefore establish these two key properties through an analytical study of the HJB equation. In particular, in Section~\ref{sec:optimality-reflection-refraction}, we show that an optimal control consists of (i) a Skorokhod-type reflection at zero, corresponding to instantaneous injections that keep the surplus nonnegative,
 and (ii) a refraction at an optimal threshold $b_c$, above which dividends are paid at the maximal admissible rate. This characterization follows from Corollary~\ref{cor:a-star-zero} and Proposition~\ref{prop:explicit-Vc}.
The presence of capital injections makes the proof of optimality substantially more involved than in \cite{guerin-et-al_2024}. In that work, an explicit
value function allows a direct guess-and-verify argument. Here, by contrast, we analyze the associated HJB equation using viscosity methods in order to derive the key analytic properties of the value function. Concavity of the value function is obtained by proving that its concave envelope is a subsolution of the HJB equation  and invoking a comparison principle. This yields the optimality of capital injections only at the origin as a final consequence. Our approach, based on viscosity solution techniques for HJB equations, is in line with recent developments in the stochastic control literature; see, e.g., \cite{AAM2022,Guan-Xu_2024,AAM2026}.
We further prove that the value function is a classical $\mathcal{C}^2$ solution of the HJB equation, characterize the optimal dividend barrier through a smooth-fit argument, and derive a representation in terms of the fundamental solutions of the associated differential equation.

Finally, in Section~\ref{sec:Optimal:2}, a comparison principle for the value functions of the two auxiliary problems in the dichotomy implies that our optimal injection decision is determined by a boundary test comparing the value functions (or their derivatives) at zero, indicating whether injections are optimal. Our main result, Theorem~\ref{thm.dichotomy}, shows that an optimal strategy reduces to the following alternative for the controller:
\begin{enumerate}
\item Pay dividends and never inject capital according to a pure refraction strategy, as studied in \cite{guerin-et-al_2024}, so that ruin occurs at the first passage time below zero;
\item Pay dividends and inject capital at the origin according to a refraction-reflection strategy, meaning that ruin never occurs.
\end{enumerate}
We now introduce the model more precisely.

\subsection{Model and problem formulation}\label{sect:model-problem}

Let $(\Omega, \F, (\F_t)_{t\geq 0}, \p)$ be a filtered probability space satisfying the \textit{usual conditions}. We assume the uncontrolled surplus process $Y=(Y_t)_{t\geq 0}$ evolves according to the following stochastic differential equation: for $t \geq 0$,
\begin{equation}\label{eq:uncontrolled-SDE}
\diff Y_t = \mu(Y_t) \diff t + \sigma (Y_t) \diff W_t, 
\end{equation}
where $W=(W_t)_{t\geq 0}$ is a standard Brownian motion (adapted to the filtration) and where the coefficients are assumed to satisfy the following conditions.

Following \cite{sethi-taksar_2002}, we assume that the drift function is concave, an assumption that is carefully motivated and economically interpreted in this setting. This structural condition enables a clear characterization of an optimal policy and yields a complete solution to our control problem.  

\begin{assump}\label{assumption:SDE}
Fix a time-preference (discounting) parameter $q>0$.
     \begin{enumerate}
         \item The drift coefficient $\mu \colon [0,\infty) \to \R$ is a differentiable concave function such that $\mu'(x) < q$ for all $x \in [0,\infty)$.
         \item The volatility coefficient $\sigma \colon [0,\infty) \to [0,\infty)$ is a Lipschitz function such that $\sigma (x) > \underline{\sigma}$ for all $x \in [0,\infty)$, where $\underline{\sigma}>0$ is a constant.  
     \end{enumerate}
 \end{assump}

In the above, and for the rest of the paper, for a \textit{sufficiently smooth} function $f \colon [0,\infty) \to \R$, we write $f^\prime(0):=f^\prime(0+)$. 

Note that conditions (1) and (2) are sufficient to guarantee the existence of a unique strong solution to the stochastic differential equation~\eqref{eq:uncontrolled-SDE}; see, e.g., Theorem IV.3.1 in \cite{ikeda-watanabe_1989}. The growth restriction assumption on $\mu$, within condition (1), has appeared before (and been discussed) in the literature; see, e.g., \cite{paulsen_2007,ekstrom-lindensjo_2023}. In our setting, this growth condition guarantees  the existence of non-trivial solutions and the finiteness of important mathematical quantities. In addition, several key steps in our proofs rely on the concavity of $\mu$, also within condition (1) of Assumption~\ref{assumption:SDE}. 

In our optimization problem, a dividend-capital injection strategy is a pair $\pi = (\ell, K)$ of adapted and nonnegative stochastic processes $\ell=(\ell_t)_{t\geq 0}$ and $K=(K_t)_{t\geq 0}$ where $K$ has nondecreasing and \textit{càdlàg} trajectories and is such that $K_{0-}=0$. For a fixed time $t \geq 0$, $K_t$ represents the cumulative amount of capital injections made up to (and including) time $t$, while $L_t := \int_0^t \ell_s \diff s$ represents the cumulative amount of dividends made up to time $t$. Thus, the process $\ell$ represents the rate at which dividends are paid out. Note that in our setup, the stochastic process $L=(L_t)_{t\geq 0}$ is absolutely continuous (with respect to the Lebesgue measure) with nondecreasing trajectories and it is such that $L_0=0$.

For a strategy $\pi = (\ell, K)$, the dynamic of its associated controlled surplus process $X^\pi$ is given by: for $t\geq 0$,
\begin{equation}\label{SDE_K}
\diff X^\pi_t = (\mu(X^\pi_t) - \ell_t)\diff t + \sigma(X^\pi_t)\diff W_t + \diff K_t . 
\end{equation}

To define our set of admissible strategies, let us fix a differentiable function $F \colon [0,\infty) \to [0,\infty)$ that is concave, nondecreasing and such that $F(0) \geq 0$. A strategy $\pi = (\ell, K)$ is \textit{admissible} if:
\begin{enumerate}
    \item $0 \leq \ell_t \leq F(X^\pi_t)$ for all $t \geq 0$;
    \item\label{cond-admissibility-on-K} $\E_x\SBRA{\int_{[0,\infty)} e^{-qt} \diff K_t} < \infty$.
\end{enumerate}
Let $\admissible$ be the set of all admissible strategies. Note that the assumptions on $F$ have also appeared before in the literature; see \cite{locas-renaud_2024,RRS2026, guerin-et-al_2024}. 

Let us introduce two subsets of admissible strategies:
\[
\Pi_d:=\BRA{(\ell,K)\in\Pi \colon K\equiv 0}
\]
is the subset of strategies in which injections are excluded, and
\[
\Pi_c:=\BRA{\pi=(\ell,K)\in\Pi \colon X^{\pi}_t \geq 0 \text{ for all }t \geq 0}
\]
is the subset of strategies that include enough capital injections to prevent bankruptcy at all times.

Let us fix $\beta > 1$, the cost per unit of capital injected. The performance function associated to an admissible strategy $\pi=(\ell,K) \in \admissible$ is defined as 
\begin{equation}\label{eq:performance-function}
J(x;\pi) := \E_x\SBRA{ \int_0^{\tau_0^\pi} e^{-qt} \ell_t \diff t - \beta \int_{[0,\tau_0^\pi)} e^{-qt} \diff K_t }, \quad x \geq 0 ,
\end{equation}
in which $X^\pi_{0-}=x$ and $\tau_0^\pi=\inf\BRA{t\geq 0:X_t^\pi< 0}$. The ruin time $\tau_0^\pi$ is set to $\infty$ if $X^\pi$ stays nonnegative. The objectives of such an optimization problem is to compute its value function, which is given here by
\begin{equation}\label{value:dichotomy}
V(x) := \sup_{\pi \in \Pi} J(x;\pi), \quad x \geq 0 ,
\end{equation}
and, if one exists, to obtain an optimal admissible strategy.

\begin{remark}\label{remark:pi-c:pi-d}
As $\Pi$ includes strategies that belong neither to $\Pi_c$ nor to $\Pi_d$, the set $\Pi_d$ is strictly included in $\Pi\setminus\Pi_c$. 
Indeed, the set of admissible strategies $\Pi \setminus \Pi_c$ contains strategies under which capital injections may occur before ruin is eventually declared. Moreover, note that $\Pi_c$ contains strategies avoiding ruin, which can be done in several different ways. For example, one could inject whenever the process reaches a pre-determined positive level (that is deemed too close to zero), either through lump-sum payments or by reflecting the process at this positive level.
Also, if $\pi \in \Pi_c$, then $\tau_0^\pi=\infty$ almost surely, while if $\pi \in \Pi \setminus \Pi_c$, then the event $\BRA{\tau_0^\pi<\infty}$ has a positive probability.  
\end{remark}

We also define the value functions corresponding to the subsets of admissible strategies $\Pi_c$ and $\Pi_d$: for $x \geq 0$,
\[
V_d(x) := \sup_{\pi \in \Pi_d} J(x;\pi) \quad \text{and} \quad V_c(x) := \sup_{\pi \in \Pi_c} J(x;\pi) .
\]
Of course, by definition, we have
\[
V(x) \geq \max\BRA{V_d(x),V_c(x)}, \quad x \geq 0.
\]
Note that the value function $V_d$ has been computed and studied in \cite{guerin-et-al_2024} under the assumption that $\mu(0)>0$; we will come back to this function (and this extra assumption) later.

\subsection{Organization of the paper}

The rest of this paper is organized as follows. In Section~\ref{Sec:HJB}, we study $V_c$ from an analytical point of view, culminating with it being the unique classical solution to a free-boundary value problem. In Section~\ref{Sec:Optimal:1}, we identify an optimal strategy in $\Pi_c$, that is a strategy yielding the optimal value $V_c$. Finally, in Section~\ref{sec:Optimal:2}, we solve our main joint optimization problem for dividend payments and capital injections, using a comparison procedure, and we obtain the announced dichotomy.

\section{Analytical properties of the value function $V_c$}\label{Sec:HJB}

In this section, we solve the control problem corresponding to strategies in $\Pi_c$, i.e., strategies for which ruin must be avoided. More precisely, we will prove that the value function $V_c$ is concave and twice continuously differentiable, and that it is the performance function of a specific \textit{refraction-reflection strategy}. In particular, we will prove that an optimal capital injection strategy is a Skorohod reflection at level 0. Our proof is analytical rather than probabilistic (see \cite{RRS2026}) as we are working within a general diffusion framework.

First, note that the value function $V_c$ is finite. Indeed, we will prove in Section~\ref{Sec:Optimal:1} that $V_c(x)>-\infty$ for every $x \geq 0$ by exhibiting a strategy whose performance is greater than $-\infty$ (see Remark~\ref{rk:V_c finite}), independently  of the results in the rest of that section. Moreover, the next proposition readily implies that $V_c(x)<\infty$.
\begin{proposition}\label{prop:immediate-injection}
For all $x \geq 0$, we have
\[
V_c(x) \leq x+\frac{\mu(0)_+}{q} ,
\]
where $\mu(0)_+=\max(\mu(0),0)$.
Also, for all $x \geq 0$ and $h\geq 0$, we have
\begin{equation}\label{ineq.opt}
V_c(x+h) - \beta h \leq V_c(x) \leq V_c(x+h).
\end{equation}
In particular, $V_c$ is Lipschitz continuous.
\end{proposition}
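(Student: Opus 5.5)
We prove the three claims in order: the upper bound, then the two inequalities in \eqref{ineq.opt}, and Lipschitz continuity follows from the latter.

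\textbf{Upper bound.} Fix $\pi=(\ell,K)\in\Pi_c$, so that $\tau_0^\pi=\infty$ and $X^\pi\ge 0$. We apply It\^o's formula to $e^{-qt}X^\pi_t$ on $[0,t\wedge T_n]$, where $T_n$ localizes the stochastic integral $\int e^{-qs}\sigma(X^\pi_s)\,\diff W_s$. Taking expectations gives
\[
\E_x\SBRA{e^{-q(t\wedge T_n)}X^\pi_{t\wedge T_n}} = x + \E_x\SBRA{\int_0^{t\wedge T_n} e^{-qs}\PAR{\mu(X^\pi_s)-qX^\pi_s-\ell_s}\diff s + \int_{[0,t\wedge T_n]} e^{-qs}\diff K_s}.
\]
By concavity, $\mu(y)\le \mu(0)+\mu'(0)y$. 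Together with $\mu'(0)<q$ and $y\ge 0$, this gives $\mu(y)-qy\le \mu(0)\le\mu(0)_+$. The left-hand side is nonnegative, so
\[
\E_x\SBRA{\int_0^{t\wedge T_n} e^{-qs}\ell_s\diff s} \le x+\frac{\mu(0)_+}{q}+\E_x\SBRA{\int_{[0,\infty)} e^{-qs}\diff K_s}.
\]
We let $n\to\infty$ and $t\to\infty$ using monotone convergence; the last term is finite by admissibility. Subtracting $\beta\int e^{-qs}\,\diff K_s$ from both sides yields
\[
J(x;\pi)\le x+\mu(0)_+/q-(\beta-1)\E_x\SBRA{\int_{[0,\infty)} e^{-qs}\diff K_s}\le x+\mu(0)_+/q,
\]
since $\beta>1$. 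Taking the supremum over $\Pi_c$ gives the bound.

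\textbf{Left inequality of \eqref{ineq.opt}.} Starting from $x$, we inject $h$ immediately at time $0$ and then follow an arbitrary $\varepsilon$-optimal strategy $\tilde\pi=(\tilde\ell,\tilde K)\in\Pi_c$ for starting point $x+h$. Concretely, set $K_t=h+\tilde K_t$ and $\ell=\tilde\ell$, with $\tilde\pi$ driven by the same Brownian motion. The resulting controlled process coincides pathwise with $X^{\tilde\pi}$ started at $x+h$, by strong uniqueness of \eqref{SDE_K} given the controls. Some care is needed here because $\ell$ is a process rather than a feedback: we view the strategy as specified on the same probability space, so the concatenation is immediate. This strategy is admissible, stays nonnegative, and has performance $J(x+h;\tilde\pi)-\beta h$. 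Letting $\varepsilon\to0$ gives $V_c(x)\ge V_c(x+h)-\beta h$.

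\textbf{Right inequality of \eqref{ineq.opt}; main obstacle.} This is the step I expect to be the main obstacle: the admissibility constraint $\ell_t\le F(X_t)$ depends on the state, so shifting the initial point changes which rates are allowed. Given $\pi=(\ell,K)\in\Pi_c$ from $x$, I would use the same $\ell$ and $K$ starting from $x+h$. Since the controls are the same, the comparison theorem for one-dimensional SDEs with Lipschitz $\sigma$ and locally Lipschitz $\mu$ (a Yamada-type pathwise comparison) gives $X^{x+h}_t\ge X^x_t\ge 0$ for all $t$. The local Lipschitz property of $\mu$ comes from concavity and differentiability. Because $F$ is nondecreasing, $\ell_t\le F(X^x_t)\le F(X^{x+h}_t)$, so the strategy remains admissible from $x+h$ and lies in $\Pi_c$. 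Its performance equals $J(x;\pi)$, because ruin never occurs and $\ell$ and $K$ are unchanged. Hence $V_c(x+h)\ge V_c(x)$. The comparison step needs the common jumps of $K$ handled; I would do this by applying the comparison to the difference $X^{x+h}-X^x$, in which the $K$ terms cancel, using the Yamada local-time argument.

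\textbf{Lipschitz continuity.} Combining the two inequalities gives $0\le V_c(x+h)-V_c(x)\le\beta h$. Thus $V_c$ is Lipschitz with constant $\beta$.
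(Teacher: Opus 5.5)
Your proposal is correct and follows essentially the same route as the paper: It\^o's formula for $e^{-qt}X^\pi_t$ with $\mu(y)-qy\le\mu(0)$ (from concavity and $\mu'(0)<q$) and $\beta>1$ gives the upper bound, an immediate injection of size $h$ at time $0$ gives the left inequality of \eqref{ineq.opt}, and reusing the same controls from $x+h$ gives the right one. You also make explicit two points the paper leaves implicit: the localization of the stochastic integral, and the pathwise comparison $X^{x+h}\ge X^x\ge 0$ combined with the monotonicity of $F$, which is exactly what keeps the reused strategy admissible.
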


\begin{proof}
Take $\pi=(\ell,K)\in\Pi_c$. Recall that, for such a strategy, we have $X^\pi_t \geq 0$ for all $t \geq 0$. By Itô's Formula and concavity of $\mu$, we can write
\begin{align*}
&\E_x\SBRA{\int_0^t\ell_se^{-qs}\diff s-\beta\int_{[0,t]}e^{-qs}\diff K_s}\\
&=x-\E[e^{-qt}X_t^\pi]+\int_0^t\E\SBRA{e^{-qs}\PAR{\mu(X^\pi_s)-qX^\pi_s}}\diff s+(1-\beta)\E\SBRA{\int_{[0,t]} e^{-qs}\diff K_s}\\
&\leq x-\E[e^{-qt}X_t^\pi]+\frac{\mu(0)_+}{q}+(\mu'(0)-q)\int_0^t\E[e^{-qs}X_s^\pi]\diff s\\
&\leq x+\frac{\mu(0)_+}{q}.
\end{align*}
Note that we used the fact that $\beta>1$ and that $\mu'(0)<q$. Taking the limit when $t\to \infty$ gives the first result.

The first inequality in~\eqref{ineq.opt} simply follows from the fact that an immediate capital injection (at time $0$) is admissible but may not be optimal. The second inequality follows from the fact that any admissible strategy $\pi$ such that $X^\pi$ is kept nonnegative when started at $x$ will also keep $X^\pi$ nonnegative when started at $x+h$.
\end{proof}

It is easy to verify that, if $F$ is constant, for instance if $F(x)=\ell_\infty$ for all $x \geq 0$, then
\begin{equation}\label{ineq.constant-bound}
V_c(x) \leq \frac{\ell_\infty}{q} , \; \text{for all $x \geq 0$.}
\end{equation}

\subsection{Dynamic programming principle and HJB equation}

Next, we state the Dynamic Programming Principle (DPP) associated to the control problem under study. To do so, let us introduce the following notation: for a given a stopping time $\tau$, define
\[
\int_{[0,\tau]} e^{-qt} \diff K_t := \int_{[0,\infty)} e^{-qt} \diff K_t , \quad \text{on the event $\{\tau=\infty\}$.}
\]

\begin{proposition}[Dynamic Programming Principle]\label{prop:dynamic programming principle}
For all $x\geq 0$, we have
\begin{equation}\label{DPP}
V_c(x) = \sup_{\pi=(\ell,K) \in \Pi_c} \E_x\SBRA{ \int_0^\tau e^{-qt} \ell_t \diff t - \beta \int_{[0,\tau]} e^{-qt} \diff K_t + e^{-q\tau} V_c(X^\pi_\tau) \Ind_{\tau<\infty}} ,
\end{equation}
where $\tau$ is a stopping time, possibly depending on $\pi$.
\end{proposition}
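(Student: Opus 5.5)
We prove the two inequalities separately, writing $\tilde V(x)$ for the right-hand side of \eqref{DPP}. Throughout we use the continuity of $V_c$ from Proposition~\ref{prop:immediate-injection} and the strong Markov property of the controlled dynamics \eqref{SDE_K}, whose coefficients are Lipschitz-type (strong existence and uniqueness, as recalled after Assumption~\ref{assumption:SDE}).

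\textbf{The inequality $V_c\leq\tilde V$.} Fix $\pi=(\ell,K)\in\Pi_c$ and a stopping time $\tau$. Split the performance at $\tau$:
\[
J(x;\pi)=\E_x\SBRA{\int_0^\tau e^{-qt}\ell_t\diff t-\beta\int_{[0,\tau]}e^{-qt}\diff K_t}+\E_x\SBRA{e^{-q\tau}\Ind_{\tau<\infty}\,\Theta_\tau},
\]
where $\Theta_\tau$ is the discounted reward from $\tau$ onward (dividends on $(\tau,\infty)$ minus $\beta$ times injections on $(\tau,\infty)$). The jump $\Delta K_\tau$ is already accounted for in the first term, since $K$ is càdlàg and the cost is charged on $[0,\tau]$. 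Conditionally on $\F_\tau$, the shifted pair $(\ell_{\tau+\cdot},K_{\tau+\cdot}-K_\tau)$ is a strategy started from $X^\pi_\tau$. It keeps the surplus nonnegative, it satisfies the bound $\ell\le F(X)$, and its integrability condition holds for a.e.\ $\omega$ by the tower property. Working on the canonical space with regular conditional probabilities, we obtain $\E[\Theta_\tau\mid\F_\tau]\leq V_c(X^\pi_\tau)$ a.s. Taking the supremum over $\pi$ gives $V_c\le\tilde V$.

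\textbf{The reverse inequality.} This is the main obstacle, because it requires a measurable selection of $\varepsilon$-optimal strategies. We avoid general selection theorems by exploiting the Lipschitz continuity of $V_c$ together with the monotonicity \eqref{ineq.opt}. Fix $\varepsilon>0$ and let $L$ be the Lipschitz constant of $V_c$. Take a grid $0=y_0<y_1<\dots$ of mesh $\delta$ with $L\delta\le\varepsilon$, and for each $y_i$ pick $\pi^i\in\Pi_c$ that is $\varepsilon$-optimal at $y_i$.

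Now take any $\pi\in\Pi_c$ and $\tau$. On $\{\tau<\infty,\ X^\pi_\tau\in[y_i,y_{i+1})\}$, continue after $\tau$ with the policy $\pi^{i}$ started from $y_i$, i.e.\ drive it with the Brownian increments $W_{\tau+\cdot}-W_\tau$. The actual state is $X^\pi_\tau\ge y_i$. Since the drift and volatility depend on the state, we cannot simply shift the path. Instead we use the idea behind the second inequality in \eqref{ineq.opt}: we run the controlled SDE from $X^\pi_\tau$ using the dividend rate of $\pi^i$, truncated to $F$ of the current state, and inject only as needed to stay above the path of $\pi^i$'s surplus. Comparison for one-dimensional SDEs keeps the new surplus above $X^{\pi^i}$, hence nonnegative, and the concave nondecreasing $F$ keeps the rate admissible.

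Checking that the injections added this way do not increase cost is delicate, so we expect instead to proceed more simply. First inject $0$ at $\tau$. Then, if needed, consider the strategy that pays a lump-sum-free dividend... Since lump dividends are not allowed, this route may fail, so our fallback is to assume the standard DPP proof for controlled diffusions (e.g.\ Fleming--Soner, or Bouchard--Touzi's weak DPP) and use continuity of $V_c$ to upgrade it. Concretely, we apply the weak DPP with upper and lower semicontinuous envelopes, which both equal $V_c$ by Proposition~\ref{prop:immediate-injection}, to obtain $\tilde V\le V_c+C\varepsilon$. Letting $\varepsilon\to0$ concludes. Measurability of the pasted strategy follows because it is defined by countably many $\F_\tau$-measurable events.
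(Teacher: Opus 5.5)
Your first half ($V_c\le\tilde V$, obtained by conditioning on $\F_\tau$ and noting that the shifted control is admissible from $X^\pi_\tau$) is the standard argument and is fine. The reverse inequality, however, is not established. You abandon the grid construction and fall back on the weak DPP of \cite{Bouchard-Touzi_2011}, on the grounds that both semicontinuous envelopes equal $V_c$. That is not the hypothesis that matters. In that framework the control set is the same for every initial point, and the half of the weak DPP giving $V_c\ge\tilde V$ is proved assuming that $x\mapsto J(x;\pi)$ is lower semicontinuous for each fixed control. Neither property is available here, because whether $\pi$ belongs to $\Pi_c$ ($X^\pi\ge 0$ and $\ell\le F(X^\pi)$) depends on the starting point. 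This is exactly the obstruction you ran into, and continuity of $V_c$ does not remove it. As a result, the bound $\tilde V\le V_c+C\varepsilon$ is asserted rather than derived. (The paper also delegates this step to standard arguments built on continuity of $V_c$, but any such argument has to handle this state dependence.)

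The missing idea is to move \emph{up} to the grid, not down: lump-sum injections are allowed even though lump-sum dividends are not. Take $y_i=i\delta$ and $\pi^i\in\Pi_c$ that is $\varepsilon$-optimal from $y_i$. On $\BRA{\tau<\infty,\ X^\pi_\tau\in(y_{i-1},y_i]}$, inject the extra amount $y_i-X^\pi_\tau<\delta$ at time $\tau$. On $\BRA{\tau<\infty,\ X^\pi_\tau=0}$, use $i=0$. Then run $\pi^i$ on the shifted path, using the same shift construction the paper uses in the proof of Proposition~\ref{prop:Vc0}. By pathwise uniqueness, the post-$\tau$ surplus is exactly the surplus of $\pi^i$ started at $y_i$. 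So the pasted strategy lies in $\Pi_c$ without any comparison argument. Its conditional continuation value is at least $J(y_i;\pi^i)-\beta\delta\ge V_c(y_i)-\varepsilon-\beta\delta\ge V_c(X^\pi_\tau)-\varepsilon-\beta\delta$, by the monotonicity in \eqref{ineq.opt}. Hence $\tilde V\le V_c+\varepsilon+\beta\delta$ for all $\varepsilon,\delta>0$. Only monotonicity of $V_c$ is used, not its Lipschitz constant. The remaining check is the integrability condition on the pasted $K$. The It\^o computation in the proof of Proposition~\ref{prop:immediate-injection} gives $(\beta-1)\E_y\SBRA{\int_{[0,\infty)}e^{-qt}\diff K_t}\le y+\mu(0)_+/q-J(y;\pi)$ for $\pi\in\Pi_c$. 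So the $\varepsilon$-optimal $\pi^i$ inject at most $C(1+y_i)$ in discounted expectation. Finally, $\E_x\SBRA{e^{-q\tau}X^\pi_\tau\Ind_{\tau<\infty}}<\infty$ for $\pi\in\Pi_c$, by the same computation and Fatou's lemma.
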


Since the value function $V_c$ is (Lipschitz) continuous, the proof of the above DPP can follow the same lines of reasoning as in the proof of Proposition~3.4 and Lemma~A.2 in \cite{NPY2020}. We also refer the reader to \cite{Bouchard-Touzi_2011} for a proof in a more general setting.

The HJB equation associated with our general control problem (with $V$) but also with the control problem studied in this section (with $V_c$) is as follows: for $x \in (0,\infty)$,
\begin{equation}\label{eq.HJB}
\min\BRA{q v(x) - \sup_{0 \leq \ell \leq F(x)} \mathcal L_\ell [v](x) , \; \beta - v'(x)} = 0 , 
\end{equation}
in which, for $v \in \mathcal{C}^2(0,\infty)$ and a constant $\ell \geq 0$,
\begin{equation}\label{eq.generatorL}
\mathcal L_\ell [v](x) = \frac{\sigma^2(x)}{2} v^{\prime \prime}(x) + (\mu(x)- \ell) v^\prime(x) + \ell .
\end{equation}
In the sequel, we will write $\cL$ for $\cL_0$, i.e., when $\ell=0$.

In this paper, the HJB equation~\eqref{eq.HJB} will appear with different boundary conditions, including the following one:
\begin{equation}\label{eq.boundary_Vc}
v^\prime(0) = \beta .
\end{equation}
Recall that we use the following notation: $v^\prime(0) \equiv v^\prime(0+)$.

One of our goals in this section is to prove that $V_c$ is a \textit{solution} of this HJB equation, and that it satisfies the boundary condition~\eqref{eq.boundary_Vc}. Let us start by proving the latter.

\begin{proposition} \label{prop.Vcprime}
The value function $V_c$ is such that $V_c^\prime (0) = \beta$.
\end{proposition}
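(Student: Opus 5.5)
Since $V_c$ is only known to be Lipschitz at this stage, I read the claim as the statement that the right difference quotient $(V_c(h)-V_c(0))/h$ converges to $\beta$ as $h\downarrow 0$. I would prove the two one-sided bounds separately.

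The upper bound comes for free. The first inequality in \eqref{ineq.opt} with $x=0$ gives $V_c(h)-V_c(0)\leq \beta h$, so $\limsup_{h\downarrow0}(V_c(h)-V_c(0))/h\leq\beta$.

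For the lower bound, fix $h>0$ and $\varepsilon>0$, and take an $\varepsilon h^2$-optimal strategy $\pi^h=(\ell,K)\in\Pi_c$ started from $h$. From $0$, I would use the shifted strategy that injects $h$ at time $0$ and then follows $\pi^h$. This is admissible and ruin-free, so $V_c(0)\geq V_c(h)-\beta h$, which again is only the easy direction. To gain, I instead couple. Started at $0$, I run the process $X^0$ with the same dividend rate $\ell_t$ and the same injections $K$, and add a minimal Skorokhod reflection $R$ at $0$ to keep $X^0$ nonnegative. Two points need care here. First, $\ell_t\leq F(X^h_t)$ but possibly $\ell_t> F(X^0_t)$; I would cap the rate at $\min(\ell_t,F(X^0_t))$, and since $F$ is Lipschitz near $0$ (being concave and differentiable) the loss is at most $F'(0)\,|X^h_t-X^0_t|$. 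Second, the gap $D_t=X^h_t-X^0_t$ starts at $h$ and evolves as $\diff D=(\mu(X^h)-\mu(X^0))\diff t+(\sigma(X^h)-\sigma(X^0))\diff W-\diff R+\text{(dividend correction)}$.

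The key estimate is then
\[
\E\Big[\int e^{-qt}\diff R_t\Big]\leq h-c\,\E\Big[\int_0^\infty e^{-qt}D_t\,\diff t\Big]+o(h),
\]
obtained from Itô's formula on $e^{-qt}D_t$, where $\mu'<q$ makes the drift term contribute the negative piece. The crucial point is that starting at $0$ the reflection is active immediately, whereas from $h$ the original process first diffuses. More precisely, the injections needed from $0$ are strictly less than $h$ in discounted expectation, by an amount of order $h$, because with positive probability the gap is absorbed before the process returns to $0$. Combining, this gives
\[
V_c(0)\geq V_c(h)-\beta\,\E\Big[\int e^{-qt}\diff R_t\Big]-F'(0)\,\E\Big[\int e^{-qt}D_t\,\diff t\Big]-\varepsilon h^2 .
\]
This route only gives a strict deficiency, however, and what is actually needed is a lower bound on $V_c(h)-V_c(0)$ that tends to $\beta$.

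A cleaner alternative for the lower bound, which I would pursue instead, is to argue by contradiction with the DPP and a test function. Suppose $\liminf_{h\downarrow0}(V_c(h)-V_c(0))/h<\beta-2\delta$. Take $\tau$ the exit time of a small interval $[0,h)$, with dividends $0$ and reflection at $0$ as the strategy in \eqref{DPP}. Denote by $R$ the local time at $0$ and by $\tau_h$ the hitting time of $h$. Then
\[
V_c(0)\geq \E\Big[-\beta\int_0^{\tau_h}e^{-qt}\diff R_t+e^{-q\tau_h}V_c(h)\Big].
\]
For a reflected diffusion started at $0$ one has $\E[R_{\tau_h}]\sim h$ and $\E[\tau_h]=O(h^2)$, with $\sigma\geq\underline\sigma$ giving nondegeneracy. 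This yields $V_c(0)\geq V_c(h)-\beta h(1+o(1))-O(h^2)$, which is still the same easy direction. So the DPP argument above also fails to produce the lower bound, and the missing input is what has to supply it.

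The missing input must come from the local time at zero. Reflection from $0$ is forced at cost $\beta$ per unit, so consider the strategy that pays no dividends until $\tau_h$ and starts from $0$. Any $\pi\in\Pi_c$ from $0$ must inject at least the running minimum deficit $\sup_{s\leq t}(-Y^\ell_s)_+$, which is at least $\sup_{s\le t}(-Y_s)$ since dividends only lower the path. Comparing with the same strategy started from $h$, which needs $h$ less injection on the event that the process from $0$ reaches depth $h$ below before the discount decays, I would show
\[
V_c(h)-V_c(0)\geq \beta h\,\p_0\big(\text{uncontrolled path dips below }-h\text{ quickly}\big)-o(h).
\]
For a nondegenerate diffusion started at $0$ this probability tends to $1$ as $h\to0$, by the law of the iterated logarithm and the bound $\sigma>\underline\sigma$. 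Concretely: take an optimal-ish $\pi$ from $0$ and run it from $h$ with injections $K_t-\min(h,K_t)$. This is feasible, since $X^h\geq X^0$ holds up to the time the injections saved reach $h$. The saving is $\beta\,\E[e^{-q T_h}]\,h$, where $T_h$ is the time $K$ reaches $h$, and $T_h\to0$ in probability because $K_t\geq\sup_{s\le t}(-(\text{martingale part}))$ grows instantly. Hence $\liminf_{h\downarrow0}(V_c(h)-V_c(0))/h\geq\beta$. The main obstacle is the comparison $X^h\geq X^0$ with state-dependent coefficients; I would handle it by pathwise comparison for SDEs with Lipschitz $\sigma$, and by adjusting $\ell$ through the cap $\min(\ell,F(\cdot))$, which is harmless since $F$ is nondecreasing.
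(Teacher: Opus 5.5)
The upper bound $\limsup_{h\downarrow0}(V_c(h)-V_c(0))/h\le\beta$ is fine. The lower bound, which is the whole content of the proposition, is not established. You discard your first two routes yourself, and the final coupling rests on a feasibility claim that fails for state-dependent coefficients.

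Drive both processes with the same $W$ and $\ell$, and give the process started at $h$ the injections $(K_t-h)_+$. Then the gap $D=X^h-X^0$ satisfies, for $t<T_h$,
\[
D_t=h-K_t+\int_0^t\big(\mu(X^h_s)-\mu(X^0_s)\big)\,\diff s+\int_0^t\big(\sigma(X^h_s)-\sigma(X^0_s)\big)\,\diff W_s ,
\]
and nothing keeps it above $h-K_t$. The one-dimensional SDE comparison theorem does not apply. It needs the upper process to receive the larger finite-variation input, whereas here it is $X^0$ that receives $\diff K$.

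Concretely, take $\sigma$ constant and $\mu$ strictly decreasing, which Assumption~\ref{assumption:SDE} allows. Let $\pi$ reflect at $0$, as the optimal strategy does near the origin.
\begin{itemize}
\item As long as $D>0$, the drift integral is negative, so $D_t<h-K_t$. Hence $D$ vanishes strictly before $T_h$.
\item Between increases of $K$, $D$ cannot reach $0$ (Gronwall), and $K$ increases only when $X^0=0$. So $D$ vanishes at an instant where $X^0=X^h=0$.
\item After that instant, $X^h$ receives no injections and immediately takes negative values.
\end{itemize}
The constructed strategy is therefore not in $\Pi_c$, and capping $\ell$ by $F$ does not affect this. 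This loss of spatial homogeneity is exactly why the pathwise argument of \cite{RRS2026} is unavailable here. In addition, $\pi$ must be $o(h)$-optimal and so depends on $h$. Your claim that $T_h\to0$ in probability would therefore have to hold uniformly over $\Pi_c$, and you do not show this.

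The coupling could be repaired by letting the process from $h$ copy $\pi$, injections included, from the first meeting time $\rho$ of $X^h$ and $X^0$; that strategy is feasible. You would then need $\E[\int_{[0,\rho]}e^{-qt}\,\diff K_t]\ge h-o(h)$ uniformly in $\pi$, that is, $\E[\int_0^\rho e^{-qt}(qD_t-\mu(X^h_t)+\mu(X^0_t))\,\diff t]=o(h)$, which is a nontrivial estimate.

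The idea you missed is that the DPP does give the hard direction. You plugged a single strategy into it, which can only produce the easy inequality; instead, use it as an \emph{upper} bound on $V_c(0)$ over all of $\Pi_c$. This is the paper's proof:
\begin{itemize}
\item Assume $V_c(h_n)\le V_c(0)+(\beta-\eta)h_n$ along some $h_n\downarrow0$. Proposition~\ref{prop:immediate-injection} then gives $V_c(x)\le V_c(0)-\eta h_n+\beta x$ for $x\ge h_n$.
\item Insert this into \eqref{DPP} at the exit time $\tau_n$ of $[0,h_n)$.
\item It\^o's formula for $e^{-qt}X^\pi_t$ shows that dividends minus injection costs plus $\beta e^{-q\tau_n}X^\pi_{\tau_n}$ is at most a constant times $\E_0[\int_0^{\tau_n}e^{-qt}\,\diff t]$.
\item It\^o's formula for $x^2-h_n^2$, together with $\sigma>\underline\sigma$, bounds that quantity by $h_n^2/B_n$ uniformly in $\pi$.
\end{itemize}
The result is $V_c(0)\le V_c(0)-\eta h_n+O(h_n^2)$, a contradiction, and no comparison of paths is needed.
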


A proof of this last proposition is provided in Appendix~\ref{Appendix:proof:Vcprime}. While a similar result has been proven in \cite{RRS2026} for a Brownian motion with drift using probabilistic tools, we remark that our proof is analytical and relies on the properties of $\mu, \sigma,$ and $F$.

\subsection{Classical supersolutions and subsolutions}

Next, let us accumulate results toward proving that $V_c$ is a \textit{solution} of the HJB equation~\eqref{eq.HJB}.

\begin{definition}\label{def:classical-super-subsolution}
A function $\psi \in \mathcal{C}^2 (0,\infty)$ is a classical supersolution (resp.\ subsolution) of the HJB equation~\eqref{eq.HJB} if for all $x \in (0,\infty)$, we have
\begin{equation}\label{eq:HJB-supersol}
\min \BRA{q \psi(x) - \sup_{0 \leq \ell \leq F(x)} \mathcal L_\ell [\psi] (x), \; \beta - \psi'(x)} \geq 0 , \mbox{ (respectively, $\leq 0$)} .
\end{equation}
The function $\psi$ is said to be a classical solution of the HJB equation~\eqref{eq.HJB} if it is both a classical supersolution and a classical subsolution.
\end{definition}

The following two lemmas provide comparison results for classical supersolutions and subsolutions of the HJB equation and the value function $V_c$. They are key to obtaining the analytical properties of $V_c$ and will also be useful in our analysis of $V$.

\begin{lemma}\label{lemma.super_comp}
If $\psi$ is a nonnegative classical supersolution of the HJB equation~\eqref{eq.HJB} and if $\psi(0) \geq V_c(0)$, then $\psi(x) \geq V_c(x)$ for all $x \in [0,\infty)$. 
\end{lemma}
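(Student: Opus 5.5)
This is a verification argument: apply Itô's formula to $e^{-qt}\psi(X^\pi_t)$ for an arbitrary $\pi=(\ell,K)\in\Pi_c$ started at $x$, and show $J(x;\pi)\le\psi(x)$. One subtlety is that $\psi\in\mathcal C^2(0,\infty)$ only. Another is that $X^\pi$ may sit at or jump through $0$. So I would first extend $\psi$ to $[0,\infty)$ by continuity, which needs care. Since $\psi'\le\beta$, $\psi$ is Lipschitz from above near $0$ ($\psi(y)-\psi(z)\le\beta(y-z)$ for $y>z>0$). The hypothesis $\psi(0)\ge V_c(0)$ is understood via $\psi(0):=\lim_{x\downarrow0}\psi(x)$ or a given continuous extension; I would assume $\psi$ is continuous on $[0,\infty)$.

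\textbf{Step 1 (localization away from the boundary).} Fix $\varepsilon>0$ and $x>\varepsilon$. Let $\theta_\varepsilon=\inf\{t\ge0: X^\pi_t\le\varepsilon\}$, $T_n=\inf\{t: X^\pi_t\ge n\}$, and $\tau=\theta_\varepsilon\wedge T_n\wedge t$. Apply Itô's formula for semimartingales on $[0,\tau)$, including the jumps of $K$ and the continuous part $K^c$. For each jump, $\psi(X_s)-\psi(X_{s-})\le\beta\Delta K_s$ by $\psi'\le\beta$ (jumps go upward, staying in $(\varepsilon,\infty)$). For the continuous part, $\psi'(X_s)\,dK^c_s\le\beta\,dK^c_s$. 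For the drift part, supersolution gives $q\psi-\mathcal L_{\ell_s}[\psi]\ge0$, i.e. $\mathcal L[\psi]-\ell_s\psi'\le q\psi-\ell_s$. The stochastic integral is a true martingale up to $T_n$ because $\psi'$ and $\sigma$ are bounded on $[\varepsilon,n]$. This yields
\[
\psi(x)\ge \E_x\Big[\int_0^{\tau}e^{-qs}\ell_s\,\diff s-\beta\int_{[0,\tau)}e^{-qs}\diff K_s+e^{-q\tau}\psi(X^\pi_{\tau-})\Big],
\]
with some care at the endpoint jump, handled again by $\psi'\le\beta$.

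\textbf{Step 2 (boundary treatment via $\psi(0)\ge V_c(0)$).} On $\{\theta_\varepsilon\le t\wedge T_n\}$, the process is at $X_{\theta_\varepsilon}\approx\varepsilon$ (continuous paths downward). Here I would use the DPP (Proposition~\ref{prop:dynamic programming principle}) in reverse. Writing $\pi$'s continuation from $\theta_\varepsilon$ as an element of $\Pi_c$ started near $\varepsilon$, the remaining reward is bounded by $V_c(X_{\theta_\varepsilon})\le V_c(\varepsilon)$ by monotonicity. By Proposition~\ref{prop:immediate-injection}, $V_c(\varepsilon)\le V_c(0)+\beta\varepsilon\le\psi(0)+\beta\varepsilon$. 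Continuity of $\psi$ at $0$ gives $\psi(0)\le\psi(\varepsilon)+o(1)$. Hence the value at $\theta_\varepsilon$ is dominated by $\psi(X_{\theta_\varepsilon})+o(1)$. On the other events, $\psi\ge0$ lets us drop the terminal term as $t,n\to\infty$.

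Concretely, I would start from $J(x;\pi)$ and split it at $\tau$ using the strong Markov/DPP structure. Then let $n\to\infty$, $t\to\infty$, using admissibility ($\E\int e^{-qt}dK_t<\infty$) and monotone/dominated convergence. Finally let $\varepsilon\to0$, obtaining $J(x;\pi)\le\psi(x)$ for $x>0$. For $x=0$ the claim is the hypothesis, and supremum over $\pi$ finishes.

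\textbf{Main obstacle.} The delicate step is the boundary: $\psi$ is not known to be $\mathcal C^2$ (nor even $\mathcal C^1$) at $0$, and strategies may reflect at or inject from $0$. The localization at $\varepsilon$ combined with the DPP and the estimate $V_c(\varepsilon)\le V_c(0)+\beta\varepsilon$ is meant to avoid Itô's formula at $0$. The jumps of $K$ crossing levels also need care, but $\psi'\le\beta$ handles them uniformly.
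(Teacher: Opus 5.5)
Your proposal is correct and follows the paper's own argument: Itô's formula for $e^{-qt}\psi(X^\pi_t)$, with the supersolution inequalities $q\psi-\mathcal L_{\ell}[\psi]\ge 0$ and $\psi'\le\beta$ controlling the dividend and injection terms, dropping the nonnegative terminal term, and closing with the dynamic programming principle at the boundary hitting time, where $\psi(0)\ge V_c(0)$ is used. The only difference is that you stop at level $\varepsilon$ (and at $T_n$) instead of at $0$, then let $\varepsilon\to 0$ using $V_c(\varepsilon)\le V_c(0)+\beta\varepsilon$ and continuity of $\psi$ at $0$; this is a more careful version of the step where the paper applies Itô's formula directly up to the hitting time of $0$.
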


\begin{lemma}\label{lemma.sub_comp}
Let $0 \leq a < b \leq \infty$ and let $B \in (0,\beta)$. If $\phi(x) = A+Bx$ is a classical subsolution on $(a,b)$ of the HJB equation~\eqref{eq.HJB}, if $\phi(a) \leq V_c(a)$ and if $\phi(b) \leq V_c(b)$ (when $b<\infty$), then $\phi(x) \leq V_c(x)$ for all $x \in [a,b]$.
\end{lemma}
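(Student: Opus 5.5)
The plan is to construct an explicit strategy in $\Pi_c$, started from any $x\in[a,b]$, whose performance dominates $\phi(x)$; since $V_c(x)\geq J(x;\pi)$ for every $\pi\in\Pi_c$, this gives the claim. Because $\phi$ is affine with slope $B$, we have $\phi''=0$ and $\phi'=B<\beta$, so the subsolution property on $(a,b)$ forces the first term of the HJB expression to be nonpositive. Pointwise this reads
\[
q(A+Bx)\leq \sup_{0\le \ell\le F(x)}\bigl[(\mu(x)-\ell)B+\ell\bigr].
\]
Since the bracket is affine in $\ell$ and $B<\beta$ (and possibly $B<1$ or $B\ge 1$), the supremum is attained at $\ell^*(x)=F(x)\Ind_{\{B<1\}}$, a constant choice of regime, either always $F(x)$ or always $0$. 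Hence $\cL_{\ell^*(x)}[\phi](x)\geq q\phi(x)$ on $(a,b)$ for a Markov feedback rate $\ell^*(x)$ that is Lipschitz in $x$ (since $F$ is concave and differentiable). The SDE with drift $\mu(x)-\ell^*(x)$ therefore has a strong solution.

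Fix $x\in(a,b)$. Let $X$ be the process controlled by $\ell_t=\ell^*(X_t)$ with no injections, and let $\tau=\inf\{t\geq0: X_t\notin(a,b)\}$. Apply Itô's formula to $e^{-qt}\phi(X_t)$ up to $\tau\wedge t$. The martingale term has bounded integrand $B\sigma(X_s)$ on $[a,b]$ when $b<\infty$; in the case $b=\infty$, I would localize and use linear growth of $\sigma$. This yields
\[
\phi(x)\leq \E_x\Bigl[\int_0^{\tau\wedge t}e^{-qs}\ell_s\,\diff s+e^{-q(\tau\wedge t)}\phi(X_{\tau\wedge t})\Bigr].
\]
Letting $t\to\infty$ gives, on $\{\tau<\infty\}$, the term $e^{-q\tau}\phi(X_\tau)$ with $X_\tau\in\{a,b\}$. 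There we use $\phi(a)\le V_c(a)$ and $\phi(b)\le V_c(b)$. On $\{\tau=\infty\}$ (only possible when $b=\infty$, since $\sigma>\underline\sigma$ makes exit from a bounded interval a.s. finite), the term $e^{-qt}\phi(X_t)$ vanishes in expectation. This follows from the linear bound $\E[X_t]\lesssim (1+x)e^{ct}$ with $c<q$, which comes from $\mu'<q$ and concavity, as in the proof of Proposition~\ref{prop:immediate-injection}. Consequently
\[
\phi(x)\le \E_x\Bigl[\int_0^\tau e^{-qs}\ell_s\,\diff s+e^{-q\tau}V_c(X_\tau)\Ind_{\tau<\infty}\Bigr].
\]

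It remains to show the right-hand side is at most $V_c(x)$, and this is where the DPP (Proposition~\ref{prop:dynamic programming principle}) enters. The strategy used before $\tau$ keeps $X\in[a,b]\subset[0,\infty)$, so it can be extended to an element of $\Pi_c$. For instance, after $\tau$ we switch to $\varepsilon$-optimal strategies from the state $X_\tau$, or simply to reflection at $0$ with zero dividends. The inequality $\le V_c(x)$ is then exactly the ``$\geq$'' direction of \eqref{DPP} applied with this stopping time. The endpoints $x=a,b$ are trivial by hypothesis.

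The main obstacle is the limiting argument at infinity when $b=\infty$, namely justifying that $\E[e^{-q(\tau\wedge t)}\phi(X_{\tau\wedge t})]$ converges. Because $\phi$ may be negative or grow linearly, I would handle this by dominated convergence using the moment bound above, together with $\sigma$ being Lipschitz to control the stochastic integral. A secondary technical point is the measurable concatenation needed to invoke the DPP; I would cite it as a consequence of Proposition~\ref{prop:dynamic programming principle} rather than redo it.
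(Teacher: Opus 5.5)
Your proposal is correct and follows essentially the same route as the paper. Both arguments use the subsolution inequality $q\phi\le B\mu+(1-B)F\Ind_{\{B<1\}}$, run the feedback rate $\ell_t=F(X_t)\Ind_{\{B<1\}}$ with no injections up to the exit time $\tau$ of $(a,b)$, apply It\^o's formula, bound $\phi(X_\tau)$ by $V_c(X_\tau)$ using $X_\tau\in\{a,b\}$, and conclude with the DPP. The only differences are that you prove the transversality limit $e^{-qn}\E_x[\phi(X_n)\Ind_{\{\tau>n\}}]\to 0$ directly, via a Gr\"onwall bound on $\E_x[X_{t\wedge\tau}]$ from $\mu(x)\le\mu(0)+\mu'(0)x$ with $\mu'(0)<q$, where the paper cites Lemma~A.1 of \cite{guerin-et-al_2024}, and that you state explicitly the harmless extension of the strategy after $\tau$ needed for it to lie in $\Pi_c$.
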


Proofs of these last two lemmas are provided in Appendix~\ref{App.super_comp} and Appendix~\ref{App.sub_comp}, respectively.

From Lemma~\ref{lemma.super_comp}, we can deduce that the value function $V_c$ has at most linear growth. In this direction, let us define, for each $\xi \geq 0$,
\begin{equation}\label{def-of-B}
B(\xi) = \frac{F'(\xi)}{q - \mu'(\xi) + F'(\xi)}
\end{equation}
and 
\[
A(\xi) = \max \BRA{ 0, \; V_c(0) , \; \frac{B(\xi) \mu(\xi) + (1-B(\xi)) F(\xi)}{q} } .
\]
Note that, under Assumption~\ref{assumption:SDE}, we have $B(\xi) \in [0,1)$. 

\begin{proposition}\label{prop.affine_bound}
For any $\xi \geq 0$, we have
\[
V_c(x) \leq A(\xi) + B(\xi) x \; , \quad \text{for all $x \geq 0$.}
\]
\end{proposition}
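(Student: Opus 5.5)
We will apply Lemma~\ref{lemma.super_comp} to the affine function $\psi(x) = A(\xi) + B(\xi)x$. Since $A(\xi)\geq 0$ and $B(\xi)\geq 0$, $\psi$ is nonnegative. Since $A(\xi)\geq V_c(0)$, we have $\psi(0)\geq V_c(0)$. It therefore remains to check that $\psi$ is a classical supersolution of \eqref{eq.HJB} on $(0,\infty)$.

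The gradient constraint is immediate: $\psi' = B(\xi) < 1 < \beta$, so $\beta - \psi'(x) > 0$. For the other term, note that $\psi''=0$ and write $A = A(\xi)$, $B = B(\xi)$. Then
\[
\sup_{0\leq \ell\leq F(x)} \mathcal L_\ell[\psi](x) = \mu(x)B + \sup_{0\le \ell\le F(x)} \ell(1-B) = B\mu(x) + (1-B)F(x),
\]
using $1-B>0$. Hence we must show that, for all $x>0$,
\[
g(x) := qA + qBx - B\mu(x) - (1-B)F(x) \geq 0 .
\]

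The function $g$ is convex, since $\mu$ and $F$ are concave and $B, 1-B\geq 0$. Its derivative at $\xi$ is
\[
g'(\xi) = qB - B\mu'(\xi) - (1-B)F'(\xi) = B\bigl(q-\mu'(\xi)+F'(\xi)\bigr) - F'(\xi) = 0
\]
by the definition \eqref{def-of-B} of $B(\xi)$. Thus $\xi$ is a global minimizer of the convex function $g$ on $[0,\infty)$. At that point,
\[
g(\xi) = qA - \bigl(B\mu(\xi)+(1-B)F(\xi)\bigr) + qB\xi \geq qB\xi \geq 0,
\]
because $qA \geq B\mu(\xi)+(1-B)F(\xi)$ by the definition of $A(\xi)$. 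Hence $g\geq 0$ everywhere. (At the endpoint $\xi=0$, a minimizer with zero one-sided derivative is still a global minimizer on $[0,\infty)$.) This gives the supersolution property, and Lemma~\ref{lemma.super_comp} yields $V_c \leq \psi$.

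The one point needing care is that $\psi$ must be $\mathcal C^2$ and nonnegative on $[0,\infty)$ for the lemma to apply; both are immediate here. The degenerate case $F'(\xi)=0$ gives $B=0$ and $\psi\equiv A$. This case is handled by the same computation, since then $g(x) = qA - F(x)$ and $F$ is nondecreasing and concave with $F'(\xi)=0$, so $F$ attains its maximum at $\xi$. Beyond this check, no real obstacle is expected, since the key comparison is supplied by Lemma~\ref{lemma.super_comp}.
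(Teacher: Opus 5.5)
Your proof is correct and follows essentially the paper's argument. The paper applies Lemma~\ref{lemma.super_comp} to the same affine $\psi$ and bounds $\mu$ and $F$ by their tangent lines at $\xi$. That is exactly your observation that the convex function $g$ satisfies $g(x)\geq g(\xi)+g'(\xi)(x-\xi)$ with $g'(\xi)=0$ by the choice of $B(\xi)$, after which $g(\xi)\geq qB(\xi)\xi\geq 0$ follows from the definition of $A(\xi)$, just as in the paper.
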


\begin{proof} 
Fix $\xi \geq 0$ and define $\psi(x) = A(\xi) + B(\xi) x$. By definition, we have $B(\xi) \in [0,1)$ and $\psi(0)=A(\xi)\geq \max\BRA{0, V_c(0)}$. Consequently, $\psi$ is nonnegative and, by Lemma~\ref{lemma.super_comp}, all is left to prove is that $\psi$ is a classical supersolution of the HJB equation~\eqref{eq.HJB}.

First, since $B(\xi) \in [0,1)$, then it is clear that $\psi'(x) < \beta$ for all $x>0$. Second, we have
\begin{align*}
q \psi(x) - \sup_{0 \leq \ell \leq F(x)} \mathcal L_\ell [\psi] (x) &= q \psi(x) - \sup_{0 \leq \ell \leq F(x)} \PAR{\mu(x) B(\xi) + \ell (1-B(\xi))}\\
&= B(\xi) \PAR{qx-\mu(x)} - (1-B(\xi)) F(x) + q A(\xi) .
\end{align*}

By concavity of both $\mu$ and $F$, we have $\mu(x) \leq \mu(\xi) + \mu^\prime(\xi) (x-\xi)$ and $F(x) \leq F(\xi) + F^\prime(\xi) (x-\xi)$. As a consequence, we further have that
\begin{align*}
&q \psi(x) - \sup_{0 \leq \ell \leq F(x)} \mathcal L_\ell [\psi] (x)\\
&\geq B(\xi) \PAR{qx - \mu(\xi) - \mu^\prime(\xi) (x-\xi)} - (1-B(\xi)) \PAR{F(\xi) + F^\prime(\xi) (x-\xi)} + q A(\xi) \\
&= \SBRA{ B(\xi) \PAR{q - \mu^\prime(\xi)} - (1-B(\xi)) F^\prime(\xi) } (x-\xi) + q \xi B(\xi) - \SBRA{ B(\xi) \mu(\xi) + (1-B(\xi)) F(\xi) } + q A(\xi) \\
&\geq q A(\xi) - \SBRA{ B(\xi) \mu(\xi) + (1-B(\xi)) F(\xi) } \\
&\geq 0 ,
\end{align*}
where in the penultimate inequality we used \eqref{def-of-B} and the fact that $q \xi B(\xi) \geq 0$, while in the last inequality we used the definition of $A(\xi)$.
\end{proof}

Note that, if $F\equiv\ell_\infty$, then, for all $\xi \geq 0$, we have $B(\xi)=0$ and 
\[
A(\xi) = \max \BRA{ 0, \; V_c(0) , \; \frac{\ell_\infty}{q} } .
\]
So, we have recovered the inequality in~\eqref{ineq.constant-bound}.

\begin{remark}\label{rk:B(infinity)}
Since $F$ is nondecreasing and since both $F'$ and $\mu'$ are nonincreasing (by the concavity  assumptions), we can easily verify that $\xi\mapsto B(\xi)$ is a nonincreasing function. In addition, as $B$ is nonnegative, we deduce that $\lim_{\xi \to \infty} B(\xi)$ is well defined and lies in $[0,1)$.
\end{remark}

\begin{remark}
If $\psi(x)=A+Bx$ with $1\leq B \leq \beta$, then using the concavity of $\mu$, we can show that
\[
q \psi(x) - \sup_{0 \leq \ell \leq F(x)} \mathcal L_\ell [\psi] (x) \geq q A + B \SBRA{ \left(q - \mu^\prime(0) \right) x - \mu(0)} .
\]
Under our standing assumptions, to get a nonnegative classical supersolution, it suffices to choose $A$ large enough, i.e.,
\[
A \geq \frac{B \mu(0)}{q} .
\]
In particular, $\psi(x)=A+\beta x$  is a nonnegative classical supersolution whenever $A \geq \max \BRA{0, V_c(0), \frac{\beta \mu(0)}{q}}$. In this case, $\psi(0)\geq V_c(0)$ and $V_c \leq \psi$ on $[0,\infty)$ by Lemma \ref{lemma.super_comp}.  In fact, $\psi$ is a solution of \eqref{eq.HJB} since $\beta - \psi'(x) = 0$ for all $x\geq0.$ Therefore, the HJB equation~\eqref{eq.HJB} does not have a unique solution.
\end{remark}

\subsection{Viscosity solutions}

At this stage of our journey toward proving that $V_c$ is a \textit{solution} of the HJB equation~\eqref{eq.HJB}, we do not know whether the value function $V_c$ is twice continuously differentiable or not. Therefore, we will first notice that it is a \textit{viscosity solution} to this HJB equation. 

\begin{definition}\label{def:viscosity-solution}
Let $V \colon [0,\infty) \to \R$ be locally bounded.
\begin{enumerate}
\item\label{testfunction-1} V is a viscosity subsolution of \eqref{eq.HJB} if for all $x \in (0,\infty)$, and for all $\phi \in C^2(0,\infty)$ such that $\phi \geq V$ and $\phi(x) = V(x)$, we have
\[
\min\BRA{q \phi(x) - \sup_{0 \leq \ell \leq F(x)} \mathcal L_\ell [\phi](x) , \beta - \phi'(x)}  \leq 0 .
\]
\item\label{testfunction-2} V is a viscosity supersolution of \eqref{eq.HJB} if for all $x \in (0,\infty)$, and for all $\phi \in C^2(0,\infty)$ such that $\phi \leq V$ and $\phi(x) = V(x)$, we have
\[
\min\BRA{q \phi(x) - \sup_{0 \leq \ell \leq F(x)} \mathcal L_\ell [\phi](x) , \beta - \phi'(x)}  \geq 0 .
\]
\end{enumerate}
The function $V$ is said to be a viscosity solution of the HJB equation~\eqref{eq.HJB} if it is both a viscosity supersolution and a viscosity subsolution.
\end{definition}

A function $\phi$ as in~\eqref{testfunction-1} or~\eqref{testfunction-2} in Definition~\ref{def:viscosity-solution} is said to be a test function for $V$ at $x$.

\begin{theorem}\label{thm:viscosity-solution}
The value function $V_c$ is a viscosity solution of the HJB equation~\eqref{eq.HJB} with boundary condition~\eqref{eq.boundary_Vc}.
\end{theorem}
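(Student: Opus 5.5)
The plan is the standard DPP-based argument. The boundary condition \eqref{eq.boundary_Vc} is already Proposition~\ref{prop.Vcprime}, so only the interior viscosity properties on $(0,\infty)$ remain. Continuity and local boundedness of $V_c$ come from Proposition~\ref{prop:immediate-injection}. Throughout, fix $x>0$ and a test function $\phi$.

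\emph{Supersolution property.} Let $\phi\le V_c$ with $\phi(x)=V_c(x)$. We check the two components of the minimum separately.

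First, for $h>0$ small, \eqref{ineq.opt} gives
\[
\phi(x+h)-\beta h\le V_c(x+h)-\beta h\le V_c(x)=\phi(x).
\]
Dividing by $h$ and letting $h\downarrow0$ yields $\phi'(x)\le\beta$.

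Second, fix a constant $\ell\in[0,F(x)]$ and consider the strategy without injections and with rate $\ell_t=\min\{\ell,F(X_t)\}$. Since $F$ is nondecreasing, $\ell_t=\ell$ as long as $X_t\ge x$. We run this strategy up to
\[
\tau_h=h\wedge\inf\{t:|X_t-x|\ge\varepsilon\},
\]
with $\varepsilon<x$, so the process stays nonnegative on $[0,\tau_h]$. After $\tau_h$ we continue with any strategy in $\Pi_c$ (for instance one that reflects at $0$), which makes the concatenation admissible in $\Pi_c$.

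The DPP \eqref{DPP} together with $V_c\ge\phi$ gives
\[
\phi(x)\ge \E_x\Big[\int_0^{\tau_h}e^{-qt}\ell_t\,dt+e^{-q\tau_h}\phi(X_{\tau_h})\Big].
\]
Apply Itô's formula to $e^{-qt}\phi(X_t)$, divide by $\E[\tau_h]$ (or by $h$) and let $h\to0$, using continuity of the coefficients, of $\phi''$, $\phi'$ and of $F$, and the fact that $\p(\tau_h<h)=o(h)$. This gives
\[
q\phi(x)-\mathcal L_{\ell'}[\phi](x)\ge0 \quad\text{for all } \ell'<F(x),
\]
where the strict inequality handles the $\min$ with $F(X_t)$ near $x$. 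Letting $\ell'\to F(x)$ yields the inequality for every $\ell\in[0,F(x)]$.

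\emph{Subsolution property.} Let $\phi\ge V_c$ with $\phi(x)=V_c(x)$. If $\phi'(x)\ge\beta$ we are done, so we argue by contradiction and assume both
\[
\phi'(x)<\beta \quad\text{and}\quad q\phi(x)-\sup_{\ell}\mathcal L_\ell[\phi](x)>0.
\]
By continuity (shifting $\phi$ to $\phi+\eta(y-x)^4$ if needed) there exist $\delta,\varepsilon>0$ such that on $[x-\varepsilon,x+\varepsilon]\subset(0,\infty)$ we have $\phi'\le\beta-\delta$ and $q\phi-\mathcal L_\ell[\phi]\ge\delta$ for all $0\le\ell\le F(y)$. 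Replacing $\phi$ by $\phi+\eta(y-x)^4$ also gives $V_c\le\phi-\eta\varepsilon^4$ on the boundary of the interval.

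Now take an arbitrary $\pi\in\Pi_c$ and let $\tau$ be the first exit time from $(x-\varepsilon,x+\varepsilon)$. Apply Itô's formula for semimartingales with jumps to $e^{-qt}\phi(X^\pi_t)$ up to $\tau\wedge T$. For the continuous part of $K$, the bound $\phi'\le\beta$ handles the injection term. For jumps of $K$, a jump $\Delta K$ starting inside the interval may overshoot the interval. On the part of the jump inside the interval we use $\phi(y+\Delta)-\phi(y)\le(\beta-\delta)\Delta$. On the part outside we use \eqref{ineq.opt}, in the form $V_c(y+\Delta)-\beta\Delta\le V_c(y)$, and compare $V_c$ with $\phi$ at the exit point.

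Combining these pieces gives
\[
\E\Big[\int_0^{\tau}e^{-qt}\ell_t\,dt-\beta\int_{[0,\tau]}e^{-qt}dK_t+e^{-q\tau}V_c(X_\tau)\Big]\le\phi(x)-c
\]
for a constant $c>0$ independent of $\pi$. This $c$ comes from $\delta\,\E\int_0^\tau e^{-qt}dt$, from $\delta$ times the injected mass, and from the boundary gap $\eta\varepsilon^4$. The generator and injection gaps are not individually bounded below uniformly in $\pi$, but together they are: either $\tau$ is not small, or $X$ exits quickly, or injections are large. So their sum is bounded below by some $c>0$.

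Taking the supremum over $\pi$ and using the DPP \eqref{DPP} gives $V_c(x)\le\phi(x)-c<V_c(x)$, a contradiction.

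The main obstacle is this last uniformity step. The cleanest route is the standard one from \cite{NPY2020}: establish the lower bound
\[
\E\Big[\int_0^\tau e^{-qt}\,dt+\int_{[0,\tau]}e^{-qt}dK_t\Big]+\E\big[e^{-q\tau}\mathbf 1_{\{|X_\tau-x|\ge\varepsilon\}}\big]\ge c_0>0
\]
uniformly over controls. This holds by applying Itô to the auxiliary function $(y-x)^2$, using that the drift, the volatility and $F$ are bounded on the interval.
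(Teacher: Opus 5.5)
Your proposal is correct and follows essentially the paper's route: the paper takes the boundary condition from Proposition~\ref{prop.Vcprime} and defers the interior sub/supersolution properties to the standard DPP argument of \cite[Appendix B.1]{guo-pham_2005}, which is what you carry out. The uniformity step you flag as the main obstacle is in fact immediate: after the quartic perturbation the total gap is at least $\delta\,\E_x\bigl[\int_0^\tau e^{-qt}\diff t\bigr]+\eta\varepsilon^4\,\E_x\bigl[e^{-q\tau}\bigr]\geq\min\{\delta/q,\eta\varepsilon^4\}$, because $q\,\E_x\bigl[\int_0^\tau e^{-qt}\diff t\bigr]+\E_x\bigl[e^{-q\tau}\bigr]=1$, so you need neither the injection gap nor the auxiliary It\^o estimate with $(y-x)^2$.
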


The proof of the above
theorem is rather standard in the literature (see, e.g., \cite[Appendix B.1]{guo-pham_2005}), except for the boundary condition~\eqref{eq.boundary_Vc}, which we have already provided in Proposition~\ref{prop.Vcprime}.

\subsection{Concavity and a free-boundary value problem}\label{sect:concavity}

In our approach, proving the concavity of the value function $V_c$ is instrumental. In particular, it ensures differentiability and it will suggest an optimal strategy through its derivative. However, to prove concavity, we cannot apply the standard argument consisting in taking a convex combination of arbitrary strategies (see, e.g., Lemma~4.3 in \cite{guo-pham_2005}) because we consider a general diffusion process. To prove that $V_c$ is concave, we will show that its concave envelope is a subsolution of the HJB equation~\eqref{eq.HJB} and then invoke our comparison result in Lemma~\ref{lemma.sub_comp}.

\begin{proposition}
The value function $V_c$ is concave on $[0,\infty)$.
\end{proposition}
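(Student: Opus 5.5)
Let $\widehat V$ denote the concave envelope of $V_c$ on $[0,\infty)$, i.e.\ the smallest concave function dominating $V_c$. By Proposition~\ref{prop.affine_bound}, $V_c$ lies below affine functions, so $\widehat V$ is finite. Since $V_c$ is nondecreasing and Lipschitz (Proposition~\ref{prop:immediate-injection}), $\widehat V$ is also nondecreasing and satisfies $\widehat V\ge V_c$. The goal is to prove $\widehat V\le V_c$.

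Suppose instead that $\widehat V(x_0)>V_c(x_0)$ for some $x_0$. The contact set $\{\widehat V=V_c\}$ is closed. Using $V_c'(0)=\beta$ (Proposition~\ref{prop.Vcprime}) and the fact that $V_c$ is Lipschitz with constant $\beta$, I first check that $0$ belongs to the contact set. The slopes of $\widehat V$ are bounded by those of $V_c$'s chords, hence by $\beta$, and the envelope at $0$ equals $V_c(0)$ because $V_c$ is continuous at $0$.

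Let $(a,b)$ be the connected component of the non-contact set containing $x_0$, where $0\le a<b\le\infty$. On $(a,b)$ the concave envelope is affine, say $\widehat V(x)=\phi(x)=A+Bx$, with $\phi(a)=V_c(a)$ and $\phi(b)=V_c(b)$ when $b<\infty$. When $b=\infty$, the envelope is affine on $(a,\infty)$, and the affine bounds of Proposition~\ref{prop.affine_bound} control the slope $B$. If I show that $B\in(0,\beta)$ and that $\phi$ is a classical subsolution of \eqref{eq.HJB} on $(a,b)$, then Lemma~\ref{lemma.sub_comp} gives $\phi\le V_c$ on $[a,b]$. This contradicts $\phi(x_0)=\widehat V(x_0)>V_c(x_0)$.

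For the slope, $B\ge0$ because $\widehat V$ is nondecreasing. I expect to rule out $B=0$ separately: then $V_c$ would be constant at the top, and Proposition~\ref{prop:immediate-injection} together with monotonicity gives a contradiction, or the degenerate case is handled directly. For $B<\beta$, a chord slope equal to $\beta$ would force $V_c(b)-V_c(a)=\beta(b-a)$, with $V_c$ lying below that chord in between. Combined with $V_c(x+h)-\beta h\le V_c(x)$, this would force $V_c$ to be affine with slope $\beta$ on $[a,b]$, so $V_c=\phi$ there, which is a contradiction. Hence $\beta-\phi'<0$ fails to give the subsolution property, and I must show instead that
\[
q\phi(x)-\max\{B\mu(x),\,B\mu(x)+(1-B)F(x)\}\le 0 \quad\text{on }(a,b).
\]

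The main obstacle is this subsolution inequality. Because of the concavity of $\mu$ and $F$, the left-hand side $g(x)=qA+Bqx-B\mu(x)-\max\{0,(1-B)F(x)\}$ is convex in $x$ (when $B\le1$; if $B>1$ the maximum uses $\ell=0$ and $-B\mu$ is still convex). So it suffices to check $g\le0$ at the endpoints $a$ and $b$, or its limit behaviour as $x\to\infty$ when $b=\infty$. At a contact endpoint, $\phi$ touches $V_c$ from above, because $\widehat V\ge V_c$ and $\phi$ extends the supporting line. However, $\phi$ is only a one-sided tangent, so the viscosity subsolution property (Theorem~\ref{thm:viscosity-solution}) cannot be applied directly. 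I would perturb to $\phi_\varepsilon(x)=\phi(x)+\varepsilon(x-a)^2$ near an interior contact point, or choose a quadratic bump that stays above $V_c$ locally, and use that the gradient constraint is strict ($B<\beta$). This gives $q\phi(a)-\sup_\ell\mathcal L_\ell[\phi](a)\le \varepsilon\sigma^2(a)$, and letting $\varepsilon\to0$ yields $g(a)\le0$.

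At $a=0$, which is a boundary point, I would instead use contact points accumulating from inside, or argue via continuity of $g$. When $b=\infty$, I would use $B\ge\lim_{\xi\to\infty} B(\xi)$ together with growth comparisons to show $g$ does not become positive. By convexity, $g\le0$ on all of $(a,b)$, which completes the argument.
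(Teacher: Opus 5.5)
Your overall plan matches the paper's: take the concave envelope, note it is affine on a non-contact component $(a,b)$ with slope $B<\beta$, get the subsolution inequality at the contact endpoints, extend it by convexity of $g$, and conclude with Lemma~\ref{lemma.sub_comp}. Your monotonicity argument excluding $B=0$ is correct. However, the proposal has a genuine gap.

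\textbf{The case $a=0$.} Neither remedy you suggest works.
\begin{itemize}
\item By definition of the component $(a,b)$, there are no contact points in $(0,b)$ that could accumulate.
\item At the boundary point $0$ the viscosity property gives no inequality on $g(0)$, so ``continuity of $g$'' has nothing to propagate.
\end{itemize}
The missing idea is that $a=0$ cannot occur, and this is exactly where Proposition~\ref{prop.Vcprime} is needed. You used it only for $\widehat V(0)=V_c(0)$, which is trivial. Suppose $a=0$. Then $\widehat V(x)=V_c(0)+Bx$ on $[0,b)$ with $B<\beta$. Meanwhile $V_c(x)=V_c(0)+\beta x+o(x)$ as $x\downarrow 0$. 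Hence $V_c>\widehat V$ for small $x>0$, contradicting $\widehat V\ge V_c$. The paper phrases this as $U'(0)\ge V_c'(0)=\beta>p$. Once $a>0$ is known, the left endpoint is an interior contact point and the subsolution inequality is available there.

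\textbf{The case $b=\infty$.} Here the key inequality is written in the wrong direction. What you need is $B\le B(\xi)$ for every $\xi$. This holds because $A(\xi)+B(\xi)x$ is a concave majorant of $V_c$, hence of $\widehat V$. It gives $g'(x)=B\bigl(q-\mu'(x)+F'(x)\bigr)-F'(x)\le 0$, so $g$ is nonincreasing on $(a,\infty)$ and $g\le g(a)\le 0$. From $B\ge\lim_{\xi\to\infty}B(\xi)$, as you wrote, nothing of this sort follows. Note that a convex $g$ with $g(a)\le 0$ stays nonpositive on a half-line only if it is nonincreasing there, so controlling $g'$ is unavoidable.

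\textbf{A minor point on the test function.} The perturbation $\phi+\varepsilon(x-a)^2$ is harmless but unnecessary, and the stated reason (``only a one-sided tangent'') is a misconception. Since $B$ is a supergradient of the concave function $\widehat V$ at $a$ (and at $b$), the affine $\phi$ satisfies $\phi\ge\widehat V\ge V_c$ on all of $[0,\infty)$, with equality at the endpoint. So $\phi$ is already a valid test function; tangency is not part of the definition.
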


\begin{proof}
Let us first recall that $V_c$ is continuous on $[0,+\infty)$ by Proposition~\ref{prop:immediate-injection}.
Let $U$ be the concave envelope of $V_c$ on $[0,\infty)$. Let us first prove by contradiction that $U(0)=V_c(0)$ and $U^\prime(0) \geq V_c^\prime(0)$.

Suppose that $U(0) > V_c(0)$. By definition of the concave envelope and continuity, there exists $x_0 > 0$ such that $U$ is affine on $[0,x_0]$ and $U(z) > V_c(z)$ for all $z \in [0,x_0]$. Let $d := \min_{y \in [0,x_0]}\PAR{ U(y) - V_c(y) } > 0$,
and define $L$ as the affine function on $[0,x_0]$ such that $L(0) = U(0)-d$ and $L(x_0) = U(x_0)$.\\
By definition, we have $L \leq U$ and $L > U - d$ on $(0,x_0]$, and that the slope of $L$ is larger than the slope of $U$. In addition, for $z \in (0,x_0]$  we have 
\begin{align*}
L(z) &> U(z) - d> U(z) - (U(z) - V_c(z))= V_c(z),
\end{align*}
where in the inequality we have simply used the definition of $d$ as the minimal distance between $U$ and $V_c$. In addition, for $z=0$, $L(0)=U(0)-d\geq V_c(0)$ by definition of $d$. Thus, if we consider the function
\[
\hat{U}(z) = \begin{cases}
    L(z), & z \in [0,x_0]\\
    U(z), & z \in (x_0,\infty), 
\end{cases}
\]
we obtain that $\hat{U}$ is concave, $\hat{U}\leq U$ with $\hat U<U$ on $[0,x_0)$, and $\hat{U}\geq V_c$, which is a contradiction that $U$ is the concave envelope. Hence, $U(0) = V_c(0)$.\\
Knowing that $U(0) = V_c(0)$, we prove that $U'(0) \geq V_c'(0)$. Indeed, suppose that $U'(0) < V_c'(0)$, then by the continuity of $U$ and $V_c$ and the fact that $U(0) = V_c(0)$, there exists $\delta > 0$ such that $U \leq V_c$ on $[0,\delta]$, thus contradicting that $U$ is the concave envelope.

\medskip

By definition, $U \geq V_c$ on $[0,\infty)$, so it suffices to show that $U \leq V_c$ on $[0,\infty)$. By contradiction, assume there exists $x_0 \in (0,\infty)$ such that $U(x_0) > V_c(x_0)$. Therefore, by continuity of both functions, there exists $0 \leq x_1 < x_0 <  x_2 \leq \infty$ such that: $U(x_1)=V_c(x_1)$ and $U(z)>V_c(z)$ for all $z \in (x_1,x_2)$. If $x_2<\infty$, then we further have $U(x_2)=V_c(x_2)$. In both cases, by Lemma B.1 in \cite{guerin-et-al_2024}, the concave envelope is affine on $(x_1,x_2)$. Let us split the rest of the proof in two parts, whether $x_2<\infty$ or $x_2=\infty$.

\underline{First case}: let us assume that $x_2<\infty$, then we have that $U(z) = V_c(x_1)+p(z-x_1)$ for $z \in [x_1,x_2]$, and with $p=\frac{V_c(x_2)-V_c(x_1)}{x_2-x_1}$. By Proposition~\ref{prop:immediate-injection}, $\beta(x_2-x_0) \geq V_c(x_2) - V_c(x_0).$ Therefore, $\beta(x_2-x_0) > U(x_2) - U(x_0) = p (x_2-x_0),$ which implies that $p < \beta$.\\
Let us show  by contradiction that $x_1>0$. Indeed, if $x_1=0$ then $U(z)=V_c(0)+pz$ for all $z \in (0,x_2)$. In particular, we would have $p=U^\prime(0) \geq V_c^\prime(0) =\beta$, where the last equality comes from Proposition~\ref{prop.Vcprime}. This is a contradiction with the above result saying that $p<\beta$.

Having established that $x_1 >0$, we prove that $U$ is an affine subsolution on $[x_1,x_2]$. First, we prove that the subsolution property is satisfied at $x_1$ and $x_2$. Let $\epsilon \in (0,x_1)$ be small enough, and $L:[0,\infty) \to \R$ be the affine function with slope $p$ and such that $L(x_1) = U(x_1)$ (hence $L(x_1) = V_c(x_1)$). Since $U$ is concave and $p=U'(x_1+)$, we have that $L \geq U \geq V_c$, hence $L$ is a test function for both $V_c$ and $U$ at $x_1$. Since $V_c$ is a viscosity subsolution, we have that
\begin{equation}\label{ineq.subsolution.L}
q L(x_1) - \sup_{0 \leq \ell \leq F(x_1)} \mathcal L_\ell [L](x_1) \leq 0.
\end{equation}
We follow a similar argument to prove inequality \eqref{ineq.subsolution.L} at $x_2$.

\medskip

Finally, we prove that inequality \eqref{ineq.subsolution.L} also holds on $(x_1,x_2)$. For a fixed $z \in (x_1,x_2)$, there exists $\lambda \in (0,1)$ such that $z=\lambda x_1 + (1-\lambda) x_2$. Then, we can write, using that both $\mu$ and $F$ are concave,
\begin{align*}
q U(z) - \sup_{0 \leq \ell \leq F(z)} \mathcal L_\ell [U](z) &= q U(z) - p \mu(z) - (1-p) \Ind_{p<1} F(z) \\
&\leq q \left( \lambda U(x_1) + (1-\lambda) U(x_2) \right) - p \left( \lambda \mu(x_1) + (1-\lambda) \mu(x_2) \right) \\
& \qquad \qquad \qquad - (1-p) \Ind_{p<1} \left( \lambda F(x_1) + (1-\lambda) F(x_2) \right) \\
&= \lambda \left( q U(x_1) - \sup_{0 \leq \ell \leq F(x_1)} \mathcal L_\ell [U](x_1) \right) \\
& \qquad \qquad + (1-\lambda) \left( q U(x_2) - \sup_{0 \leq \ell \leq F(x_2)} \mathcal L_\ell [U](x_2) \right) \\
&\leq 0 .
\end{align*}
In other words, $U$ is an affine subsolution on $(x_1,x_2)$. The contradiction then follows from Lemma~\ref{lemma.sub_comp}.

\underline{Second case}: let us assume that $x_2=\infty$. We then have that $U(z)=V_c(x_1)+p(z-x_1)$ for all $z \in [x_1,\infty)$, with $p=B(\infty) \in [0,1)$, where $B$ is defined in~\eqref{def-of-B} (see Remark~\ref{rk:B(infinity)}). Since $p<1<\beta$, we can deduce that $x_1>0$. Let us note that Inequality \eqref{ineq.subsolution.L} still holds. Now, for a fixed $z \in (x_1,\infty)$, we can write, using again that both $\mu$ and $F$ are concave,

\begin{align*}
q U(z) - \sup_{0 \leq \ell \leq F(z)} \mathcal L_\ell [U](z) &= q U(z) - p \mu(z) - (1-p)  F(z) \\
&= q V_c(x_1) - p \mu(x_1) - (1-p) F(x_1) \\
& \qquad + p (z-x_1) \left[q - \frac{\mu(z)-\mu(x_1)}{z-x_1} - \left( \frac{1-p}{p} \right) \left( \frac{F(z)-F(x_1)}{z-x_1} \right) \right] \\
&\leq q U(x_1) - \sup_{0 \leq \ell \leq F(x_1)} \mathcal L_\ell [U](x_1) \\
& \qquad + (z-x_1) \left(p \left(q - \mu^\prime (z) \right) - (1-p) F^\prime (z) \right) \\
&\leq (z-x_1) \left( p \left(q - \mu^\prime (z) + F^\prime (z) \right) - F^\prime (z) \right) \\
&\leq 0 ,
\end{align*}
where we have used \eqref{ineq.subsolution.L} and  in the last step we used the fact that $p = B(\infty) \leq B(z)$, since $B$ is decreasing. By Lemma~\ref{lemma.sub_comp}, $U \leq V_c$ on $[x_1, \infty)$, which gives the contradiction. 
\end{proof}

Now, given that $V_c$ is concave on $[0,\infty)$, both its left derivative $V_c^{'-}$ and its right derivative $V_c^{'+}$ exist and are nonincreasing on $(0,\infty)$.

Now, let us define $a_c := \inf \{x \geq 0 \colon V_c^{'+}(x) < \beta\}$ and $b_c := \inf \{x\geq 0 \colon V_c^{'-}(x) < 1\}$. We have that $0 \leq a_c < b_c < \infty$. Indeed, by Proposition~\ref{prop.Vcprime}, we deduce that $V_c^{'+}(0)=\beta$, which gives the inequality $a_c \geq 0$. Also, by Proposition~\ref{prop.affine_bound}, we deduce the finiteness of $b_c$. In other words, we have $[a_c, b_c) \subset [0,\infty)$.

In addition, we can deduce that $V_c(x) = V_c(0) + \beta x = V_c(a_c) + \beta(x-a_c)$ for all $x \in [0,a_c]$. Indeed, for each $x \in (0,a_c)$, we have $\beta = V_c^{'+}(0) \geq V_c^{'-}(x) \geq V_c^{'+}(x) \geq \beta$. In other words, $V$ is continuously differentiable on $(0,a_c)$ and we have $V^\prime(x)=\beta$ for all $x \in [0,a_c)$.

\begin{proposition}\label{prop:vc:c1}
The value function $V_c$ is continuously differentiable on $(0,\infty)$. Also, it is such that $V_c(x) = V_c(0) + \beta x = V_c(a_c) + \beta(x-a_c)$ for all $x \in [0,a_c]$ and $1 < V_c^\prime (x) < \beta$ for all $x \in (a_c,b_c)$.
\end{proposition}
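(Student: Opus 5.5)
The plan is to use concavity together with the viscosity solution property of Theorem~\ref{thm:viscosity-solution}. Since $V_c$ is concave, $V_c^{'-}(x)\geq V_c^{'+}(x)$ everywhere on $(0,\infty)$. Differentiability at $x$ is exactly the absence of a kink, i.e.\ $V_c^{'-}(x)=V_c^{'+}(x)$. Continuity of the derivative then follows automatically, because a concave function that is differentiable on an open interval is $\mathcal C^1$ there (monotone derivatives without jumps). The identity on $[0,a_c]$ has already been established just before the statement, so it remains to (i) rule out kinks and (ii) prove the strict bounds $1<V_c'<\beta$ on $(a_c,b_c)$.

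For (i), I argue by contradiction using the viscosity \emph{supersolution} property. Suppose $x_0>0$ with $V_c^{'-}(x_0)>V_c^{'+}(x_0)$. Pick $p$ strictly between these one-sided derivatives and consider the test functions
\[
\phi_\varepsilon(x)=V_c(x_0)+p(x-x_0)-\tfrac{1}{2\varepsilon}(x-x_0)^2 .
\]
Near $x_0$, concavity with a strict kink gives $V_c(x)\leq V_c(x_0)+V_c^{'\pm}(x_0)(x-x_0)$. Since $V_c^{'-}(x_0)>p$ on the left and $V_c^{'+}(x_0)<p$ on the right, we get $V_c(x)\geq \phi_\varepsilon(x)$ on a neighbourhood of $x_0$. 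Strictly speaking, $V_c(x)\geq V_c(x_0)+p(x-x_0)$ fails for concave $V_c$, so the correct statement is the reverse local inequality: a kink in a concave function is a \emph{downward} corner, and below-touching parabolas with slope $p$ and very negative curvature fit under it. To make $\phi_\varepsilon$ a global test function, I modify it away from $x_0$, which is harmless since the definition is local in spirit.

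The supersolution inequality then gives $\beta-p\geq 0$ and
\[
qV_c(x_0)-\sup_{\ell}\Bigl(-\tfrac{\sigma^2(x_0)}{2\varepsilon}+(\mu(x_0)-\ell)p+\ell\Bigr)\geq 0 .
\]
The second inequality fails as $\varepsilon\to0$, because $\sigma(x_0)>\underline\sigma>0$ makes the sup tend to $-\infty$ with the wrong sign. More precisely, the left side equals $+\tfrac{\sigma^2}{2\varepsilon}+\dots$, which is fine. So the contradiction must come from the first component instead, which requires checking which side fails. I therefore expect the main obstacle to be getting the sign bookkeeping right. If I am careful: for a downward corner, a below-touching test function may have arbitrarily negative second derivative, so $q\phi-\sup\mathcal L_\ell[\phi]\to+\infty$ and the supersolution condition gives nothing. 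The \emph{subsolution} property applies only to above-touching functions, and for a downward concave kink there is no $\mathcal C^2$ function touching from above. So the viscosity route alone does not exclude kinks.

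My concrete fallback is the local comparison Lemma~\ref{lemma.sub_comp}. Near a kink $x_0$, I construct an affine $\phi$ with slope $B\in(V_c^{'+}(x_0),V_c^{'-}(x_0))\cap(0,\beta)$. On a small interval $(a,b)$ around $x_0$, this $\phi$ is strictly above $V_c$ at $x_0$, yet I arrange $\phi(a)\leq V_c(a)$ and $\phi(b)\leq V_c(b)$ by lifting it to pass through chord points. On such an interval, $\phi$ is an affine subsolution if $q\phi-B\mu-(1-B)^+F\leq 0$ there. The first inequality of \eqref{ineq.opt} and the viscosity subsolution property applied at nearby differentiability points, where affine test functions with slope $V_c'$ plus small positive curvature touch from above, should give $qV_c\leq V_c'\mu+(1-V_c')^+F$ at those points. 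Passing to $x_0$ by continuity and perturbing slightly, I expect to obtain this subsolution inequality. Lemma~\ref{lemma.sub_comp} then yields $\phi\leq V_c$, contradicting $\phi(x_0)>V_c(x_0)$. This is the hardest step.

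For (ii), the definitions of $a_c$ and $b_c$ together with monotonicity of $V_c'$ give $1\leq V_c'\leq\beta$ on $(a_c,b_c)$, and also $V_c'<\beta$ there. For strictness at $1$: if $V_c'(x)=1$ for some $x<b_c$, then $V_c'\equiv1$ on $[x,b_c]$, so $V_c$ is affine with slope $1$ on that interval. On such an interval the HJB equation reads $qV_c=\mu$. This is impossible on a nondegenerate interval, since $qV_c'=q\neq\mu'$ because $\mu'<q$. The argument is performed in the viscosity sense with affine test functions, which are legitimate there.
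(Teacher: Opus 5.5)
There is a genuine gap in part (i), the $\mathcal C^1$ claim, which is the main content of the proposition. It comes from reversing the geometry of a concave kink. Let $V_c^{'+}(x_0)<V_c^{'-}(x_0)$ and let $p$ lie strictly between them. Concavity gives $V_c(z)\le V_c(x_0)+V_c^{'+}(x_0)(z-x_0)$ for $z>x_0$ and $V_c(z)\le V_c(x_0)+V_c^{'-}(x_0)(z-x_0)$ for $z<x_0$. Hence your parabola $\phi_\varepsilon$ lies \emph{above} $V_c$ whenever $|z-x_0|<2\varepsilon\min\{p-V_c^{'+}(x_0),\,V_c^{'-}(x_0)-p\}$.

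So a downward corner is touched from above by parabolas of arbitrarily negative curvature. It is touching from \emph{below} that is impossible: a lower $\mathcal C^2$ test function would need both $\phi'(x_0)\le V_c^{'+}(x_0)$ and $\phi'(x_0)\ge V_c^{'-}(x_0)$. Your conclusion that ``the viscosity route alone does not exclude kinks'' is therefore wrong.

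The subsolution property does apply, after the usual modification of $\phi_\varepsilon$ away from $x_0$, and it yields
\[
\min\Bigl\{qV_c(x_0)+\tfrac{\sigma^2(x_0)}{2\varepsilon}-p\mu(x_0)-(1-p)^+F(x_0),\ \beta-p\Bigr\}\le 0 .
\]
Since $p<V_c^{'-}(x_0)\le\beta$ by \eqref{ineq.opt}, the second entry is positive. The first tends to $+\infty$ as $\varepsilon\to0$ because $\sigma(x_0)>\underline\sigma>0$, which is a contradiction. This is exactly the paper's proof. You already had the right test function and the right computation (``the left side equals $+\sigma^2/(2\varepsilon)+\dots$''), but you plugged it into the wrong half of Definition~\ref{def:viscosity-solution}.

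Your fallback via Lemma~\ref{lemma.sub_comp} cannot be repaired. Since $V_c$ is already known to be concave, any affine $\phi$ with $\phi(a)\le V_c(a)$ and $\phi(b)\le V_c(b)$ lies below the chord of $V_c$ on $[a,b]$, hence below $V_c$ on all of $[a,b]$. So the configuration $\phi(x_0)>V_c(x_0)$ you want to contradict can never be built, and the lemma yields nothing. That lemma is useful in the concavity proof precisely because there $V_c$ sits strictly below an affine piece of its concave envelope; it has no role in ruling out kinks of a function already known to be concave.

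Part (ii) is fine once $\mathcal C^1$ is in hand. Your argument that $V_c'\equiv 1$ on a nondegenerate interval would force $qV_c=\mu$ there, hence $\mu'=q$, actually supplies the strict bound $V_c'>1$, which the paper only asserts.
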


\begin{proof}
Since $V_c$ is concave, we have that $V_{c}^{\prime +}(x)$ and $V_{c}^{\prime -}(x)$ are well defined and such that $ V_{c}^{\prime +}(x) \leq V_{c}^{\prime -}(x) $ for all $x \in (0,\infty)$.

Suppose there exists $x \in (0,\infty)$ such that $V_{c}^{\prime +}(x) < V_{c}^{\prime -}(x)$. Take $d \in ( V_{c}^{\prime +}(x), V_{c}^{\prime -}(x) )$ and $M > 0$, and consider the function
\[
\phi(z) := V_c(x) + d (z-x) -\frac{M}{2}( z-x )^2, \quad z \geq 0.
\]
Note that $\phi(x) = V_c(x)$ and, since $d \in ( V_{c+}'(x), V_{c-}'(x) )$, by continuity there exists $\delta >0$ such that $\phi(z) \geq V_c(z)$ for all $z\in (x-\delta,x+\delta)$. It follows that $\phi$ is a test function for $V_c$ at $x$. 
Using that $V_c$ is a viscosity subsolution of \eqref{eq.HJB} and $\phi$ is a test function, we obtain
\[
qV_c(x) + \frac{\sigma^2(x)}{2}M - d \mu(x) - \sup_{0 \leq \ell \leq F(x)}\BRA{(1-d)\ell} \leq 0.
\]
However, the previous inequality is not satisfied if we take $M$ large enough, yielding a contradiction. Hence, $V_c \in \cC^1(0,\infty)$.
Finally, from the fact that $V_c'$ is continuous, we further have that
\[
a_c = \inf \{x \geq 0 \colon V_c^\prime (x) < \beta\} \quad \text{and} \quad b_c = \inf \{x \geq 0 \colon V_c^\prime(x) < 1\} .
\]
It follows that $V_c'(x) \in (1,\beta)$ for all $x \in (a_c, b_c)$.
\end{proof}

From the proof of this last proposition, we now have that
\[
a_c = \inf \{x \geq 0 \colon V_c^\prime (x) < \beta\} \quad \text{and} \quad b_c = \inf \{x \geq 0 \colon V_c^\prime(x) < 1\} .
\]

The next step is to show that $V_c$ is the unique classical solution to a \textit{free-boundary value problem}. In this direction, let us define 
\[
\mathcal W := \BRA{v \colon [0,\infty) \to \mathbb{R} : v \text{ Lipschitz continuous} } .
\]
Note that for $v\in\mathcal{W}$, we have $ \sup_{x\geq 0} \frac{|v(x)|}{1+x}<\infty$. Note also that $V_c\in\mathcal{W}$ by Proposition~\ref{prop:immediate-injection}.

Here is a first version of the free-boundary value problem of interest:
\begin{align}
&(\mathcal{L} - q) [v](x) = 0, & \text{for } x \in (a_c, b_c), \label{HJB:ODE:1}\\
&(\mathcal{L} - q ) [v](x) + (1-v'(x))F(x) = 0, &\text{for } x \in (b_c, \infty), \label{HJB:ODE:2}\\
&v'(b_c) = 1, & \label{HJB:ODE:3} \\
&v'(a_c) = \beta . &  \label{HJB:ODE:4}
\end{align}

\begin{theorem}\label{thm:classical:solution}
The restriction of $V_c$ to $[a_c,b_c]$ is the unique (in $\mathcal{W}$) classical solution of~\eqref{HJB:ODE:1} with boundary conditions~\eqref{HJB:ODE:3}-\eqref{HJB:ODE:4}, while the restriction of $V_c$ to $[b_c,\infty)$ is the unique (in $\mathcal{W}$) classical solution of~\eqref{HJB:ODE:2} with boundary condition~\eqref{HJB:ODE:3}.

Also, we have $V_c''(b_c-) = V_c''(b_c+)$ and $V_c''(a_c-)=0$, and $V_c\in \mathcal{C}^2(0,\infty)$.
\end{theorem}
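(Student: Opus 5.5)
The approach is to use that $V_c$ is a concave, $\mathcal C^1$ viscosity solution, identify which branch of the HJB equation~\eqref{eq.HJB} is active on each region, and then replace the viscosity formulation by classical linear ODE problems with unique solutions.

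\textbf{Step 1: reduce the HJB equation on each region.} On $(a_c,\infty)$ we have $V_c'<\beta$ by Proposition~\ref{prop:vc:c1}. So at any point where a $\mathcal C^2$ test function touches $V_c$, the gradient constraint is strict. The viscosity sub- and supersolution properties then reduce to $qV_c-\sup_{0\le\ell\le F}\mathcal L_\ell[V_c]=0$ in the viscosity sense. The supremum is attained at $\ell=0$ where $V_c'>1$, i.e.\ on $(a_c,b_c)$, and at $\ell=F(x)$ where $V_c'<1$. Because $V_c$ is $\mathcal C^1$, the first-order coefficient is continuous. Hence on $(a_c,b_c)$ the function $V_c$ is a viscosity solution of the linear equation~\eqref{HJB:ODE:1}, and on $(b_c,\infty)$ of the linear equation~\eqref{HJB:ODE:2}; on $(b_c,\infty)$ the inequality $V_c'\le 1$ holds by concavity and the definition of $b_c$.

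\textbf{Step 2: upgrade to classical solutions.} On each compact subinterval $[y_1,y_2]$, I would solve the Dirichlet problem for the linear ODE with data $V_c(y_i)$. This is uniformly elliptic since $\sigma>\underline\sigma$, and has continuous coefficients and $q>0$, so the solution is classical. For~\eqref{HJB:ODE:2} I would freeze the coefficient $1-v'$ as $1-V_c'$, which is continuous, giving a linear Dirichlet problem. The standard comparison principle for viscosity solutions of uniformly elliptic linear equations then identifies $V_c$ with this classical solution. Hence $V_c\in\mathcal C^2$ on $(a_c,b_c)$ and on $(b_c,\infty)$.

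The boundary conditions follow from $\mathcal C^1$ regularity and the definitions of the thresholds: $V_c'(b_c)=1$, and $V_c'(a_c)=\beta$, using Proposition~\ref{prop.Vcprime} when $a_c=0$.

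\textbf{Step 3: uniqueness in $\mathcal W$.} For~\eqref{HJB:ODE:1} on the bounded interval with two Neumann conditions, I would take the difference $w$ of two solutions. It solves $(\mathcal L-q)w=0$ with $w'(a_c)=w'(b_c)=0$. A maximum principle argument excludes a positive interior maximum, since $qw=\frac{\sigma^2}{2}w''\le 0$ there. A boundary maximum with $w'=0$ is excluded by Hopf's lemma, so $w\equiv0$. On $[b_c,\infty)$, the difference $w$ solves $\frac{\sigma^2}{2}w''+(\mu-F)w'-qw=0$ with $w'(b_c)=0$ and linear growth. Because $\mu'<q$, the increasing fundamental solution grows superlinearly (as in \cite{guerin-et-al_2024}), so the $\mathcal W$ condition kills it. The decreasing fundamental solution $\varphi$ has $\varphi'(b_c)\neq0$, so $w\equiv0$. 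I expect this growth analysis of the fundamental solutions to be the main obstacle. I would handle it by comparison with supersolutions $A+Bx$, as in Proposition~\ref{prop.affine_bound}, together with convexity of the increasing solution.

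\textbf{Step 4: second-derivative matching.} At $b_c$, evaluate both ODEs at $b_c\pm$ using $V_c'(b_c)=1$. The extra term $(1-V_c')F$ vanishes there, so
\[
\frac{\sigma^2(b_c)}{2}V_c''(b_c\pm)=qV_c(b_c)-\mu(b_c).
\]
This gives equal one-sided second derivatives. If $a_c>0$, then $V_c''(a_c-)=0$ because $V_c$ is affine on $[0,a_c]$. In that case I need $V_c''(a_c+)=0$, i.e.\ $qV_c(a_c)=\beta\mu(a_c)$.

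To get this, use the viscosity property on the affine part together with concavity. First, $V_c''(a_c+)\le0$ by concavity. For the reverse inequality, on $(0,a_c)$ the supersolution property applied to the affine function gives $qV_c\ge\beta\mu$. Letting $x\uparrow a_c$ gives $qV_c(a_c)\ge\beta\mu(a_c)$, and the ODE at $a_c+$ then gives $V_c''(a_c+)\ge0$. This yields $V_c\in\mathcal C^2(0,\infty)$.
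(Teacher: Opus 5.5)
Your Steps 1, 2 and 4 are the paper's proof. The $\mathcal C^1$ property fixes the slope of any test function at $V_c'$, so the bounds $1<V_c'<\beta$ on $(a_c,b_c)$ and $V_c'<1$ on $(b_c,\infty)$ select the active branch. Dirichlet uniqueness of viscosity solutions, together with existence of a classical Dirichlet solution, upgrades $V_c$ to $\mathcal C^2$ off $b_c$. Evaluating both ODEs at $b_c\pm$ with $V_c'(b_c)=1$ matches the second derivatives. At $a_c$, the inequality $qV_c\ge\beta\mu$ on the affine piece combined with concavity gives $V_c''(a_c+)=0$. Freezing $1-v'$ in \eqref{HJB:ODE:2} is harmless but unnecessary, since that equation is already linear in $v$.

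Step 3 goes beyond the paper, which asserts uniqueness in $\mathcal W$ without argument, implicitly relying on $V_c\in\mathcal W$ and on \cite{guerin-et-al_2024}. Your argument on $[a_c,b_c]$ is fine: at a positive boundary maximum with zero slope, the equation forces $w''>0$.

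On $[b_c,\infty)$, however, the tools you name do not close the argument. Affine supersolutions $A+Bx$ only give upper bounds, not superlinear lower bounds. Monotonicity plus convexity does not prevent a bounded derivative (think of $\sqrt{1+x^2}$). Global convexity of the increasing solution is also not available in general; the paper notes an inflection point for $\psi$.

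What works is the superlinear barrier of Appendix~\ref{sec:comparisonC2}, adjusted to the Neumann point. Take $1<\alpha<2$ and $x_0$ as there, set $A_1=-\alpha(b_c+x_0)^{\alpha-1}$, and choose $A_0$ large. Then $\psi_2(x)=A_0+A_1x+(x+x_0)^\alpha$ satisfies $\psi_2'(b_c)=0$ and is a strict supersolution of $\frac{\sigma^2}{2}v''+(\mu-F)v'-qv=0$ on $[b_c,\infty)$.

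Now let $w$ be a Lipschitz difference of two solutions, so $w'(b_c)=0$. If $w-\epsilon\psi_2$ has a positive supremum, it is attained, because $w-\epsilon\psi_2\to-\infty$. It cannot be attained at an interior point. It cannot be attained at $b_c$ either: there the slope is zero and the equation forces a strictly positive second derivative. Letting $\epsilon\downarrow0$ and repeating with $-w$ gives $w\equiv0$.
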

\begin{proof}
Since $1< V_c' < \beta$ on $(a_c,b_c)$, we readily know that $\sup_{0 \leq \ell \leq F(x)} (1-V_c'(x)) \ell = 0$ for all $x\in (a_c,b_c)$. Therefore, by Theorem~\ref{thm:viscosity-solution}, $V_c$ is a viscosity solution of~\eqref{HJB:ODE:1} on $(a_c,b_c)$. By Theorem~3.3 of \cite{Crandall1992}, the viscosity solution of \eqref{HJB:ODE:1} with boundary conditions $v(a_c) = V_c(a_c)$ and $v(b_c) = V_c(b_c)$ is unique. By standard ODE results, there also exists a unique $\mathcal{C}^2$ solution to this equation with these boundary conditions. Since classical solutions are also viscosity solutions, we can conclude that $V_c$ is in fact $\mathcal{C}^2$ on $(a_c,b_c).$ This proves the first statement since $V_c$ satisfies \eqref{HJB:ODE:3}-\eqref{HJB:ODE:4}.

Since $ V_c' < 1$ on $(b_c, \infty)$, we readily know that $\sup_{\ell \leq F(x)} (1-V_c'(x)) \ell = (1-V_c'(x)) F(x)$ for all $x\in (b_c,\infty).$ Therefore, $V_c$ is a viscosity solution of \eqref{HJB:ODE:2} on $(b_c,\infty)$.

Let $\bar b > b_c$. Using the same uniqueness argument with ODE \eqref{HJB:ODE:2} on the interval $(b_c,\bar b)$ with boundary conditions $v(b_c) = V_c(b_c)$ and $v(\bar b) = V_c(\bar b),$ we can also conclude that $V_c$ is $\mathcal{C}^2$ on $(b_c,\infty),$ since $\bar b$ is arbitrary. This  proves the second statement since $V_c \in \mathcal W.$

We now verify that $V_c$ is twice differentiable at $b_c$. Recall that for $x \in (a_c, b_c)$ we have that $V_c$ satisfies the ODE \eqref{HJB:ODE:1}, then by taking the limit as $x \uparrow b_c$, and using that $V_c'(b_c) = 1$ we get
    \[
    \frac{\sigma^2(b_c)}{2}V_c''(b_c-) + \mu(b_c) - qV_c(b_c) = 0,
    \]
    equivalently
    \[
    V_c''(b_c-) = \frac{2}{\sigma^2(b_c)}\PAR{qV_c(b_c) - \mu(b_c)}.
    \]
    Similarly, for $x \in (b_c,\infty)$ we have that $V_c$ satisfies \eqref{HJB:ODE:2}, then by taking the limit as $x \downarrow b_c$, and using that $V_c'(b_c) = 1$ we obtain
    \[
    \frac{\sigma^2(b_c)}{2}V_c''(b_c+) + \mu(b_c) - qV_c(b_c) = 0,
    \]
    which implies
    \[
    V_c''(b_c+) = \frac{2}{\sigma^2(b_c)}\PAR{qV_c(b_c) - \mu(b_c)}.
    \]
    It now follows easily that $V_c''(b_c-) = V_c''(b_c+)$.\\
    Now, since $V_c$ is affine on $[0,a_c)$ we have that $V_c''(a_c-) = 0$. In addition, we have 
    \[
    q V_c(a_c) - \beta \mu(a_c) \geq 0,
    \] since $V_c$ is a supersolution of \eqref{eq.HJB} and $V_c'(x) = \beta$ on $(0,a_c)$.
    On the other hand, by taking the limit in \eqref{HJB:ODE:1} as $x \downarrow a_c$, and using that $V_c'(a_c) = \beta$ we obtain
    \[
    \frac{\sigma^2(a_c)}{2}V_c''(a_c+) + \beta\mu(a_c) - qV_c(a_c) = 0,
    \]
    which implies
    \begin{align*}
    V_c''(a_c+) &= \frac{2}{\sigma^2(a_c)}\PAR{qV_c(a_c) - \beta\mu(a_c)} \geq 0.
    \end{align*}
    However, the fact that $V_c$ is concave implies that $V_c''(a_c+) = 0$. Thus, $V_c \in \mathcal{C}^2(0,\infty)$.
\end{proof}

Using the structure of the HJB equation, we can in fact prove that $a_c=0$ and, finally, specify the free-boundary value problem of which $V_c$ is the solution. 

\begin{corollary}\label{cor:a-star-zero}
The value function $V_c$ is the unique classical solution in $\mathcal{W}$ to the following free-boundary value problem:
\begin{align}
& (\mathcal{L} - q) [v](x) = 0, & \text{for $x \in (0, b_c)$,} \label{HJB:Simple:1}\\
        & ( \mathcal{L} - q ) [v](x) + (1-v'(x))F(x) = 0, & \text{for $x \in (b_c, \infty)$,} \label{HJB:Simple:2}\\
        & v'(b_c) = 1 , & \label{HJB:Simple:3} \\
        & v'(0) = \beta , & \label{HJB:Simple:4}
\end{align}
where $b_c=\inf\BRA{x\geq 0 \colon V'_c(x)=1}$.
\end{corollary}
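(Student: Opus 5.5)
The corollary follows from Theorem~\ref{thm:classical:solution} once we show that $a_c=0$. Uniqueness is then exactly the uniqueness statement of that theorem, applied with $a_c=0$. The identification $b_c=\inf\{x\geq 0\colon V_c'(x)=1\}$ follows from continuity of $V_c'$ and its monotonicity (concavity): $V_c'$ decreases continuously from $\beta$, and $V_c'>1$ on $(a_c,b_c)$. So the real work is proving $a_c=0$, which I will do by contradiction.

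Suppose $a_c>0$. By Proposition~\ref{prop:vc:c1}, $V_c(x)=V_c(0)+\beta x$ on $[0,a_c]$. Since $V_c$ is a classical $\mathcal C^2$ solution (Theorem~\ref{thm:classical:solution}) and $V_c'=\beta>1$ there, the supersolution inequality gives
\[
h(x):=qV_c(x)-\beta\mu(x)=qV_c(0)+q\beta x-\beta\mu(x)\geq 0,\qquad x\in(0,a_c].
\]
On the other hand, the proof of Theorem~\ref{thm:classical:solution} shows $V_c''(a_c+)=\tfrac{2}{\sigma^2(a_c)}h(a_c)$, and $V_c''(a_c+)=0$ forces $h(a_c)=0$. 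Now $h'(x)=\beta(q-\mu'(x))>0$ by Assumption~\ref{assumption:SDE}(1), so $h$ is strictly increasing. Together with $h(a_c)=0$ this gives $h(x)<0$ for $x\in(0,a_c)$, contradicting $h\geq 0$ there. Hence $a_c=0$.

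The step needing the most care is justifying $h\geq 0$ on $(0,a_c)$. This uses the supersolution half of the HJB inequality on the region where $V_c'=\beta$, where the gradient constraint is active. It holds because $V_c$ is a viscosity supersolution that is $\mathcal C^2$, so $V_c$ itself serves as a test function. With $V_c''=0$ and $V_c'=\beta>1$, the supremum over $\ell$ is attained at $\ell=0$, and the first component of the minimum is exactly $h(x)\geq 0$.

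Finally, with $a_c=0$, equations \eqref{HJB:ODE:1}--\eqref{HJB:ODE:4} become \eqref{HJB:Simple:1}--\eqref{HJB:Simple:4}. The uniqueness in $\mathcal W$ on $[0,b_c]$ and on $[b_c,\infty)$ is inherited from Theorem~\ref{thm:classical:solution}, and $V_c\in\mathcal W$ by Proposition~\ref{prop:immediate-injection}.
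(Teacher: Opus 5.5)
Your proof is correct and follows the paper's argument. Like the paper, you reduce to showing $a_c=0$ via Theorem~\ref{thm:classical:solution}, then combine the supersolution inequality $qV_c-\beta\mu\geq 0$ on the affine piece $[0,a_c]$ with $qV_c(a_c)-\beta\mu(a_c)=0$ obtained from the ODE at $a_c$. Your closing step via $h'=\beta(q-\mu')>0$ is cleaner than the paper's displayed chain, whose last inequality applies the tangent line of $\mu$ at $a_c$ in the wrong direction for a concave function; your version needs only $\mu'<q$ and gives $h<0$ on $[0,a_c)$ directly, without the paper's detour through $h\equiv 0$ and $V_c'=\beta\mu'/q$.
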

\begin{proof} 
In view of Theorem~\ref{thm:classical:solution}, all is left to prove is that $a_c = \inf \{x \geq 0 \colon V_c^\prime(x) < \beta\} = 0$. Let us proceed by contradiction and assume $a_c>0$.

By Proposition~\ref{prop:vc:c1}, we can write $V_c(x) = V_c(a_c) + \beta (x-a_c)$, or equivalently $V_c(x) = V_c(0) + \beta x$, for $0 \leq x \leq a_c$. By Theorem~\ref{thm:classical:solution}, this implies that $V_c''(x) = 0$ for $0 \leq x \leq a_c$. As a consequence, since $V_c$ is a supersolution of the HJB~\eqref{eq.HJB} on $(0,\infty)$, for $0 \leq x \leq a_c$, we have
\[
0 \leq \min\{q V_c(x) - \sup_{0 \leq \ell \leq F(x)} \mathcal L_\ell [V_c](x) , \beta - V_c'(x)\} = q V_c(x) - \mu(x) \beta.
\]
Furthermore, $q V_c(a_c) - \mu(a_c) \beta = 0$ because of~\eqref{HJB:ODE:1}. Then, using  the concavity of $\mu$ and Assumption~\ref{assumption:SDE}, we can also write, for $0 \leq x \leq a_c$,
\begin{align*}
q V_c(x) - \mu(x) \beta &= q \left( V_c(x) - V_c(a_c) \right) - \beta \left(\mu(x) - \mu(a_c) \right) \\
&=  -q \beta (a_c-x) - \beta \left(\mu(x) - \mu(a_c) \right) \\
&\leq -\beta \left( \mu'(a_c) (a_c-x) + \mu(x) - \mu(a_c) \right) \\
&\leq 0 .
\end{align*} 

Putting the pieces together, we have that $q V_c(x) - \mu(x) \beta = 0$ for all $0 \leq x \leq a_c$, from which we deduce that
\[
V_c'(x) = \frac{\mu'(x)}{q} \beta < \beta.
\]
This is a contradiction.
\end{proof}

From the above, we can now say that the value function $V_c$ is a classical solution to the HJB equation~\eqref{eq.HJB} with boundary
condition $V_c^\prime(0)=\beta$. Since $V_c$ is concave, and therefore has a nonincreasing derivative, we can further simplify the HJB equation~\eqref{eq.HJB}, as 
\begin{equation}\label{eq.HJB-simple}
q v(x) - \sup_{0 \leq \ell \leq F(x)} \mathcal L_\ell [v](x) = 0 , \quad x \in (0,\infty) ,
\end{equation}
for $v \in \mathcal{W}$ subject to the boundary condition $v^\prime(0)=\beta$.

\section{An optimal refraction-reflection strategy in $\Pi_c$}\label{Sec:Optimal:1}

The free-boundary value problem in Corollary~\ref{cor:a-star-zero}, of which $V_c$ is the solution, suggests that an optimal strategy consists in reflecting the process at zero (for capital injections) and refracting the process at the maximal rate at level $b_c$ (for dividend payments). In this section, we verify this by providing a probabilistic representation of this optimal policy.

As a first step, we will define and then find an expression for the performance function of an arbitrary \textit{refracted-reflected strategy}, also called a \textit{reflected mean-reverting strategy} in \cite{RRS2026}. In general, such a strategy would be parametrized by a refraction threshold and a reflection barrier. However, in view of the above discussion, only one parameter will be needed, which is the one for refraction, as the reflection barrier will be set to zero. As a second step, we will show that the performance function of the refracted-reflected strategy with refraction threshold $b_c$ (defined in the previous section) is equal to the value function function $V_c$, therefore proving that this strategy is optimal in $\Pi_c$. As a by-product, we will obtain a semi-explicit characterization of the optimal threshold $b_c$.

\subsection{Refraction-reflection strategies}\label{sec:An arbitrary reflection-refraction strategy}

Fix $b>0$. We define the refraction-reflection strategy at level $b$ by $\pi^b=(\ell^b, K^b)$ through the dynamics
\begin{equation}
\diff X_t^b = \left(\mu(X_t^b) - F(X_t^b)\Ind_{\{X_t^b \geq b \}}\right) \diff t + \sigma(X_t^b) \diff W_t + \diff K^b_t ,
\end{equation}
therefore setting $\ell^b_t := F(X_t^b)\Ind_{\{X_t^b \geq b \}}$ and $K^b_t := \int_{(0,t)} \Ind_{\{X^b_s = 0 \}} \diff K^b_s$, for all $t\geq 0$. Note that $X^b_t \geq 0$ for all $t \geq 0$, which means that $\pi^b \in \Pi_c$ or, said differently, that a refraction-reflection strategy is an admissible strategy. Mathematically, the corresponding controlled process $X^b$ is a solution to a \textit{Skorohod reflection problem}; we refer the reader to Section~1.3 in \cite{pilipenko14} for an in-depth discussion on this.

Note that it suffices to consider candidate refraction-reflection strategies with a refraction threshold $b>0$ since $b_c=\inf\BRA{x\geq 0 \colon V'_c(x)=1}>0$.

\begin{remark}
 In the above, if we set $K^b \equiv 0$, then we obtain the pure refraction strategy at threshold $b$, namely the strategy $(\ell^b, 0) \in \Pi_d$. 
\end{remark}

For ease of notation, we define the performance function of $\pi^b$ by $J_b(x) := J(x; \pi^b)$, which means that we have
\begin{equation}\label{eq.Jb}
J_{b}(x) = \E_x\SBRA{\int_0^\infty e^{-qt} F(X^b_t) \Ind_{\BRA{X^b_t \geq b}} \diff t - \beta \int_{[0,\infty)} e^{-qt} \diff K^b_t} .
\end{equation}

Now, let us compute $J_{b}$ using probabilistic decompositions and fluctuation identities for diffusion processes. First, for a fixed $x \in [0,b]$, we have that $\BRA{(X^b_t, K^b_t), 0 \leq t \leq \tau_{b}^{X^b}}$, with $X^b_{0-}=x$, is equal in law to $\BRA{( Y_t, R_t), 0 \leq t \leq \tau^Y_b}$, where $(Y_t,R_t)$ is the solution to
\begin{equation}
\diff Y_t = \mu(Y_t) \diff t + \sigma(Y_t) \diff W_t + \diff R_t ,
\end{equation}
with $Y_{0-}=x$, in which $R_t = \int_{(0,t)} \Ind_{\{Y_s = 0 \}} \diff R_s$ for all $t\geq 0$, and where
\[
\tau^{X^b}_b := \inf \BRA{t > 0 \colon X^b_t = b} \quad \text{and} \quad \tau^Y_b := \inf \BRA{t > 0 \colon Y_t = b} .
\]
Using the strong Markov property of the process $X^b$, we can write
\[
J_{b}(x) = J_{b}(b) \E_x\SBRA{e^{-q \tau^Y_b}} - \beta \E_x \SBRA{\int_{[0,\tau^Y_b)} e^{-qt} \diff R_t} .
\]
Second, for a fixed $x \in (b,\infty)$, we have that $\BRA{(X^b_t, K^b_t), 0 \leq t \leq \tau_b^{X^b}}$, with $X^b_{0-}=x$, is equal in law to $\BRA{X_t,\, 0 \leq t \leq \tau^X_b}$, where
\begin{equation}
\diff X_t = \left(\mu(X_t) - F(X_t) \right) \diff t + \sigma(X_t) \diff W_t ,
\end{equation}
with $X_0=x$, where
\[
\tau^X_b := \inf \BRA{t > 0 \colon X_t = b} .
\]
Using the strong Markov property of the process $X^b$, we can write
\[
J_{b}(x) = J_{b}(b) \E_x\SBRA{e^{-q \tau^X_b}} + \E_x \SBRA{\int_0^{\tau^X_b} e^{-qt} F(X_t) \diff t} .
\]
Putting these two pieces together, we obtain
\begin{equation}
J_{b}(x) =
\begin{cases}
J_{b}(b) \E_x\SBRA{e^{-q \tau^Y_b}} - \beta \E_x \SBRA{\int_{[0,\tau^Y_b)} e^{-qt} \diff R_t} & \text{for $x \in [0,b]$,}\\
J_{b}(b) \E_x\SBRA{e^{-q \tau^X_b}} + \E_x \SBRA{\int_0^{\tau^X_b} e^{-qt} F(X_t) \diff t} & \text{for $x \in (b,\infty)$.}
\end{cases}
\end{equation}

Now, we need to compute the above expectations and the value of $J_b(b)$. Let $\psi$ (resp.\ $\varphi$) be the positive and increasing (resp.\ decreasing) fundamental solutions to
\[
(\Gen - q)[v](x) = 0 , \; x \in (0,\infty) ,
\]
with boundary conditions $\psi(0)=0$ and $\psi^{\prime}(0)=1$ (resp.\ $\varphi(0) = 1$ and $\lim_{x\to\infty} \varphi(x) = 0$), where $\cL$ was defined in~\eqref{eq.generatorL} with $\ell=0$.

It is known (see Corollary~2.2 in \cite{SLG1984}) that 
\begin{equation}\label{fluctuation:ODE}
g_b(x) := \E_x \SBRA{e^{-q \tau^Y_b}} \quad \text{and} \quad h_b(x) := \E_x \SBRA{\int_{[0,\tau^Y_b)} e^{-qt} \diff R_t}
\end{equation} 
are solutions of
\[
(\Gen - q)[v](x)=0 , \; x \in (0,b) ,
\]
with boundary conditions $g_b'(0)=0$, $g_b(b) =1 $, and $h_b'(0) = -1$, $h_b(b) =0$.
\begin{remark}
We remark that the functions $g_b$ and $h_b$ depend strongly on $b$. To make this evident, note that we can express them in terms of the fundamental solutions as follows: for $x \in [0,b]$,
    \begin{align}
g_b(x) = \frac{\varphi'(0) \psi(x) - \psi'(0) \varphi(x)}{\psi(b) \varphi'(0) - \psi'(0) \varphi(b)} , \quad h_b(x) = \frac{\varphi(x)\psi(b) - \varphi(b)\psi(x)}{\psi'(0) \varphi(b) - \psi(b) \varphi'(0)} . \label{def:g-h}
\end{align}
\end{remark}
 
Let $\varphi_F$ be the decreasing fundamental solution of
\[
(\Gen - q)[v](x) - F(x) v'(x) = 0 , \; x \in (0,\infty) ,
\]
with boundary conditions $\varphi_F(0)=1$ and $\lim_{x\to\infty}\varphi_F(x)=0$. Hence, for $x \in (b,\infty)$, we have $\E_x\SBRA{e^{-q \tau^X_b}} = \frac{\varphi_F(x)}{\varphi_F(b)}$.

Also, extending the function $F$ on $(-\infty,0)$ as in \cite{guerin-et-al_2024},  it is known that 
\[
I_F(x) := \E_x \SBRA{\int_0^{\infty} e^{-qt} F(X_t) \diff t}
\] 
is a solution in $\mathcal W$ of
\[
(\Gen_F - q)[v](x)  = 0 , \; x \in (0,\infty) .
\]
We refer the reader to Section~3 of \cite{guerin-et-al_2024} for more information on the analytical properties of $I_F$ under Assumption~\ref{assumption:SDE}. Note that no assumption on the sign of $\mu(0)$ is needed for the study of the properties of $I_F$. The assumption $\mu(0)>0$ is used in \cite{guerin-et-al_2024} to ensure the existence of a unique inflection point of $\psi$ (see also \cite[Lemma~2.4]{ekstrom-lindensjo_2023}), and therefore to prove the existence of an optimal refraction strategy when capital injections are not allowed.

Putting the pieces above together, we obtain
\begin{equation}
J_{b}(x) =
\begin{cases}
J_{b}(b) g_b(x) - \beta h_b(x) & \text{for $x \in [0,b]$,}\\
I_F(x) + \frac{\varphi_{F}(x)}{\varphi_{F}(b)}(J_{b}(b) - I_F(b)) & \text{for $x \in (b,\infty)$.}
\end{cases}
\end{equation}
It can be seen easily that $J_{b}$ is continuous on $[0,\infty)$. In addition, the constant $J_{b}(b)$ is chosen so that $J'_{b}(b-) = J'_{b}(b+)$, and its value can be computed explicitly in terms of the functions $g_b,h_b,I_F$ and $\varphi_{F}$.

Summing up, we have
\begin{equation}\label{J:Fluctuation}
J_{b}(x) = \begin{cases}
A g_b(x) - \beta h_b(x) & \text{for $x \in [0,b]$,}\\
I_F(x) + B \varphi_{F}(x) & \text{for $x \in (b,\infty)$,}
\end{cases}
\end{equation}
where the constants $A,B$ are such that $J_{b} \in \mathcal{C}^1(0,\infty)$.

\begin{remark}
Note that, in general, $J_{b}$ is not twice continuously differentiable on $(0,\infty)$. However, it is twice continuously differentiable on $(0,b)$ and on $(b,\infty)$, and it is a solution to
\[
(\cL - q)[v](x) = 0, \quad x \in (0,b),
\]
and
\[
(\cL_F - q)[v](x) = 0, \quad x \in (b,\infty).
\]
\end{remark}

\begin{remark}\label{rk:V_c finite}
It is now clear from the above computations that $V_c(x)>-\infty$ for any $x \geq 0$, as stated just before Proposition~\ref{prop:immediate-injection}. Indeed, for any $b$, we have $\pi^b \in \Pi_c$ and therefore $V_c(x) \geq J_b(x)$.
\end{remark}

\subsection{Optimality of the refraction-reflection strategy at level $b_c$}\label{sec:optimality-reflection-refraction}

Now, we can establish the connection between the value function $V_c$ and the performance function $J_{b_c}$ of $\pi^{b_c}$, the refraction-reflection strategy at level $b_c=\inf\BRA{x\geq 0:V'_c(x)=1}$.

\begin{proposition}\label{prop:explicit-Vc}
The refraction-reflection strategy at level $b_c$ is optimal in $\Pi_c$, i.e., $V_c(x)=J_{b_c}(x)$ for all $x \in [0,\infty)$. As a consequence, the value function $V_c$ is given by
    \[
    V_c(x)=\begin{cases}
    \frac{1+\beta \bar{h}'(b_c)}{\bar{g}'(b_c)}\bar{g}(x)-\beta \bar{h}(x)&\text{for }x\in[0,b_c],\\
    I_F(x)+\frac{1-I'_F(b_c)}{\varphi_F'(b_c)}\varphi_F(x)&\text{for }x\in[b_c,\infty),
    \end{cases}
    \]
     where $\bar{g}(x):=g_{b_c}(x)$ and $\bar{h}(x):=h_{b_c}(x)$.
\end{proposition}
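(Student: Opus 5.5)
The plan is to show that $V_c$ and $J_{b_c}$ solve the same boundary value problems on $[0,b_c]$ and on $[b_c,\infty)$, and then invoke uniqueness. The explicit formula will follow from imposing the boundary conditions of Corollary~\ref{cor:a-star-zero} on the representation~\eqref{J:Fluctuation}.

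\textbf{Step 1: the region $(0,b_c)$.} By Corollary~\ref{cor:a-star-zero}, $V_c$ is $\mathcal C^2$ and solves $(\cL-q)[v]=0$ on $(0,b_c)$, with $V_c'(0)=\beta$ and $V_c'(b_c)=1$. The solution space of this second-order linear ODE is spanned by $\bar g$ and $\bar h$. Since $\bar g'(0)=0$ and $\bar h'(0)=-1$, any solution with $v'(0)=\beta$ has the form $v = C\bar g - \beta \bar h$ for a constant $C$. The condition $v'(b_c)=1$ then forces
\[
C = \frac{1+\beta\bar h'(b_c)}{\bar g'(b_c)},
\]
provided $\bar g'(b_c)\neq 0$. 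To check this, I would use the representation~\eqref{def:g-h}, or argue that $g_b$ is a nonconstant solution with $g_b'(0)=0$ which is increasing on $(0,b]$ (as a Laplace transform of a hitting time from below), and invoke the maximum principle. This yields the first line of the formula for $V_c$.

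\textbf{Step 2: the region $(b_c,\infty)$.} On this region $V_c$ solves $(\cL_F-q)[v]=0$ and belongs to $\mathcal W$. The general solution is $I_F + C_1\varphi_F + C_2\psi_F$, where $\psi_F$ is an increasing fundamental solution. Under Assumption~\ref{assumption:SDE} with $\mu'<q$, $\psi_F$ grows faster than linearly (I would cite the analysis of \cite{guerin-et-al_2024}, Section~3), so the requirement $v\in\mathcal W$ kills $C_2$. The condition $v'(b_c)=1$ then gives
\[
C_1=\frac{1-I_F'(b_c)}{\varphi_F'(b_c)},
\]
with $\varphi_F'(b_c)<0$, hence nonzero.

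\textbf{Step 3: identification with $J_{b_c}$.} The function $J_{b_c}$ has the same structure by~\eqref{J:Fluctuation}: it equals $Ag-\beta h$ on $[0,b_c]$ and $I_F+B\varphi_F$ on $(b_c,\infty)$. Its constants are fixed by $\mathcal C^1$-pasting at $b_c$ together with continuity, which gives two linear equations in $(A,B)$. The function $V_c$ from Steps 1 and 2 satisfies these same equations: it is $\mathcal C^1$ at $b_c$ by Theorem~\ref{thm:classical:solution}, it is continuous, and it has the same form on each piece. The main obstacle is showing that this $2\times2$ system has a unique solution, which is the nondegeneracy condition $\bar g'(b_c)\varphi_F(b_c)-\bar g(b_c)\varphi_F'(b_c)\neq0$. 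This holds by signs: $\bar g(b_c)=1>0$, $\bar g'(b_c)>0$, and $\varphi_F(b_c)>0>\varphi_F'(b_c)$. Uniqueness then gives $A=V_c(b_c)$ and $B$ equal to $C_1$, so $J_{b_c}=V_c$ on $[0,\infty)$.

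As an alternative or backup route, one could run a verification argument. Since $J_{b_c}\le V_c$ trivially, it would suffice to apply Itô's formula to $V_c\in\mathcal C^2$ along $X^{b_c}$, using the HJB equality, which holds with the maximizer $\ell=F\Ind_{\{x\ge b_c\}}$ by concavity of $V_c$, and using $V_c'(0)=\beta$ on the support of $\diff K^{b_c}$. The boundary term $e^{-qt}\E[V_c(X_t)]$ vanishes thanks to the linear growth from Proposition~\ref{prop.affine_bound} and the admissibility bound on $\E_x[\int e^{-qt}\diff K_t]$. This route would avoid the ODE nondegeneracy issue entirely.
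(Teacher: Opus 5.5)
Your proposal is correct and follows the paper's proof. As in the paper, you write $V_c=C\bar g-\beta\bar h$ on $[0,b_c]$ and $V_c=I_F+D\varphi_F$ on $[b_c,\infty)$ using the uniqueness statements of Theorem~\ref{thm:classical:solution} and Corollary~\ref{cor:a-star-zero}, and then note that $(C,D)$ and the constants $(A,B)$ of $J_{b_c}$ in~\eqref{J:Fluctuation} solve the same $\mathcal C^1$-pasting system \eqref{C1:1}--\eqref{C1:2}. Your sign check that this system is nondegenerate ($\bar g(b_c)=1$, $\bar g'(b_c)>0$, $\varphi_F(b_c)>0>\varphi_F'(b_c)$) fills in a step the paper leaves implicit when it simply writes ``we conclude that $(A,B)=(C,D)$''.
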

\begin{proof}
First, note that the restriction of $V_c$ to $[0, b_c]$ is the unique solution of $(\cL - q)[v](x) = 0$ with $v'(0) = \beta$ and $v'(b_c)=1$. It follows that
\[
V_c(x) = C \bar{g}(x) -\beta \bar{h}(x), \quad x \in [0,b_c),
\]
for a suitable constant $C$. 

Second, we know that the restriction of $V_c$ to $(b_c,\infty)$ is the unique solution in $\mathcal{W}$ to $(\cL_F - q)[v](x) = 0$ with $v'(b_c)=1$. It follows that
\[
V_c(x) = I_F(x) + D \varphi_{F}(x), \quad x \in (b_c,\infty),
\]
for a suitable constant $D$. Since $V_c \in \mathcal{C}^1(0,\infty)$, then $C,D$ satisfy the system
\begin{align}
C \bar{g}(b_c) -\beta \bar{h}(b_c) &=I_F(b_c) + D \varphi_{F}(b_c) , \label{C1:1}\\
C \bar{g}'(b_c) -\beta \bar{h}'(b_c) &=I_F'(b_c) + D \varphi_{F}'(b_c) .\label{C1:2}
\end{align}

On the other hand, we have that $J_{b_c}$ is given by~\eqref{J:Fluctuation} and it is continuously differentiable, so that $A,B$ in~\eqref{J:Fluctuation} also solve \eqref{C1:1}--\eqref{C1:2}. We conclude that $(A,B) = (C,D)$, hence $V_c = J_{b_c}$. Finally, since we also know that $V_c'(b_c) = 1$, we can further deduce that \begin{align*}
A  = \frac{1 + \beta \bar{h}'(b_c)}{\bar{g}'(b_c)}, & \mbox{ and } B  = \frac{1 - I_F'(b_c)}{\varphi_{F}'(b_c)}.
\end{align*}

\end{proof}

\section{Optimization of capital injections and dividend payments}\label{sec:Optimal:2}

We now turn our attention to our general optimization problem, in which we are optimizing both dividends and capital injections over the set of admissible strategies $\Pi$, using the performance function defined in~\eqref{eq:performance-function}. Recall that its value function is given by $V$. In particular, as mentioned earlier, since $\Pi_d \cup \Pi_c \subsetneq \Pi$, we have
\[
V(x) \geq \max\BRA{V_d(x),V_c(x)}, \quad x \geq 0.
\]

\subsection{Maximization of dividends without injections}

As presented in Section~\ref{sect:model-problem}, the optimization problem with value function $V_d$, corresponding to optimization over strategies in $\Pi_d$, that is, strategies for which no capital injections are made, was studied in \cite{guerin-et-al_2024} under the additional assumption that $\mu(0)>0$. This assumption was used to establish the existence of an optimal refraction strategy in $\Pi_d$. Moreover, it was shown that the value function $V_d$ is concave on $[0,\infty)$ and is a classical solution to the following HJB equation:
\begin{equation}\label{eq.HJB_d}
q v(x) - \sup_{0 \leq \ell \leq F(x)} \mathcal L_\ell [v](x) = 0 , \quad x \in (0,\infty) ,
\end{equation}
with boundary condition $v(0)=0$. 

\begin{remark}
Recall from the discussion around HJB equation~\eqref{eq.HJB-simple}, which is the same as the one above in~\eqref{eq.HJB_d}, that the value function $V_c$ is also a classical solution to this HJB equation, but with boundary condition $v^\prime(0)=\beta$.
\end{remark}

In fact, the main result in \cite{guerin-et-al_2024} is the following: 
\begin{theorem}[Theorem~2.5 in~\cite{guerin-et-al_2024}]\label{thm.Vd}
Under Assumption~\ref{assumption:SDE} and if $\mu(0)>0$, then there exists a unique level $b_d \ge  0$ such that the pure refraction strategy at level $b_d$ is optimal in $\Pi_d$.
Also, $V_d$ is a twice continuously differentiable function such that:
\begin{itemize}
\item If $I_F^\prime(0) - I_F(0) \varphi_F^\prime(0) > 1$, then $b_d>0$ and the value function $V_d$ is a concave function given by
\[
V_d(x) =
\begin{cases}
\frac{\psi(x)}{\psi^\prime(b_d)} & \text{for $x \le b_d$,}\\
I_F(x) - \varphi_F(x) \left( \frac{1-I_F^\prime(b_d)}{\varphi_F^\prime(b_d)} \right) & \text{for $x \ge b_d$}.
\end{cases}
\]
\item If $I_F^\prime(0) - I_F(0) \varphi_F^\prime(0) \le  1$, then $b_d = 0$ and 
\[
V_d(x) = I_F(x) - I_F(0) \varphi_F(x), \quad x \ge 0 .
\]
\end{itemize}
\end{theorem}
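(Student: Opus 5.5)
This theorem is quoted from \cite{guerin-et-al_2024}, so the plan is a guess-and-verify argument for the pure refraction problem in $\Pi_d$. It follows the same analytic skeleton as Section~\ref{Sec:HJB}, except that the boundary condition is now $v(0)=0$.

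\textbf{Step 1: the candidate family.} For each $b\geq 0$, I would compute the performance $J^d_b$ of the pure refraction strategy $(\ell^b,0)$ by strong Markov decompositions analogous to those used for \eqref{J:Fluctuation}.
\begin{itemize}
\item On $[0,b]$: $J^d_b(x)=J^d_b(b)\,\psi(x)/\psi(b)$, where $\psi(x)/\psi(b)=\E_x[e^{-q\tau_b}\Ind_{\tau_b<\tau_0}]$ because $\psi(0)=0$.
\item On $[b,\infty)$: $J^d_b(x)=I_F(x)+D\varphi_F(x)$.
\end{itemize}
The constants are fixed by continuity and $C^1$-fit at $b$.

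\textbf{Step 2: choosing the threshold by smooth fit.} The optimal threshold should be selected by $J^{d\,\prime}_b(b)=1$. On $[0,b]$ this reads $J^d_b(b)=\psi(b)/\psi'(b)$, so $J^d_b=\psi/\psi'(b)$. It then remains to produce $b_d>0$ at which the $C^1$ pasting with $I_F+D\varphi_F$ holds with slope $1$.
\begin{itemize}
\item Here $\mu(0)>0$ enters. By \cite[Lemma~2.4]{ekstrom-lindensjo_2023}, $\psi$ has a unique inflection point $\hat b>0$: $\psi$ is concave on $(0,\hat b)$ and convex beyond.
\item I would define $\Lambda(b):=I_F'(b)+\varphi_F'(b)\bigl(\psi(b)/\psi'(b)-I_F(b)\bigr)/\varphi_F(b)$. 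This is the slope at $b$ of the $(\cL_F-q)$-solution from the right matching value $\psi(b)/\psi'(b)$.
\item I would look for $b\in(0,\hat b]$ with $\Lambda(b)=1$, using monotonicity and intermediate values on $(0,\hat b]$.
\item At $b=0$ the value is $\Lambda(0)=I_F'(0)-I_F(0)\varphi_F'(0)$. This explains the dichotomy in the statement: if $\Lambda(0)>1$ a root $b_d>0$ exists, while if $\Lambda(0)\le 1$ one takes $b_d=0$, which gives $V_d=I_F-I_F(0)\varphi_F$ with $v(0)=0$.
\end{itemize}

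\textbf{Step 3: HJB properties of the candidate.} I would show that the candidate $v$ is $C^2$ (second-derivative matching at $b_d$ follows from both ODEs and $v'(b_d)=1$, as in Theorem~\ref{thm:classical:solution}) and concave, with $v'\ge1$ on $(0,b_d)$ and $v'\le 1$ on $(b_d,\infty)$.
\begin{itemize}
\item On $(0,b_d)$, concavity follows from $b_d\le\hat b$.
\item On $(b_d,\infty)$, concavity requires an analysis of $(I_F+D\varphi_F)''$. I would differentiate the ODE and use the concavity of $\mu$ and $F$ with a maximum-principle argument.
\end{itemize}
These sign conditions make the supremum over $\ell$ in \eqref{eq.HJB_d} attained at $0$ below $b_d$ and at $F$ above it. Hence $v$ solves \eqref{eq.HJB_d} classically with $v(0)=0$.

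\textbf{Step 4: verification.} For arbitrary $\pi\in\Pi_d$, I would apply Itô's formula to $e^{-qt}v(X^\pi_t)$ up to $\tau_0^\pi\wedge\tau_n$ with localizing times $\tau_n$. The HJB inequality gives $\E_x\int_0^{\tau}e^{-qs}\ell_s\,\diff s\le v(x)$ in the limit, using the linear growth of $v$ and $v(0)=0$. Equality holds for the refraction strategy, which identifies $V_d=v$. Uniqueness of $b_d$ follows from the strict monotonicity of $\Lambda$.

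\textbf{Main obstacle.} The hardest part is Steps 2--3 on $[b_d,\infty)$. One must show that the smooth-fit root lies below the inflection point of $\psi$ and that $I_F+D\varphi_F$ is concave there. This is where $\mu(0)>0$ and the concavity of $\mu$ and $F$ are genuinely needed, together with the analytic properties of $I_F$ and $\varphi_F$ from \cite[Section~3]{guerin-et-al_2024}.
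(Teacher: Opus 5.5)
The paper does not reprove this theorem. It imports it from \cite{guerin-et-al_2024} and describes that proof as a direct guess-and-verify argument: $\mu(0)>0$ is used to give $\psi$ a unique inflection point $\hat b$, and $b_d$ is the root of the smooth-fit equation recorded in the remark that follows. Your plan matches this description. Step~1, the reduction of smooth fit to $\Lambda(b)=1$ (equivalent to that remark's equation), the value $\Lambda(0)=I_F'(0)-I_F(0)\varphi_F'(0)$, and the verification in Step~4 are all correct. Incidentally, your construction gives $V_d=I_F+\frac{1-I_F'(b_d)}{\varphi_F'(b_d)}\varphi_F$ on $[b_d,\infty)$, with a plus sign as in Proposition~\ref{prop:explicit-Vc}. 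The minus sign in the quoted display is a slip in the statement (it would force $V_d'(b_d)=2I_F'(b_d)-1$), not in your argument.

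The genuine gap is Step~2: neither existence nor uniqueness of $b_d$ is actually argued. The intermediate value theorem on $(0,\hat b]$ needs $\Lambda(\hat b)\le 1$, which you never establish. The strict monotonicity of $\Lambda$ is asserted without any mechanism. With $m:=\psi/\psi'$ and $r:=\varphi_F'/\varphi_F<0$, the ODEs for $\psi$, $I_F$, $\varphi_F$ give
\[
\Lambda'=\frac{\psi''}{\psi'}\,(1-rm)+(\Lambda-1)\Bigl(\frac{2(F-\mu)}{\sigma^2}-r\Bigr),
\]
whose second term has no definite sign, so what you can control is only the sign of $\Lambda'$ where $\Lambda=1$.

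The missing idea is to use, for every $b$ and not only for $b_d$, the function $w_b:=I_F+\frac{1-I_F'(b)}{\varphi_F'(b)}\varphi_F$, which has unit slope at $b$ just like $\psi/\psi'(b)$. Evaluating both ODEs at $x=b$ and subtracting gives
\[
\Lambda(b)-1=\frac{\varphi_F'(b)}{\varphi_F(b)}\,\frac{\sigma^2(b)}{2q}\Bigl(\frac{\psi''(b)}{\psi'(b)}-w_b''(b)\Bigr).
\]
Next, put $g:=\frac{\sigma^2}{2}w_b''=qw_b-(\mu-F)w_b'-F$. At any zero of $w_b''$ one has $g'=(q-\mu'+F')(w_b'-B)$, with $B<1$ as in \eqref{def-of-B}, so $w_b'$ has no local maximum where $w_b'>B$. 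This first-order argument needs no derivative of the merely Lipschitz $\sigma$. Combined with the bounds $I_F\le A(\xi)+B(\xi)x$ (the supersolution computation in the proof of Proposition~\ref{prop.affine_bound} applies to $I_F$), it yields $w_b'<1$ on $(b,\infty)$ and $w_b''(b)<0$. Consequently:
\begin{itemize}
\item $\Lambda(\hat b)<1$, which gives a root when $\Lambda(0)>1$;
\item every root satisfies $\psi''(b)=\psi'(b)\,w_b''(b)<0$, hence $b<\hat b$;
\item at every root $\Lambda'(b)=\frac{\psi''(b)}{\psi'(b)}\bigl(1-r(b)m(b)\bigr)<0$, so $\Lambda$ crosses $1$ only downward. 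This gives uniqueness, and also no positive root when $\Lambda(0)\le 1$.
\end{itemize}
The same sign rule, with $B$ nonincreasing, also excludes $w_{b_d}''>0$ on $(b_d,\infty)$. That gives the concavity of $V_d$ there without differentiating the ODE twice, which the standing regularity does not allow. For the HJB equation and the verification, $w_{b_d}'\le 1$ on $(b_d,\infty)$ already suffices.
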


Similarly to Corollary~\ref{cor:a-star-zero} above (concerning $V_c$), when $I_F^\prime(0) - I_F(0) \varphi_F^\prime(0) > 1$, we can deduce from the previous theorem that the value function $V_d$ is a classical solution to the following free-boundary value problem:
\begin{align}
& (\mathcal{L} - q) [v](x) = 0, & \text{for $x \in (0, b_d)$,}\\
        & ( \mathcal{L} - q ) [v](x) + (1-v'(x))F(x) = 0, & \text{for $x \in (b_d, \infty)$,}\\
        & v'(b_d) = 1 , & \\
        & v(0) = 0 , &
\end{align}
where $b_d=\inf\BRA{x\geq 0 \colon V'_d(x)=1}>0$. 

\begin{remark}
It is known also that, in the case $I_F^\prime(0) - I_F(0) \varphi_F^\prime(0) > 1$, the optimal level $b_d>0$ is the root of
\[
\frac{\psi(b)}{\psi^\prime(b)} = I_F(b) + \left(1- I_F^\prime(b) \right) \frac{\varphi_F(b)}{\varphi_F^\prime(b)} .
\]
See \cite{guerin-et-al_2024} for more details.
\end{remark}

\subsection{Solution to the general optimization problem}

The main result of this paper is the following dichotomous solution to the general optimization problem:
\begin{theorem}\label{thm.dichotomy}
Under Assumption~\ref{assumption:SDE} and if $\mu(0)>0$, then the value function $V$ is characterized by the following dichotomy:
    \begin{enumerate}
        \item If $V_c(0) > 0$, then $V=V_c$ and consequently an optimal dividend-injection strategy is given by the refraction-reflection strategy at threshold $b_c$.
        \item If $V_c(0) < 0$, then $V=V_d$ and consequently an optimal dividend-injection strategy is given by the pure refraction strategy at threshold $b_d$.
    \end{enumerate}
Equivalently,
\begin{enumerate}[resume]
        \item If $V_d'(0) > \beta$, then $V=V_c$ and consequently an optimal dividend-injection strategy is given by the refraction-reflection strategy at threshold $b_c$.
        \item If $V_d'(0) < \beta$, then $V=V_d$ and consequently an optimal dividend-injection strategy is given by the pure refraction strategy at threshold $b_d$.
    \end{enumerate}
In particular, if $V_d'(0) = \beta$ or $V_c(0)=0$, then we have $V=V_d=V_c$.
\end{theorem}

\subsection{Proof of Theorem~\ref{thm.dichotomy}}

We begin with a result similar to Lemma~\ref{lemma.super_comp}, i.e., a comparison result between classical supersolutions of HJB equation~\eqref{eq.HJB} and $V$. In fact, the proof is the same if we replace $\Pi_c$ by $\Pi$ and $V_c$ by $V$.

\begin{lemma}\label{lemma.super_compV}
If $\psi$ is a nonnegative classical supersolution of the HJB equation~\eqref{eq.HJB} and if $\psi(0) \geq V(0)$, then $\psi(x) \geq V(x)$ for all $x \in [0,\infty)$. 
\end{lemma}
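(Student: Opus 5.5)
The plan is a verification argument via Itô's formula applied to $e^{-qt}\psi(X^\pi_t)$ for an arbitrary admissible $\pi=(\ell,K)\in\Pi$, stopped at ruin. The difference from Lemma~\ref{lemma.super_comp} is that the process may now be ruined. We must therefore check that the terms produced at $\tau_0^\pi$ have the right sign; this is exactly where nonnegativity of $\psi$ enters.

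\emph{Step 1: localization and Itô's formula.} Fix $x\geq0$ and $\pi\in\Pi$. Since $\psi\in\mathcal C^2(0,\infty)$ and $X^\pi$ may touch $0$, I would localize with $T_n:=\inf\{t\geq0: X^\pi_t\notin(1/n,n)\}\wedge n\wedge\tau_0^\pi$. If necessary I would first replace $\psi$ by a $\mathcal C^2$ extension near $0$, or start from $x>0$ and use continuity of $\psi$ at $0$. Applying the Itô formula for semimartingales with jumps, coming only from $K$, on $[0,T_n]$ gives
\begin{multline*}
e^{-qT_n}\psi(X^\pi_{T_n})=\psi(x)+\int_0^{T_n}e^{-qs}\bigl(\mathcal L_{\ell_s}[\psi](X_s)-\ell_s-q\psi(X_s)\bigr)\diff s\\
+\int_0^{T_n}e^{-qs}\psi'(X_s)\sigma(X_s)\diff W_s+\int_{[0,T_n]}e^{-qs}\psi'(X_{s-})\diff K^c_s+\sum_{s\leq T_n}e^{-qs}\Delta\psi(X_s).
\end{multline*}
Here $K^c$ denotes the continuous part of $K$, and a jump at time $0$ is included if $K_0>0$.

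\emph{Step 2: using the supersolution property.} Because $0\leq\ell_s\leq F(X_s)$, we have $q\psi-\mathcal L_{\ell_s}[\psi]\geq0$, so the $\diff s$ integrand is at most $-e^{-qs}\ell_s$. The inequality $\psi'\leq\beta$ bounds the $\diff K^c$ term by $\beta\int e^{-qs}\diff K^c_s$. It also bounds each jump term by $\beta e^{-qs}\Delta K_s$, by the mean value theorem along the upward jump $[X_{s-},X_s]$. After taking expectations, the stochastic integral vanishes by the localization. This yields
\[
\psi(x)\geq\E_x\Bigl[\int_0^{T_n}e^{-qs}\ell_s\diff s-\beta\int_{[0,T_n]}e^{-qs}\diff K_s+e^{-qT_n}\psi(X^\pi_{T_n})\Bigr].
\]

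\emph{Step 3: passing to the limit.} This is the step I expect to be the main obstacle. Because $\psi\geq0$, the terminal term can simply be dropped, which removes any need to control $\psi(X_{T_n})$, including on $\{\tau_0^\pi<\infty\}$, where $X$ may exit below $0$ by diffusion. There remains a boundary subtlety: $T_n\uparrow\tau_0^\pi$, while the performance function counts injections only on $[0,\tau_0^\pi)$. Two monotone limits handle this. The dividend integral converges by monotone convergence. The injection integral converges by monotone convergence together with $\E_x[\int e^{-qt}\diff K_t]<\infty$, and its limit over $[0,T_n]$ is $\int_{[0,\tau_0^\pi)}$ up to the mass at $\tau_0^\pi$ itself. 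A jump of $K$ at $\tau_0^\pi$ would not keep $X$ negative, so it cannot occur at the ruin time in a way that matters. In any case, dropping nonnegative $K$-mass only increases the right-hand side, so the inequality still holds. We obtain $\psi(x)\geq J(x;\pi)$ for all $x>0$.

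Taking the supremum over $\pi\in\Pi$ gives $\psi\geq V$ on $(0,\infty)$. The case $x=0$ holds by hypothesis, and it can also be recovered by continuity.
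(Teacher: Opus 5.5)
Your proposal has a genuine gap in Step 3, where you assert $T_n\uparrow\tau_0^\pi$. Each $T_n$ is at most the first time $X^\pi$ enters $(-\infty,1/n]$, so $\lim_n T_n\le\tau:=\inf\{t\ge0: X^\pi_t=0\}$. That is the first time the surplus \emph{touches} $0$, not the ruin time $\tau_0^\pi=\inf\{t\ge0:X^\pi_t<0\}$.

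A strategy in $\Pi$ may inject capital while the surplus sits at $0$ and simply continue. For a reflection strategy such as $\pi^b\in\Pi_c\subset\Pi$, one has $\tau<\infty=\tau_0^\pi$ with positive probability. What your limit actually gives, after discarding the terminal term, is
\[
\psi(x)\geq\E_x\Bigl[\int_0^{\tau}e^{-qs}\ell_s\diff s-\beta\int_{[0,\tau]}e^{-qs}\diff K_s\Bigr].
\]
This is only the reward collected before the first visit to $0$; for $\pi^b$ it is the performance of the pure refraction strategy. It does not bound $J(x;\pi)$, because everything earned after the surplus has been replenished at $0$ is missing. Note also that, for $x>0$, your argument never uses the hypothesis $\psi(0)\geq V(0)$.

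The term you drop is exactly what repairs this. Keep it: on $\{\tau<\infty\}$ it equals $e^{-q\tau}\psi(X^\pi_\tau)=e^{-q\tau}\psi(0)\geq e^{-q\tau}V(0)$. This uses the boundary hypothesis, and $e^{-q\tau}V(0)$ dominates the post-$\tau$ reward of $\pi$, since the continuation is an admissible strategy started from $0$. This yields
\[
\psi(x)\geq\E_x\Bigl[\int_0^\tau e^{-qt}\ell_t\diff t-\beta\int_{[0,\tau]}e^{-qt}\diff K_t+e^{-q\tau}V(0)\Ind_{\tau<\infty}\Bigr].
\]
The dynamic programming principle for $V$, stopped at $\tau$, then gives $\psi(x)\geq V(x)$. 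This is the paper's route: the proof of Lemma~\ref{lemma.super_comp} with $\Pi_c,V_c$ replaced by $\Pi,V$.

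The alternative is to run It\^o's formula all the way to $\tau_0^\pi$, through the visits of $X^\pi$ to $0$. That would need $\psi$ to be regular up to the boundary, e.g.\ $\psi\in\mathcal C^2[0,\infty)$, so that $\psi'\le\beta$ and the generator inequality persist at $0$ by continuity. A classical supersolution is only $\mathcal C^2$ on $(0,\infty)$, so the ``$\mathcal C^2$ extension near $0$'' you mention is not available in general. In any case, your $T_n$ never lets the process reach $0$.
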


Before proving the dichotomy stated in Theorem~\ref{thm.dichotomy}, we establish in the next two propositions a dichotomy based on the value of $V(0)$. Note that this is only an intermediate step, since the value of $V(0)$ is unknown at this stage.

\begin{proposition}\label{prop:Vc00}
If $V(0)=0$, then $V=V_d$.
\end{proposition}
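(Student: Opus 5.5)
We will prove $V\leq V_d$, since $V\geq V_d$ holds trivially. The tool is Lemma~\ref{lemma.super_compV}, applied with $\psi=V_d$. Under Assumption~\ref{assumption:SDE} and $\mu(0)>0$, Theorem~\ref{thm.Vd} gives that $V_d$ is a $\mathcal{C}^2$, concave classical solution of \eqref{eq.HJB_d} with $V_d(0)=0$. It is nonnegative, because $V_d(0)=0$ and, by concavity with $V_d'(x)\geq V_d'(\infty)\geq 0$, it is nondecreasing. (Alternatively, nonnegativity follows since the zero strategy, or any strategy in $\Pi_d$ with $\ell\geq 0$, yields a nonnegative payoff.) Also $\psi(0)=V_d(0)=0=V(0)$ by hypothesis. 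So the only remaining requirement of Lemma~\ref{lemma.super_compV} is that $V_d$ is a classical supersolution of the full HJB equation~\eqref{eq.HJB}. Because $V_d$ already solves \eqref{eq.HJB_d}, the first term of the minimum vanishes, and what must be shown is the gradient constraint $V_d'(x)\leq\beta$ for all $x>0$.

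By concavity of $V_d$, it suffices to show $V_d'(0)\leq\beta$. This is where the hypothesis $V(0)=0$ enters. Suppose instead that $V_d'(0)>\beta$. Then for small $h>0$ we have $V_d(h)>\beta h$. Starting from $x=0$, consider the strategy that injects $h$ immediately at time $0$ and then follows an optimal (or $\varepsilon$-optimal) refraction strategy in $\Pi_d$ started from $h$. This strategy is admissible in $\Pi$: the injection at time $0$ has finite discounted cost, and the time-$0$ jump happens before ruin, so $\int_{[0,\tau_0)}$ includes it. Its performance is $V_d(h)-\beta h>0$, hence $V(0)>0$, a contradiction. Therefore $V_d'(0)\leq\beta$, and thus $V_d'\leq\beta$ on $(0,\infty)$.

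Concluding, $V_d$ is a nonnegative classical supersolution of \eqref{eq.HJB} with $V_d(0)\geq V(0)$, so Lemma~\ref{lemma.super_compV} gives $V\leq V_d$, hence $V=V_d$. The main point to check carefully is the degenerate case $b_d=0$ in Theorem~\ref{thm.Vd}. There $V_d=I_F-I_F(0)\varphi_F$ is still concave and $\mathcal{C}^2$ on $(0,\infty)$, and solves \eqref{eq.HJB_d}, so the same argument applies. We also need to confirm that $V_d'(0+)$ exists as a finite limit, which follows from smoothness up to the boundary. A second technical point is the time-$0$ injection convention ($K_{0-}=0$, $X_{0-}=x$), which makes the composite strategy's value exactly $-\beta h+J(h;\pi)$.
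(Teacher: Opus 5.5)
Your proof is correct and follows the paper's argument: you show $V_d'(0)\le\beta$, use concavity to get $V_d'\le\beta$ on $(0,\infty)$ so that $V_d$ is a nonnegative classical supersolution of \eqref{eq.HJB}, and conclude with Lemma~\ref{lemma.super_compV}. The only cosmetic difference is how you get $V_d'(0)\le\beta$: the paper reads it off directly from $V_d(x)\le V(x)\le V(0)+\beta x$, the immediate-injection inequality of Proposition~\ref{prop:immediate-injection} applied to $V$, whereas you argue by contradiction using an explicit inject-then-refract strategy, which is the same immediate-injection idea.
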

\begin{proof}
It suffices to show that $V_d \geq V$. If $V(0) = 0 = V_d(0)$, then $V_d(x) - V_d(0) \leq V(x) - V(0) \leq \beta x$ for all $x \geq 0$, where the last inequality comes from Proposition~\ref{prop:immediate-injection}. Consequently, we have $V'_d(0) \leq \beta$ and, since by Theorem~\ref{thm.Vd} we have that $V_d$ is concave, then $V_d'(x) \leq \beta$ for all $x\geq 0$. In particular, this means that $V_d$ is also a nonnegative classical supersolution of HJB  equation~\eqref{eq.HJB}. Then, by Lemma~\ref{lemma.super_compV}, we have $V_d \geq V$.
\end{proof}

\begin{proposition}\label{prop:Vc0}
If $V(0)>0$, then $V=V_c$.
\end{proposition}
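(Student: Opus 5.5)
The plan is to prove $V_c(0)\geq V(0)$ whenever $V(0)>0$, and then conclude with the comparison Lemma~\ref{lemma.super_compV} applied to $\psi=V_c$. The second step is immediate once the first is done. By Corollary~\ref{cor:a-star-zero} and the discussion around \eqref{eq.HJB-simple}, $V_c\in\mathcal C^2(0,\infty)$ is a classical solution of \eqref{eq.HJB}. It is concave with $V_c'(0)=\beta$, hence $V_c'\leq\beta$. It is also nondecreasing by Proposition~\ref{prop:immediate-injection}. So if $V_c(0)\geq V(0)>0$, then $V_c\geq V_c(0)>0$ is a nonnegative classical supersolution with $V_c(0)\geq V(0)$. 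Lemma~\ref{lemma.super_compV} then gives $V\leq V_c$, and together with $V\geq V_c$ this yields $V=V_c$.

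To prove $V_c(0)\geq V(0)$, I will use a renewal (restart-at-ruin) construction. Since $V(0)>0$, for any $\varepsilon\in(0,V(0))$ I fix $\pi=(\ell,K)\in\Pi$, started at $X^\pi_{0-}=0$, with $J(0;\pi)\geq V(0)-\varepsilon>0$. Let $\tau=\tau_0^\pi$. The controlled process only jumps upward, through $K$. Hence on $\{\tau<\infty\}$ the path reaches $(-\infty,0)$ continuously, and $X^\pi_\tau=0$. Set $a:=\E_0[e^{-q\tau}]$. I claim $a<1$: if $\tau=0$ a.s., then no dividends are collected before ruin and $J(0;\pi)\leq 0$, a contradiction.

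I then build a strategy $\pi^\infty$ by concatenation. It follows $\pi$ until $\tau$; at $\tau$ (where the state is $0$) it restarts an independent copy of $\pi$ driven by the shifted Brownian motion $W_{\tau+\cdot}-W_\tau$; at the ruin time of that copy it restarts again, and so on. By the strong Markov property, the successive cycle lengths $\tau_1,\tau_2,\dots$ are i.i.d.\ copies of $\tau$. In particular $\E[e^{-q(\tau_1+\dots+\tau_n)}]=a^n\to0$, so the restart times tend to $\infty$ a.s. By construction the state never enters $(-\infty,0)$, so $\pi^\infty\in\Pi_c$ once admissibility is checked. The dividend constraint holds pathwise. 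The discounted injections have expectation bounded by the geometric series $\sum_n a^n\,\E_0\bigl[\int_{[0,\tau]}e^{-qt}\diff K_t\bigr]<\infty$. This uses condition~\eqref{cond-admissibility-on-K}, modulo the ruin-time convention for injections at the instant $\tau$, which needs a small check. Summing over cycles gives
\[
J(0;\pi^\infty)=\sum_{n\geq0}a^n J(0;\pi)=\frac{J(0;\pi)}{1-a}\geq J(0;\pi)\geq V(0)-\varepsilon ,
\]
where the inequality uses $J(0;\pi)>0$. Hence $V_c(0)\geq V(0)-\varepsilon$ for every small $\varepsilon>0$, and so $V_c(0)\geq V(0)$.

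The main obstacle is making this concatenation rigorous. Two points need care. First, $\pi^\infty$ must be a well-defined adapted pair $(\ell,K)$ for which \eqref{SDE_K} has a solution with the claimed law on each cycle; I would handle this either by a functional (measurable) representation of $\pi$ in terms of the driving noise, or by invoking the pasting arguments underlying the DPP (Proposition~\ref{prop:dynamic programming principle}, cf.\ \cite{Bouchard-Touzi_2011}). Second, the restarted copy may be ruined instantaneously with positive probability, which would produce infinitely many restarts at a single instant. To avoid this degenerate case, I would first replace $\pi$ by an $\varepsilon$-optimal strategy that is not ruined immediately, for instance by discarding the event $\{\tau=0\}$, which only lowers the payoff by a nonpositive amount. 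Once these technicalities are settled, the renewal identity for $J(0;\pi^\infty)$ is a direct application of the strong Markov property.
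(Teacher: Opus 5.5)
Your argument is correct in substance, but it differs from the paper's in both steps.

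\textbf{First step.} The paper concatenates only once. It follows $\pi$ up to $\tau=\tau_0^\pi$ and then switches to the optimal refraction--reflection strategy $\pi^{b_c}$ started at $0$, which yields
\[
V_c(x)\geq J(x;\pi)+\E_x\bigl[\Ind_{\{\tau<\infty\}}e^{-q\tau}\bigr]V_c(0),\qquad x\geq 0 .
\]
At $x=0$ this reads $(1-a)V_c(0)\geq J(0;\pi)$ with your $a=\E_0[e^{-q\tau}]$. That is exactly the bound $V_c(0)\geq J(0;\pi)/(1-a)$ your renewal produces; your geometric series is the unrolled form of this fixed-point inequality. It gives $V_c(0)>0$.

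\textbf{Second step.} For general $x$, the paper notes that the second term above is then nonnegative. Hence $V_c\geq J(\cdot;\pi)$ for every $\pi\in\Pi$, and $V_c\geq V$ follows with no appeal to the HJB equation.

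\textbf{What each route costs.} Your version does not need an optimizer in $\Pi_c$ for the first step, but it pays for this in two places.
\begin{enumerate}
\item The infinite self-renewal restarts an \emph{arbitrary} admissible $\pi$ with i.i.d.\ cycles. This requires representing $\pi$ as a functional of $W$, which is not automatic in a general filtration where $\pi$ may use extra randomness. The paper only restarts a strong-solution feedback strategy.
\item Your second step relies on the analytic results of Section~\ref{Sec:HJB} (classical solution, concavity, $V_c'(0)=\beta$) together with Lemma~\ref{lemma.super_compV}.
\end{enumerate}
Since every restart occurs at state $0$, you could instead restart with one fixed optimal or $\varepsilon$-optimal strategy of $\Pi_c$ from $0$. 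This removes your technical worries and collapses your argument into the paper's.

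\textbf{Zero-length cycles.} Your proposed fix is misworded: the event $\{\tau=0\}$ cannot be discarded for free. What works is the following. Since $\{\tau=0\}\in\mathcal F_{0+}=\mathcal F_0$, on this event you may inject a lump sum $h>0$ at time $0$ and do nothing else, losing at most $\beta h$. Alternatively, for $W$-adapted strategies, Blumenthal's $0$--$1$ law already gives $\p_0(\tau=0)=0$ once $J(0;\pi)>0$.
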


\begin{proof}
It suffices to show that $V_c\ge V$. First, we prove that $V_c(0)>0$ whenever $V(0)>0$.

Recall that, by Proposition~\ref{prop:explicit-Vc}, the supremum of the performance function over $\Pi_c$ is attained. More precisely, there exists $b_c>0$ such that, for the refraction-reflection strategy $\pi^{b_c}=(\ell^{b_c},K^{b_c})$, we have $V_c(0)=J(0;\pi^{b_c})$.

Let $\pi=(\ell,K)\in\Pi$ be arbitrary and set $\tau:=\tau_0^\pi$. On the canonical Skorohod space $\mathbb{D}(\R_+,\R)$, define the shifted path, on $\{\tau<\infty\}$, by
\[
(\Theta_\tau\omega)(t):=\omega(\tau(\omega)+t)-\omega(\tau(\omega)), \quad t\ge0.
\]
We define the strategy $\widehat\pi
=(\widehat\ell,\widehat K)$ by 
\[
\widehat \ell_t(\omega)
=
\begin{cases}
\ell_t(\omega), & t<\tau(\omega),\\
\ell^{b_c}_{t-\tau(\omega)}(\Theta_\tau\omega),
& t\ge \tau(\omega),
\end{cases}
\]
and
\[
\widehat K_t(\omega)
=
\begin{cases}
K_t(\omega), & t<\tau(\omega),\\
K_{\tau(\omega)}(\omega)
+
\Big(
K^{b_c}_{t-\tau(\omega)}(\Theta_\tau\omega)
-
K^{b_c}_0(\Theta_\tau\omega)
\Big),
& t\ge \tau(\omega).
\end{cases}
\]
We easily notice that $\widehat\pi$ belongs to $\Pi_c$.
Using the shift property  of the canonical process, we obtain
\begin{align*}
J(x;\widehat\pi)
 &=
\mathbb E_x\left[
\int_0^\tau e^{-qt}\ell_t\,\diff t
-\beta\int_{[0,\tau)} e^{-qt}\,\diff K_t
\right] + \e_x\SBRA{\e_x\SBRA{\int_{\tau}^{\infty} e^{-qt} \widehat\ell_t \diff t -  \beta \int_{[\tau,\infty)} e^{-qt} \diff \widehat K_t \Big| \mathcal{F}_{\tau}} 
}
 \\ &= \mathbb E_x\left[
\int_0^\tau e^{-qt}\ell_t\,\diff t
-\beta\int_{[0,\tau)} e^{-qt}\,\diff K_t
\right] + 
\mathbb E_x\left[
\ind_{\{\tau<\infty\}}e^{-q\tau}
J(0;\pi^{b_c})
\right] \\
&=
J(x;\pi)
+
\mathbb E_x\left[
\ind_{\{\tau<\infty\}}e^{-q\tau}
\right]V_c(0).
\end{align*}
Since $\widehat\pi\in\Pi_c$, it follows that $
V_c(x)\ge J(x;\widehat\pi)$,
and therefore
\begin{equation}\label{eq:comparison Vc and J}
V_c(x) \ge J(x;\pi) + \mathbb E_x \left[ \ind_{\{\tau<\infty\}}e^{-q\tau} \right] V_c(0) .
\end{equation}
Finally, if $x=0$, then we can write
\begin{equation}\label{eq:comparison Vc and J at zero}
\PAR{1 - \mathbb E_0 \left[ \ind_{\{\tau<\infty\}} e^{-q\tau} \right]} V_c(0) \ge J(0;\pi) .
\end{equation}
On the other hand, since we assume $V(0)>0$, there exists $\pi\in\Pi$ such that $J(0;\pi)>0$. Therefore, for this strategy, since we have $\p_0(\tau > 0)>0$, then
$\mathbb E_0 \left[ \ind_{\{\tau<\infty\}} e^{-q\tau} \right] < 1$. By the inequality in~\eqref{eq:comparison Vc and J at zero}, we deduce that $V_c(0)>0$.

Second, using~\eqref{eq:comparison Vc and J} with an arbitrary strategy $\pi\in\Pi$ and using the fact that $V_c(0)>0$, we have, for any $x \in [0,\infty)$, $V_c(x)\ge J(x;\pi)$. Taking the supremum over $\Pi$ yields the result.
\end{proof}

The final steps of the proof of Theorem~\ref{thm.dichotomy} are based on the following comparison principle for classical solutions.

\begin{theorem}[Comparison Principle]\label{th:comp}
Let $\phi, \psi \in \mathcal W$ be such that $\phi$ (resp.\ $\psi$) is a classical subsolution (resp.\ supersolution) of the HJB equation~\eqref{eq.HJB_d}. If $\phi(0) \leq \psi(0)$ or if $\phi'(0) \geq \psi'(0)$, then $\phi \leq \psi$ on $(0,\infty)$.
\end{theorem}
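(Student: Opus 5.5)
The plan is a maximum-principle argument applied to $w:=\phi-\psi$ on $[0,\infty)$, after penalizing by a suitable barrier function so that the supremum is attained. Suppose, for contradiction, that $\sup_{x>0} w(x)>0$. Since $\phi,\psi\in\mathcal W$, $w$ has at most linear growth. For $\varepsilon>0$ set $w_\varepsilon:=w-\varepsilon g$, where $g$ grows slightly superlinearly. Then $w_\varepsilon\to-\infty$ at infinity, $w_\varepsilon$ attains its maximum on $[0,\infty)$, and this maximum is positive for $\varepsilon$ small.

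The barrier will be
\[
g(x)=K+(1+x)^{1+\delta}-cx ,
\]
with $\delta>0$ small, $c\geq 0$ to be chosen according to the boundary case, and $K$ large. The key requirement is $(\mathcal L_\ell-q)[g](x)<0$ for all $x\geq0$ and all $0\leq\ell\leq F(x)$. The term $-\ell g'$ is harmless as soon as $g'\geq 0$; when $c>0$ the part $+\ell c$ is at most linear in $x$, since $F$ is concave and hence $F(x)\leq F(0)+F'(0)x$. By concavity of $\mu$ we have $\mu(x)\leq\mu(0)+\mu'(0)x$ with $\mu'(0)<q$. Since $\sigma$ is Lipschitz, $\sigma^2(x)\leq C(1+x)^2$, so the diffusion contribution is at most $\tfrac{C}{2}\delta(1+\delta)(1+x)^{1+\delta}$. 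Choosing $\delta$ so small that $(1+\delta)\mu'(0)+\tfrac{C}{2}\delta(1+\delta)<q$ makes the coefficient of $(1+x)^{1+\delta}$ strictly negative. All remaining terms are $O(1+x)$ and are absorbed for large $x$, while the constant $-qK$ handles bounded $x$. I expect this step, building a strict supersolution barrier when $\sigma$ may grow linearly, to be the main technical obstacle, since it is where Assumption~\ref{assumption:SDE} (concavity of $\mu$, $\mu'<q$, Lipschitz $\sigma$) is genuinely used.

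Interior maximum. Let $x_0>0$ be a maximum point of $w_\varepsilon$. Then $w_\varepsilon'(x_0)=0$ and $w_\varepsilon''(x_0)\leq0$. Subtracting the supersolution inequality for $\psi$ from the subsolution inequality for $\phi$, and using $\sup_\ell a_\ell-\sup_\ell b_\ell\leq\sup_\ell(a_\ell-b_\ell)$, we get
\[
q\,w(x_0)\leq \sup_{0\le\ell\le F(x_0)}\Big(\tfrac{\sigma^2(x_0)}{2}w''(x_0)+(\mu(x_0)-\ell)w'(x_0)\Big).
\]
Substituting $w=w_\varepsilon+\varepsilon g$ gives
\[
q\,w_\varepsilon(x_0)\leq \tfrac{\sigma^2(x_0)}{2}w_\varepsilon''(x_0)+\varepsilon\sup_{\ell}(\mathcal L_\ell-q)[g](x_0)+\varepsilon\sup_\ell \ell\,(1-\text{const})\cdots .
\]
More precisely, the $\ell$-terms combine into $\varepsilon(\mathcal L_\ell-q)[g](x_0)$ up to the affine term $\ell$, which cancels between $\phi$ and $\psi$. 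Hence $q\,w_\varepsilon(x_0)<0$, contradicting the positivity of the maximum.

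Boundary maximum at $0$, treated separately for the two hypotheses.
\begin{itemize}
\item If $\phi(0)\leq\psi(0)$, take $c=0$. Then $w_\varepsilon(0)\leq-\varepsilon K<0$, so the positive maximum cannot sit at $0$.
\item If $\phi'(0)\geq\psi'(0)$, take $c=2(1+\delta)$, so that $g'(0)=(1+\delta)-c<0$ (and re-choose $K$ large). Then $w_\varepsilon'(0+)=w'(0)-\varepsilon g'(0)>0$, so $w_\varepsilon$ strictly increases near $0$ and $0$ cannot be a maximum point.
\end{itemize}
In both cases the maximum is interior, which gives the contradiction above. Hence $w\leq\varepsilon g$ for every sufficiently small $\varepsilon$, and letting $\varepsilon\downarrow0$ yields $\phi\leq\psi$ on $(0,\infty)$.
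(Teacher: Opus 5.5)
Your argument is essentially the paper's. In both, you penalize $\phi-\psi$ by a superlinearly growing strict supersolution so that a positive maximum is attained, rule out $x=0$ using the boundary hypothesis, and reach a contradiction from the first- and second-order conditions at an interior maximum. The paper uses the convex combination $\phi-\lambda\psi-(1-\lambda)\psi_2$, with $\psi_2=A_0+A_1x+(x+x_0)^\alpha$ a strict supersolution of the nonlinear equation whose constants are tied to $\phi(0)$ and $\phi'(0)$. You instead subtract $\varepsilon g$, with $g$ a strict supersolution of the linear operators $\tfrac{\sigma^2}{2}\partial_{xx}+(\mu-\ell)\partial_x-q$ uniformly in $\ell\in[0,F(x)]$, and use $\sup-\sup\le\sup(\cdot-\cdot)$. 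Your version is slightly cleaner: the $+\ell$ term cancels, there is no case split on $\phi'(\hat x)<1$, and $g$ does not depend on $\phi$.

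One estimate in your barrier step is wrong as written, in the case $c>0$. There $\mu g'=(1+\delta)\mu(x)(1+x)^{\delta}-c\mu(x)$, and the piece $-c\mu(x)$ is \emph{not} $O(1+x)$. Concavity bounds $\mu$ only from above, so $-c\mu$ can grow superlinearly. For example, $\mu(x)=1-x^2$ satisfies Assumption~\ref{assumption:SDE} and gives $-c\mu(x)\sim cx^2$, which dominates $(1+x)^{1+\delta}$. Put differently, you may multiply $\mu(x)\le\mu(0)+\mu'(0)x$ by $g'(x)$ only where $g'(x)\ge 0$.

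The repair is easy: do not split $\mu g'$.
\begin{itemize}
\item For $x\ge x^*:=(c/(1+\delta))^{1/\delta}-1$ you have $g'\ge 0$. Hence $(\mu(x)-\ell)g'(x)\le(\mu(0)+\mu'(0)x)g'(x)$. The superlinear part of this bound has coefficient $(1+\delta)\mu'(0)$, and the remainder is $O(1+x)$.
\item On the compact interval $[0,x^*]$, every term is continuous, hence bounded, and is absorbed by $-qK$.
\end{itemize}
With this correction the rest of your proof goes through.
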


Such comparison results are standard in the theory of viscosity solutions (cf.\ \cite{Crandall1992}) and the classical theory of differential equations (cf.\ \cite{Protter1984}). They are generally obtained from maximum principles. For completeness, we provide a proof of Theorem~\ref{th:comp} in Appendix~\ref{sec:comparisonC2}. 

As the Comparison Principle in Theorem~\ref{th:comp} is for classical solutions, it can be applied to both value functions $V_d$ and $V_c$, since we now know they are both classical solutions of HJB equation~\eqref{eq.HJB_d}. In this direction, recall that
\[
V(0) \geq \max\BRA{V_d(0),V_c(0)}=\max\BRA{0,V_c(0)} .
\]

The rest of the proof of Theorem~\ref{thm.dichotomy} follows the numbering of the statement.
\begin{enumerate}
\item If $V_c(0) > 0$, then $V(0)\geq \max\{0, V_c(0)\} > 0$ and, by Proposition \ref{prop:Vc0}, we have $V=V_c$.
\item If $V_c(0) \leq 0=V_d(0)$, then, by the Comparison Principle in Theorem~\ref{th:comp}, we have $V_c\leq V_d$. Consequently, by the Comparison Principle, we must have $V_d'(0) \leq \beta=V_c'(0)$. Indeed, if we had $V_d'(0)>\beta=V_c'(0)$, then the Comparison Principle would imply that $V_d \leq V_c$. To conclude, let us show that $V(0)=0$, which by Proposition~\ref{prop:Vc00}, will give us $V=V_d$, as expected. To do so, let us proceed by contradiction and assume that $V(0)>0$. In this case, by Proposition~\ref{prop:Vc0}, we have $V=V_c$ and in particular $V(0)=V_c(0) \leq 0$ (where this last inequality is our initial assumption), which is a contradiction.
\item Third, if $V_d'(0) \leq \beta = V_c'(0)$, then, by the Comparison Principle, we have $V_d \geq V_c$ and in particular $V_d(0)=0 \geq V_c(0)$. Consequently, by (2), we have $V=V_d$.
\item Finally, if $V_d'(0) > \beta = V_c'(0)$, then, by the Comparison Principle, we have $V_c \geq V_d$ and in particular $V_c(0) \geq 0=V_d(0)$. If $V_c(0)>0$, then by (1) we have $V=V_c$. If $V_c(0)=0$, then by (2) we have $V=V_d$.
\end{enumerate}

\appendix
\section{Proof of Proposition~\ref{prop.Vcprime}}\label{Appendix:proof:Vcprime}

From Proposition~\ref{prop:immediate-injection}, we know that, for $x \geq 0$,
\[
\limsup_{h \to 0+} \frac{V_c(x+h)-V_c(x)}{h} \leq \beta.
\]
We argue by contradiction and assume that
\[
\liminf_{h \to 0+} \frac{V_c(h)-V_c(0)}{h} < \beta .
\]
In this case, there exists $0<\eta<\beta$ and a positive sequence $h_n  \to 0$ such that
\[
V_c(h_n ) \leq V_c(0) + (\beta - \eta) h_n  \; , \quad \text{for each $n \geq 1$.}
\]
Then, using Proposition~\ref{prop:immediate-injection}, we have that, for $x \geq h_n $,
\begin{equation}\label{eq:contradiction-inequality}
V_c(x) \leq V_c(0) - \eta h_n  + \beta x .
\end{equation}

In the rest of the proof, we will use the following notation: for an integer $n \geq 1$ and an arbitrary $\pi =(\ell,K) \in \Pi_c$, set $\tau^\pi_n := \inf\{t \geq 0 \colon X^\pi_t \geq h_n \}$ and set
\[
\mathcal I^\pi := \int_{[0,\tau^\pi_n]} e^{-q t}(\ell_t \diff t - \beta \diff K_t) .
\]
Note that $\mathcal{I}^\pi$ is well defined on $\BRA{\tau_n^\pi=\infty}$ by the admissibility Condition~\eqref{cond-admissibility-on-K}.

By the Dynamic Programming Principle in~\eqref{DPP}, we can write
\begin{align}
V_c(0) &= \sup_{\pi = (\ell,K) \in \Pi_c} \E_0 \SBRA{\mathcal I^\pi + e^{- q \tau^\pi_n}  V_c(X^\pi_{\tau^\pi_n})\ind_{\tau_n^\pi<\infty}} \notag\\
&\leq \sup_{\pi = (\ell,K) \in \Pi_c} \E_0 \SBRA{\mathcal I^\pi + e^{- q \tau^\pi_n } \left( V_c(0) - \eta h_n  + \beta X^\pi_{\tau^\pi_n} \right)\Ind_{\tau_n^\pi<\infty} }\notag\\
&= V_c(0) - \eta h_n  + \sup_{\pi = (\ell,K) \in \Pi_c} \E_0 \SBRA{\mathcal I^\pi + \left(1-e^{- q \tau^\pi_n }\right) \left( \eta h_n  - V_c(0) \right) + \beta e^{- q \tau^\pi_n } X^\pi_{\tau^\pi_n}\Ind_{\tau_n^\pi<\infty}} .
\label{eq:V_c(0)-1}
\end{align}
where, in the inequality, we used~\eqref{eq:contradiction-inequality}.

Let $m\geq 1$ and denote by $\tau^{\pi,m}_n=\tau^\pi_n\wedge m$ and $\mathcal{I}^{\pi,m}$ the integral $\mathcal{I}^\pi$ on $[0, \tau^{\pi,m}_n]$. Using Itô's Formula applied to $e^{- q \tau^{\pi,m}_n } X^\pi_{\tau^{\pi,m}_n}$ and taking expectations, we can further write
 \begin{align*}
& \E_0 \SBRA{\mathcal I^{\pi,m} + \left(1-e^{- q \tau^{\pi,m}_n }\right) \left( \eta h_n  - V_c(0) \right) + \beta e^{- q \tau^\pi_n } X^\pi_{\tau^{\pi}_n}\Ind_{\tau^\pi_n\leq m}}
 \\
 &\leq \E_0 \SBRA{\mathcal I^{\pi,m} + \left(1-e^{- q \tau^{\pi,m}_n }\right) \left( \eta h_n  - V_c(0) \right) + \beta e^{- q \tau^{\pi,m}_n } X^\pi_{\tau^{\pi,m}_n}}\\
 &=\E_0 \left[ (1-\beta) \int_0^{\tau^{\pi,m}_n} e^{-q t} \ell_t \diff t + \left(1-e^{- q \tau^{\pi,m}_n }\right) \left( \eta h_n  - V_c(0) \right)+\beta \int_0^{\tau^{\pi,m}_n} e^{-q t} \left( \mu(X^\pi_t) - q X^\pi_t \right) \diff t \right]\\
 &\leq  \E_0 \SBRA{\left(1-e^{- q \tau^{\pi,m}_n  }\right) \left( \eta h_n  - V_c(0) \right) + \beta \mu(0) \int_0^{\tau^{\pi,m}_n } e^{-q t} \diff t}\\
 &= q \PAR{\eta h_n  - V_c(0) + \beta\frac{\mu(0)}{q}}\E_0 \SBRA{ \int_0^{\tau^{\pi,m}_n } e^{-q t} \diff t},
 \end{align*}
 where, in the second inequality, we used the fact that $\ell_t \geq 0$ and $\beta > 1$, and we used both elements of $(1)$ in Assumption~\ref{assumption:SDE}. Taking the limit when $m\to \infty$ on both sides of the inequality via the monotone and the dominated convergence theorems, and combining the resulting inequality with \eqref{eq:V_c(0)-1}, we obtain
\begin{align}\label{eq:V_c(0)-2}
V_c(0)
\leq V_c(0) - \eta h_n  + q \left(\eta h_n  - V_c(0) + \beta \frac{\mu(0)}{q} \right) \sup_{\pi = (\ell,K) \in \Pi_c} \E_0 \SBRA{\int_0^{\tau^\pi_n} e^{-q t} \diff t}.
\end{align}

If $V_c(0) - \beta \frac{\mu(0)}{q} > 0$, then there exists an integer $N \geq 1$ such that, for any $n\geq N$,
\[
\eta h_n  - V_c(0) + \beta \frac{\mu(0)}{q} < 0
\]
and thus $V_c(0) \leq V_c(0) - \eta h_n $, which is a contradiction.

Therefore, for the rest of the proof, we assume $V_c(0) - \beta \frac{\mu(0)}{q} \leq 0$. Under this assumption, defining for each $n\geq 1$ the positive constant $A_n = q \left(\eta h_n  - V_c(0) + \beta \frac{\mu(0)}{q} \right)$, we can write by~\eqref{eq:V_c(0)-2}, for each $n \geq 1$,
\begin{equation}\label{eq:V_c(0)-3}
V_c(0) \leq V_c(0) - \eta h_n  + A_n \sup_{\pi = (\ell,K) \in \Pi_c} \E_0 \SBRA{\int_0^{\tau^\pi_n} e^{-q t} \diff t} .
\end{equation}

Now, for $n\geq 0$, define $\varphi(x) = x^2 - h_n^2$. Note that $\varphi$ is nonpositive, increasing and bounded on $[0,h_n]$. For a strategy $(\ell,K) \in \Pi_c$, recall that $\tau^\pi_n = \inf\{t \geq 0 \colon X^\pi_t \geq h_n\}$. By It\^o's Formula and then by taking expectations (replacing $\tau^\pi_n$ with $\tau^\pi_n\wedge m$ and then taking the limit when $m\to \infty$ as above), we obtain
\begin{align*}
0 &\geq \E_0 \SBRA{ e^{-q \tau^\pi_n} \varphi \left( X^\pi_{\tau^\pi_n -} \right)} \\
&= \varphi(0) + \E_0 \SBRA{ \int_0^{\tau^\pi_n} e^{-q t} \left( \sigma^2(X^\pi_t) + 2 X^\pi_t \mu(X^\pi_t) - q \varphi \left( X^\pi_t \right) \right) \diff t} \\
& \qquad \qquad + \E_0 \SBRA{ \int_0^{\tau^\pi_n} e^{-q t} 2 X^\pi_t \left( \diff K^c_t - \ell_t \diff t \right)} + \E_0 \SBRA{\sum_{0 \leq t < \tau^\pi_n} e^{-qt} \PAR{\varphi(X_t)-\varphi(X_{t-})} \Ind_{\Delta K_t >0}} \\
&\geq -h_n^2 + \E_0 \SBRA{ \int_0^{\tau^\pi_n} e^{-q t} \left( \sigma^2(X^\pi_t) + 2 X^\pi_t \mu(X^\pi_t) - q \varphi \left( X^\pi_t \right) \right) \diff t} - \E_0 \SBRA{ \int_0^{\tau^\pi_n} e^{-q t} 2 X^\pi_t \ell_t \diff t} \\
&\geq -h_n^2 + \E_0 \SBRA{ \int_0^{\tau^\pi_n} e^{-q t} \left( \underline{\sigma}^2 - 2 h_n F(h_n) + 2 X^\pi_t \mu(X^\pi_t) - q \varphi \left( X^\pi_t \right) \right) \diff t} ,
\end{align*}
in which we used the fact that the jumps are nonnegative and the fact that, by the properties of $F$ (together with the admissibility of a strategy), we can write
\[
\E_0 \SBRA{ \int_0^{\tau^\pi_n} e^{-q t} X^\pi_t \ell_t \diff t} \leq h_n F(h_n) \E_0 \SBRA{ \int_0^{\tau^\pi_n} e^{-q t} \diff t} .
\]
 For $t \in [0,\tau^\pi_n)$, we can write
\[
\underline{\sigma}^2 - 2 h_n F(h_n) + 2 X^\pi_t \mu(X^\pi_t) - q \varphi \left( X^\pi_t \right) \geq \underline{\sigma}^2+a(n),
\]
with $a(n):=-2 h_n F(h_n) + 2\min_{x\in[0,h_n]} x \mu(x)$.
Finally, we have obtained
\[
h_n^2 \geq \left( \underline{\sigma}^2 + a(n) \right) \E_0 \SBRA{ \int_0^{\tau^\pi_n} e^{-q t} \diff t} ,
\]
with $\lim_{n \to 0} a(n) = 0$. Therefore, there exists $n_0 \geq 1$ (independent of $\pi$) such that, for any $n \geq n_0$, we have $B_n := \underline{\sigma}^2 +a(n)>0$ and hence
\[
\frac{h_n^2}{B_n} \geq \E_0 \SBRA{ \int_0^{\tau^\pi_n} e^{-q t} \diff t} .
\]
Using this last result into \eqref{eq:V_c(0)-3}, we  have, for $n$ sufficiently large,
\begin{align*}
V_c(0) &\leq V_c(0) - \eta h_n  + \frac{A_n}{B_n} h_n ^2 = V_c(0) - h_n  \left( \eta - \frac{A_n}{B_n} h_n  \right) ,
\end{align*}
with $h_n  \left( \eta - \frac{A_n}{B_n} h_n  \right) > 0$. This is a contradiction.

\section{Proof of Lemma~\ref{lemma.super_comp}}\label{App.super_comp}

Fix $x>0$ and fix an arbitrary strategy $\pi = (\ell,K) \in \Pi_c$. For simplicity, in this proof, we write $X=X^\pi$. Define $\tau = \inf\{t \geq 0 \colon X_t = 0\}$. For an integer $n \geq 1$, using Itô's Formula we have
\begin{multline*}
e^{-q (\tau \wedge n)} \psi \PAR{X_{\tau \wedge n}} = \psi \PAR{X_{0 -}} + \int_{0}^{\tau \wedge n} e^{-qt} \PAR{\frac{\sigma^2(X_t)}{2} \psi^{\prime \prime}(X_t) + (\mu(X_t)-\ell_t) \psi^\prime (X_t) - q\psi(X_t)} \diff t \\
+ \int_{0}^{\tau \wedge n} e^{-qt} \sigma(X_t) \psi^\prime (X_t) \diff W_t + \int_{0}^{\tau \wedge n} e^{-qt} \psi^\prime (X_t) \diff K^c_t \\
+ \sum_{0 \leq t \leq \tau \wedge n} e^{-qt} \PAR{\psi(X_t)-\psi(X_{t-})} \Ind_{\Delta K_t >0} .
\end{multline*}
Adding/removing $\int_{0}^{\tau \wedge n} e^{-qt} \ell_t \diff t$ and then using the two inequalities given by~\eqref{eq:HJB-supersol} and the fact that $\psi$ is nonnegative, we can further write
\begin{equation*}
e^{-q \tau} \psi \PAR{X_\tau} \Ind_{\tau<n} \leq  \psi \PAR{X_{0 -}} + \int_{0}^{\tau \wedge n} e^{-qt} \sigma(X_t) \psi^\prime (X_t) \diff W_t - \int_{0}^{\tau \wedge n } e^{-qt} \ell_t \diff t + \beta \int_{[0,\tau \wedge n ]} e^{-qt} \diff K_t .
\end{equation*}
Taking expectations with $X_{0-}=x$  and re-arranging, we get
\begin{align*}
\psi(x) &\geq \E_x\SBRA{ e^{-q \tau} \psi \PAR{X_\tau} \Ind_{\tau< n}} + \E_x\SBRA{ e^{-q n} \psi \PAR{X_n} \Ind_{n < \tau}} \\
& \qquad + \E_x\SBRA{ \int_{0}^{\tau \wedge n } e^{-qt} \ell_t \diff t} - \beta \E_x \SBRA{\int_{[0,\tau \wedge n ]} e^{-qt} \diff K_t} \\
&\geq \E_x\SBRA{ e^{-q \tau} \psi \PAR{0} \Ind_{\tau<n }} + \E_x\SBRA{ e^{-q n} \psi \PAR{X_n} \Ind_{n < \tau}} \\
& \qquad + \E_x\SBRA{ \int_{0}^{\tau \wedge n } e^{-qt} \ell_t \diff t} - \beta \E_x \SBRA{\int_{[0,\tau \wedge n ]} e^{-qt} \diff K_t} .
\end{align*}
Since $X_\tau=0$ and $\psi(0) \geq V_c(0)$, taking limits when $n \to \infty$ (using the monotone convergence theorem for each expectations, thanks to admissibility), we finally get
\[
\psi(x) \geq \E_x\SBRA{ e^{-q \tau} V_c \PAR{0} \Ind_{\tau<\infty}} + \E_x\SBRA{ \int_{0}^{\tau} e^{-qt} \ell_t \diff t - \beta \int_{[0,\tau]} e^{-qt} \diff K_t} .
\]
Finally, since $X_\tau \equiv X^\pi_\tau=0$ and since the strategy $\pi \in \Pi_c$ is arbitrary, we have obtained
\[
\psi(x) \geq \sup_{\pi=(\ell,K) \in \admissible_c} \E_x\SBRA{ \int_{0}^{\tau} e^{-qt} \ell_t \diff t - \beta \int_{[0,\tau]} e^{-qt} \diff K_t + e^{-q \tau} V_c \PAR{X^\pi_\tau} \Ind_{\tau<\infty}} .
\]
The result follows from the Dynamic Programming Principle in~\eqref{DPP}.

\section{Proof of Lemma~\ref{lemma.sub_comp}}\label{App.sub_comp}

As $\phi(x)=A+Bx$ is a subsolution on $(a,b)$ of~\eqref{eq.HJB} with $0<B<\beta$, we have, for all $x\in(a,b)$,
    \[
    q\phi(x)\leq B\mu(x)+(1-B)F(x)\Ind_{B<1}.
    \]

Take $\pi=(\ell,K) \in \Pi_c$ with $\ell_t=F(X_t^\pi)\ind_{B<1}$ and $K_t = 0$ for $t \in [0,\tau]$, where $\tau=\inf\BRA{t\geq 0: X^\pi_t\notin (a,b)}$. Note that, on $\BRA{\tau<\infty}$, if $b<\infty$ then $X^\pi_{\tau}\in\BRA{a,b}$, while if $b=\infty$ then $X^\pi_{\tau}=a$. Then, by assumption, we have $\phi(X^\pi_{\tau})\leq V_c(X^\pi_\tau)$ on $\BRA{\tau<\infty}$.

Using Itô's Formula, taking expectations with $X_{0-}=x \in (a,b)$ and using the above facts, we can write, for $n\geq 1$,  
\begin{align}
        \phi(x) & = \mathbb{E}_x \SBRA{ e^{-q(\tau \wedge n)} \phi(X^\pi_{\tau \wedge n}) + \int_0^{\tau \wedge n} e^{-q t} \left(q \phi(X^\pi_t)-\mu(X^\pi_t)B  + F(X^\pi_t) B \ind_{B< 1} \right) \diff t } \notag\\
        & \leq \mathbb{E}_x \SBRA{ e^{-q\tau} \phi(X^\pi_{\tau}) \ind_{\{\tau \leq n\}} + e^{-q n} \phi(X^\pi_n) \ind_{\{\tau > n\}}  + \int_0^{\tau \wedge n} e^{-q t} F(X^\pi_t) \ind_{B< 1} \diff  t } \notag\\
        & \leq \mathbb{E}_x\SBRA{ e^{-q\tau} V_c(X^\pi_{\tau}) \ind_{\{\tau \leq n\}}   + \int_0^{\tau} e^{-q t} \ell_t  \diff  t } + e^{-q n} \mathbb{E}_x\SBRA{\phi(X^\pi_n) \ind_{\{\tau > n\}}} .
   \label{eq:phi-subsolution}
    \end{align}
We want to show that the result follows by taking the limit in~\eqref{eq:phi-subsolution} and then invoking the Dynamic Programming Principle.

First, by the monotone convergence theorem, we have
\[
\lim_{n\to \infty} \mathbb{E}_x \SBRA{e^{-q\tau} V_c(X^\pi_{\tau}) \ind_{\{\tau \leq n\}}} = \mathbb{E}_x \SBRA{e^{-q\tau} V_c(X^\pi_{\tau}) \ind_{\{\tau <\infty\}}} .
\]
Consequently, all is left to show is that $\displaystyle{\lim_{n\to \infty} e^{-q n} \mathbb{E}_x \SBRA{ \phi(X^\pi_n) \ind_{\{\tau > n\}}}=0}$. However, since $\phi$ is an affine function, this follows from the transversality property established in Lemma~A.1 of \cite{guerin-et-al_2024}. Although the proof in \cite{guerin-et-al_2024} uses the assumption $\mu(0)>0$, removing this assumption only affects the value of a constant, so the conclusion remains valid.

\section{Proof of Theorem~\ref{th:comp}}\label{sec:comparisonC2}

Let $\phi, \psi \in \mathcal W$ be such that $\phi$ (resp. $\psi$) is a subsolution (resp.\ supersolution) of the HJB equation~\eqref{eq.HJB_d}, with either $\phi(0) \leq \psi(0)$ or $\phi'(0) \geq \psi'(0)$.

Recall that $\mu'(0) < q$. Let $q_0>0$ and $\alpha_0>1$ such that $\alpha_0 \mu'(0) < q_0 < q.$ Since $\mu$ is concave, there exists $x_0>0$ such that $\alpha_0 \frac{\mu(x)}{x+x_0}<q_0$ for any $x\geq 0$. Furthermore, $$\varsigma : = \sup_{x\geq 0} \frac{\sigma^2(x) }{(x+x_0)^2} < \infty,$$ by assumption. Let $1 < \alpha < \min(2,1+\frac{q-q_0}{\varsigma},\alpha_0)$ and introduce $\psi_2(x) = A_0 + A_1 x + (x+x_0)^\alpha,$ for some $A_0 > \phi(0)$ and $A_1 < \phi'(0)-\alpha x_0^{\alpha-1}.$ 

It is straightforward, though somewhat technical, to check that there exists $A_0$ large enough such that $\psi_2$ is a strict supersolution of \eqref{eq.HJB_d}.

If we let $\lambda \in (0,1),$ and define $U = \phi -  \lambda \psi - (1-\lambda) \psi_2$. To prove that $U \leq 0$, we construct a contradiction by assuming that $$\sup_{x\geq 0} U(x) > 0.$$ Since  $\phi \in \mathcal W$ and $\alpha>1$, we know that $\lim_{x\to \infty} U(x) = -\infty$, which means that the supremum is attained at some $\hat x \geq 0.$ However, since $\psi_2(0) > \phi(0)$ and $\psi'_2(0) < \phi'(0),$ we readily see that $U(0) < 0$ or $U'(0) > 0$, depending on whether $\phi(0) \leq \psi(0)$ or $\phi'(0) \geq \psi'(0)$. In either case, we deduce  $\hat x > 0.$

Because the supremum of $U$ is attained at $\hat x,$ 
it must be the case that $U'(\hat x) = 0$ and 
$U''(\hat x) \leq 0.$ According to the sub/supersolution properties of $\phi,\psi $ and $\psi_2$, it follows that
\begin{multline}\label{eq:ineq U}
q U(\hat x) - \frac{1}{2} \sigma(\hat x)^2 U''(\hat x)- F(\hat x)\Big( (1-\phi'(\hat x)) \ind_{\phi'(\hat x)<1} -\lambda (1-\psi'(\hat x))\ind_{\psi'(\hat x)<1} \\
-(1-\lambda)(1-\psi_2'(\hat x))\ind_{\psi_2'(\hat x)<1}\Big) <0.
\end{multline}
When $\phi'(\hat x)\geq 1$, we easily deduce that $U(\hat x)<0$. When $\phi'(\hat x)<1$, at least $\psi'(\hat x)<1$ or $\psi_2'(\hat x)<1$ since $\phi'(\hat x)=\lambda\psi'(\hat x)+(1-\lambda)\psi_2'(\hat x)$ (recall that $U'(\hat x)=0$). Consequently, we also deduce $U(\hat x)<0$ from \eqref{eq:ineq U}.
We then have a contradiction. Since $\lambda$ is arbitrary, we conclude that $\phi \leq \psi$ on $[0,\infty).$

\bibliographystyle{alpha}
\bibliography{references.bib}
\end{document}